\newif\ifconference
\newif\ifanonymous
\conferencetrue
\anonymousfalse

\documentclass[11pt,letterpaper,final]{article}

\title{Heaps and Their Working Sets}

\author{
	\ifanonymous
	Anonymous Authors
	\else
	Bernhard Haeupler\thanks{INSAIT, Sofia University ``St.~Kliment Ohridski'' \& ETH Zurich, \texttt{bernhard.haeupler@insait.ai}}
	\and
	Richard Hladík\thanks{ETH Zurich, \texttt{rihl@rihl.cz}}
	\and
	Václav Rozhoň\thanks{Charles University, \texttt{vaclavrozhon@gmail.com}}
	\and
	Robert E.~Tarjan\thanks{Princeton University, \texttt{ret@cs.princeton.edu}}
	\fi
}

\ifanonymous

\else

\fi

\date{}

\usepackage[margin=1in]{geometry}
\usepackage[dvipsnames,svgnames]{xcolor}
\usepackage{algorithm}
\usepackage{algpseudocode}
\usepackage{xspace}
\usepackage[textwidth=20mm,obeyFinal,colorinlistoftodos]{todonotes}
\usepackage{amsmath}        
\usepackage{amsfonts}       
\usepackage{amsthm}         
\usepackage{thmtools}
\usepackage{thm-restate}
\usepackage{amssymb}
\usepackage{bbding}         
\usepackage{bm}             
\usepackage{dsfont}
\usepackage{textcomp}
\usepackage{graphicx}       
\usepackage{paralist}
\usepackage{mathtools}
\usepackage{multirow}
\usepackage{booktabs}
\usepackage{adjustbox}
\newcommand{\mytodo}[2]{\todo[size=\tiny, color=#1!50!white]{#2}\xspace}

\newcommand{\vasek}[1]{\mytodo{purple}{Vasek: #1}}

\setuptodonotes{size=\tiny}
\usepackage[backend=biber,style=alphabetic,sorting=nty,maxnames=10]{biblatex}         
\usepackage{hyperref}
\hypersetup{
    unicode=false,          
    colorlinks=true,        
    linkcolor=red,          
    citecolor=DarkGreen,        
    filecolor=magenta,      
    urlcolor=cyan           
}
\usepackage{enumitem}
\usepackage{graphicx} 
\usepackage{appendix}
\usepackage[capitalize, nameinlink,noabbrev]{cleveref}
\usepackage[left,mathlines]{lineno}

\usepackage{setspace}
\usetikzlibrary{arrows.meta}

\renewcommand\O{\mathcal{O}}

\theoremstyle{plain}
\newtheorem{theorem}{Theorem}[section]
\newtheorem{lemma}[theorem]{Lemma}
\newtheorem{meta-theorem}[theorem]{Meta-Theorem}
\newtheorem{claim}[theorem]{Claim}

\theoremstyle{definition}
\newtheorem{definition}[theorem]{Definition}

\theoremstyle{remark}
\newtheorem{remark}[theorem]{Remark}

\crefname{theorem}{Theorem}{Theorems}
\crefname{proposition}{Proposition}{Propositions}
\crefname{observation}{Observation}{Observations}
\crefname{lemma}{Lemma}{Lemmas}
\crefname{claim}{Claim}{Claims}
\crefname{problem}{Problem}{Problems}
\crefname{conjecture}{Conjecture}{Conjectures}
\crefname{question}{Question}{Questions}
\crefname{example}{Example}{Examples}
\crefname{fact}{Fact}{Facts}    
\crefname{invariant}{Invariant}{Invariants}
\crefname{rule}{Rule}{Rules}

\let\cref=\Cref
\let\citet\textcite

\newcommand\N{\mathbb{N}}

\DeclareMathOperator*{\argmax}{arg\,max}

\graphicspath{{fig/}}

\newcommand{\OO}{\mathrm{O}}
\let\O\OO

\def\defop#1#2{\expandafter\def\csname#1\endcsname{\textsc{#1}}}
\defop{Insert}{insert}
\defop{DecreaseKey}{decrease-key}
\defop{IncreaseKey}{increase-key}
\defop{FindMin}{find-min}
\defop{ExtractMin}{delete-min}
\defop{Delete}{delete}
\defop{PushLeft}{push-left}
\defop{PushRight}{push-right}
\defop{Next}{next}
\defop{Prev}{prev}
\defop{First}{first}
\defop{Last}{last}
\defop{HeapPtr}{heap-ptr}
\defop{LeftmostBundle}{leftmost-bundle}
\defop{RightmostBundle}{rightmost-bundle}
\defop{BatchPushLeft}{batch-push-left}
\defop{BatchPushRight}{batch-push-right}
\defop{BatchPopLeft}{batch-pop-left}
\defop{BatchPopRight}{batch-pop-right}
\defop{Remove}{remove}
\defop{LowerBound}{lower-bound}
\defop{Create}{create}
\defop{FixTooBig}{fix-too-big}
\defop{FixTooSmall}{fix-too-small}
\defop{Merge}{merge}
\defop{Touch}{touch}
\let\Decreasekey\DecreaseKey

\let\Deletemin\ExtractMin
\newcommand\Strictw{\tilde W}
\newcommand\strictw{\tilde w}

\def\ref#1{\textcolor{red}{[\textbackslash ref is disabled. Please use \textbackslash cref instead. Use \textbackslash cref\{sec:foo,sec:bar\} to reference two things at once.]}}

\def\ourabstract{%
We construct a heap with strong beyond-worst-case performance guarantees and explore the analysis of such heaps.
\texorpdfstring{\par}{}%
First, we unify existing notions of the working-set bound for heaps by proving that essentially all of them are equivalent -- with the notable exception of the so-called stack-like bound, which is strictly stronger. This equivalence simplifies the theoretical landscape and extends the range of applications of heaps with working-set bounds.
\texorpdfstring{\par}{}%
Second, we present the first heap implementation that has the amortized stack-like bound and supports $\O(1)$-time decrease-key and $o(\log^*n)$-time insert.
}

\begin{document}

\maketitle

\begin{abstract}
\ourabstract
\end{abstract}

\newpage
\section{Introduction}
\label{s:intro}

This paper presents a new heap with strong beyond-worst-case guarantees. We also simplify the understanding of working-set bounds, a family of seemingly different beyond-worst-case guarantees, by proving that almost all known working-set bounds are equivalent.

A heap is a data structure that supports the following operations: $\Insert(x)$ that inserts a new item, $\Decreasekey(x, x')$ that decreases the value of an item in the data structure to a new value, and $\Deletemin$ that deletes and returns the currently smallest item in the data structure. Heaps are among the most basic data structures used as algorithmic building blocks, in applications like sorting, Dijkstra's algorithm and minimum spanning trees.

This paper explores their beyond-worst-case properties in the comparison model. More concretely, while it is well-understood that classical implementations of heaps cannot be improved in the worst case, we explore heaps with much stronger guarantees on natural classes of inputs. One important class are the \emph{working-set properties}, a family of temporal-locality-based guarantees that have found applications in fundamental problems, including optimal merging \cite{iacono}, Dijkstra's algorithm \cite{dijkstra} and sorting problems \cite{supi}. 

Our contribution is twofold. First, we advance the theory behind the working-set properties of heaps, showing that, surprisingly, essentially all natural definitions of the working-set property are equivalent in the amortized sense. The only exception is the \emph{stack-like property}, the strongest-known working-set style property. Second, we construct a heap achieving the stack-like property while also supporting all operations in optimal time, except for a negligible $\log^*$-type additive term.


\subsection{Beyond-worst-case algorithms and data structures}

After decades of continuous progress, we now understand that many known data structures are the best possible in the comparison model. 
For example, consider the ordered dictionary problem that asks for a data structure supporting operations $\Insert(x)$, $\Remove(x)$, and $\LowerBound(x)$ that finds the smallest element $y \ge x$. This problem is solved by classical binary search trees like AVL trees \cite{avl} or red-black trees \cite{redblack}: Each operation costs $\O(\log n)$ time. At the same time, we know that in the comparison model, at least one of the three operations has to take time $\Omega(\log n)$, since the ordered dictionary problem can be used to sort. 

While we cannot improve the worst-case complexities of those fundamental data structures, we can try to create data structures that are much faster than worst-case for some important classes of inputs. For example, a famous dynamic-optimality conjecture posits that splay trees~\cite{splay} -- a concrete implementation of binary search trees -- perform a sequence of access queries asymptotically as fast as \emph{any} other binary search tree. If true, splay trees are a much more satisfying solution to the ordered dictionary problem than the classic worst-case optimal solutions \cite{avl,redblack}. 

An important theoretical reason to study beyond-worst-case data structures is that questions about beyond-worst-case algorithms often lead to questions about beyond-worst-case data structures. 
To give an example, recently, \citet{dijkstra} showed how replacing Fibonacci heaps with sufficiently fast beyond-worst-case heaps in Dijkstra's algorithm turns a worst-case-optimal algorithm into a so-called \emph{universally optimal} one. Such an algorithm runs for each graph in time that is the best possible for that graph.\footnote{The problem solved by Dijkstra's algorithm is to compute the distance from the source vertex to all vertices, and order the vertices by this distance.}
Importantly, the result of \citet{dijkstra} works with any heap, provided that 1) it has the so-called \emph{working-set property}, defined shortly, and 2) it still has the Fibonacci-heap guarantee of $\O(1)$-time $\Insert$ and $\DecreaseKey$.

Similarly, \citet{supi} recently resolved the long-standing problem of sorting under partial information with asymptotically optimal time and query complexity\footnote{In this problem, we are given a set of $n$ elements and $m$ answers to some pairwise comparisons of those elements. The task is to sort the elements using the asymptotically best possible time and number of additional comparisons.} with a natural \emph{topological heapsort} algorithm whose strength comes from using a heap with the working-set property.

The above discussion suggests that the design of beyond-worst-case algorithms may frequently reduce to the design of beyond-worst-case data structures, and heaps with the working-set property in particular. We discuss this property next. 

\subsection{Heaps and their working sets}

One natural class of beyond-worst-case properties are those that make use of temporal locality. The intuition is that if an item $x$ was accessed recently, a good heuristic is to assume it will be accessed again soon. Thus, if accessing such items is cheaper, then we should perform better in applications with good temporal locality.

As an example, say we have a magical way of telling if $x$ is the heap's minimum. Then we can store items on a stack in insertion order and perform $\Deletemin$ by descending down the stack until we find the minimum $x$. This takes $\Theta(s_x)$ time, where $s_x$ is the depth of $x$. In applications with good temporal locality, this beats standard heaps. However, in general, each $\Deletemin$ can take up to $\Theta(n)$ time. With wishful thinking, we could hope for a better, $\O(\log s_x)$-time guarantee, as that would nicely interpolate to $\O(\log n)$ even in the worst case.
This is exactly the \emph{stack-like property:}

\begin{definition}[Stack-like property \cite{elmasry}]
	Define the \emph{stack size} of $x$ as the number $s_x$ of items in the heap that are no older than (inserted no earlier than) $x$, including $x$, at the moment $x$ is deleted.
	A heap has the \emph{stack-like property} if the amortized cost of extracting an item $x$ is $\O(\log s_x)$.\footnote{Note that for $s_x = 1$, this requires $\O(0)$-time $\ExtractMin$. This could be problematic, but all our beyond-worst-case properties are amortized, and thus each $\ExtractMin$ can charge $\O(1)$ to the corresponding $\Insert$. This way, we can avoid writing $\O(1 + \dots)$ every time.}
\end{definition}

Notably, if we happened to use a stack-like heap as a stack and always delete the most recent item, we would get constant-time deletion. Paired with $\O(1)$-time $\Insert$ and $\DecreaseKey$, this leads to a data structure that can at the same time be used as a fast heap and a stack, and gracefully interpolates between both regimes.
There are known stack-like heaps \cite{elmasry,elmasry_farzan_iacono,unified-bound-heaps}, but none supports $\DecreaseKey$ faster than via deleting and reinserting (in \cite{elmasry_farzan_iacono}, it is mentioned as an open problem).
We construct a stack-like heap with $\O(1)$-time $\DecreaseKey$, with only an additive $o(\log^*n)$ slowdown per $\Insert$.

\paragraph{Working-set property}

The stack-like property is very strong and thus hard to achieve.
A weaker variant of the idea that more recent items should be cheaper to access is the so-called \emph{working-set property}. It requires that the deletion cost is logarithmic in the time since insertion:

\begin{definition}[Working-set property]
	A heap has the \emph{working-set property} if the amortized cost of extracting an item $x$ is $\O(\log (t'_x - t_x))$ where $t_x,t'_x$ are the times of insertion and extraction of $x$, respectively. 
\end{definition}

It can immediately be seen that the stack-like property is stronger than the working-set property, since $s_x \le t'_x - t_x$ for all $x$. It is actually even asymptotically stronger: if we insert numbers $n$ to $1$ sequentially and then perform $n$ $\Deletemin$s, this has $\O(n)$ cost with a stack-like heap (as we always delete the most recent item remaining), but the working-set property only guarantees $\O(\log 1 + \log 3 + \ldots + \log (2n - 1)) = \O(n \log n)$.


\paragraph{Other working-set properties}

In the original work that established the terminology, \citet{iacono} defines a third property slightly stronger than the standard working-set property that we call for that reason the \emph{strong working-set property} (\cref{def:strong_working_set}). The two aforementioned results \cite{dijkstra,supi} used strong working-set heaps in their original preprints \cite{dijkstra_old,supi_old} as their proof techniques required them. \citet{dijkstra-simpler,supi-simpler} later showed a simpler analysis for which regular working-set heaps are sufficient.

Similarly, \citet{supi} define the even stronger \emph{strict working-set property} (\cref{def:strict_working_set}) and construct a heap achieving it to get one of their results.

Working-set properties have proven extremely useful, with each variant finding its own applications. The differences between variants seem to be important, and previous results go out of their way to construct heaps with properties strong enough for their applications \cite{dijkstra,supi} or to prove that weaker heaps suffice \cite{dijkstra-simpler,supi-simpler}. Given that we additionally identify three more natural working-set variants, some of which are also used in the literature, the resulting landscape is somewhat fragmented. Our contribution is to significantly simplify it.

\subsection{Our results}
\label{sec:results}

This paper shows new results related to the stack-like property and working-set properties.

\subsubsection{Equivalence between various working-set properties}

\begin{figure}
    \centering
\begin{tikzpicture}[
  every node/.style={draw, rectangle, align=center, inner sep=6pt},
  ->, >={Stealth[scale=1.6]}, thick,
]
  \node (size)  at (-2, 0)      {size property};
	\node (ws)    at (-7.5, 0)  {working-set properties:\\\textit{regular, touched-items,}\\\textit{insert-only, delete-only,}\\\textit{strong, strict}\\(asymptotically equivalent)};
  \node (stack) at (-12.5, 0)   {stack-like};

  \draw (ws)   -- (size);
  \draw (stack)      -- (ws);
\end{tikzpicture}
	\caption{Asymptotic relationships between beyond-worst-case properties discussed in \cref{s:weakstrong}.}
    \label{fig:diagram}
\end{figure}

Our first conceptual contribution is showing that essentially all working-set properties found in the literature, as well as some other natural definitions of the working-set property, are equivalent.

\begin{restatable}{theorem}{allpropertiessame}
\label{thm:all_same}
	All of the following properties are equivalent, up to constant factors and in the amortized sense:
	\begin{enumerate}[itemsep=0pt]
        \item working-set property (\cref{def:working_set}),
		\item touched-items working-set property (\cref{def:insert_delete_working_set}),
        \item insert-only working-set property (\cref{def:insert_delete_working_set}),
        \item delete-only working-set property (\cref{def:insert_delete_working_set}),
        \item strong working-set property (\cref{def:strong_working_set}),
        \item strict working-set property (\cref{def:strict_working_set}).
    \end{enumerate}
    That is, any heap satisfying one of the properties also satisfies all the others, up to constant factors, in the amortized sense, and while preserving the cost of all other operations.

\end{restatable}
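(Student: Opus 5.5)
The plan is to arrange the six properties in a chain of implications, with the trivial directions coming from pointwise comparisons and one nontrivial direction proved by an amortized charging argument. For each property $P$, I write the guaranteed amortized extraction cost of $x$ as $\O(\log w^{P}_x)$, where $w^{P}_x$ is the parameter associated with $x$ in the corresponding definition (\cref{def:working_set,def:insert_delete_working_set,def:strong_working_set,def:strict_working_set}). If $w^{P}_x \le c\, w^{Q}_x$ for every item and every operation sequence, then property $P$ implies property $Q$. I will first verify these pointwise inequalities directly from the definitions. The counts of inserted items, of deleted items and of touched items during $x$'s lifetime are each at most $t'_x - t_x$. The strong parameter is dominated by the touched-items count, and the strict parameter by the strong one. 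This should give the chain
\[
\text{strict} \Rightarrow \text{strong} \Rightarrow \text{touched-items} \Rightarrow \text{insert-only},\ \text{delete-only} \Rightarrow \text{regular}.
\]
If a particular pair is not pointwise comparable, I will route that pair through the regular or strict property instead.

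To close the cycle, I will show that the regular working-set property implies the strict one. It suffices to prove that for every operation sequence,
\[
\sum_x \log(t'_x - t_x) \;\le\; \O\Bigl(\sum_x \log \strictw_x + m\Bigr),
\]
where $m$ is the number of operations and $\strictw_x$ is the strict parameter. The $\O(m)$ term is absorbed by charging $\O(1)$ to each operation, which preserves the costs of all other operations. Write $L_x = t'_x - t_x$ and $w = \strictw_x$. The key elementary inequality is
\[
\log L_x \le 2\log w + \max\bigl(0,\log(L_x/w^2)\bigr) \le 2\log w + L_x/w^2 .
\]
So it remains to bound $\sum_x L_x/\strictw_x^2$ by $\O(m)$.

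I would rewrite this sum as a sum over time steps $t$. Each step $t$ receives $\sum_{x \text{ alive at } t} 1/\strictw_x^2$. The structural fact needed is that if $x$ is the $k$-th youngest item alive at time $t$, then $\strictw_x \ge k$. I expect this to hold because the strict parameter accounts for the younger items coexisting with $x$ at some moment of its lifetime. Given this fact, the charge at each step is at most $\sum_k 1/k^2 = \O(1)$, and summing over all steps gives $\O(m)$. Using the square is what makes the argument work: with the weaker inequality $\log L \le \log w + L/w$ one would only get a harmonic sum $H_n$ per step, which is too much.

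The main obstacle is checking this structural fact, namely that $\strictw_x$ is at least the rank of $x$ among the live items by age, uniformly over its lifetime, against the precise definition in \cref{def:strict_working_set}. If that definition measures something smaller, for example only at the deletion time, I would replace the per-step charge by a charge to the younger items. A younger item alive alongside $x$ is either still present at $x$'s deletion, and so counted in $\strictw_x$, or it is deleted earlier; in the latter case I would pay for it from its own $\O(1)$ insertion budget, combined with the same $1/k^2$ weighting. Finally, I would confirm that decrease-key operations do not affect any parameter in a way that breaks the chain, since all of the definitions measure the lifetime between insertion and extraction.
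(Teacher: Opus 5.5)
Your regular $\Rightarrow$ strict argument is essentially the paper's, and your inequality $\log L_x \le 2\log\tilde w_x + L_x/\tilde w_x^2$ is a tidy replacement for its happy/unhappy split. One condition: at each step $t$ you must rank only the live items that will eventually be extracted. The rank of $x$ is then exactly $\sigma_{x,t}\le\tilde w_x$. Ranking among all live items fails, because never-extracted younger items are excluded from the strict stack.

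The gap is in your ``trivial'' chain. Since $i_x\le\tau_x$ and $\delta_x\le\tau_x$, the pointwise implications are insert-only $\Rightarrow$ touched-items and delete-only $\Rightarrow$ touched-items. The converses, which your chain uses, fail: insert $z_1,\dots,z_k$, insert $x$, touch every $z_j$, delete $x$; then $i_x=\delta_x=1$ but $\tau_x=k+1$. For insert-only this is easily repaired. The strong working set consists of items inserted during $x$'s lifetime, so $w_x\le i_x$. This gives strict $\Rightarrow$ strong $\Rightarrow$ insert-only $\Rightarrow$ touched-items $\Rightarrow$ regular, and your argument closes that cycle.

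Delete-only, however, is never shown to follow from anything. Its parameter is not dominated by any of the others. Insert $x$, insert $y_1,\dots,y_k$, delete $x$, then delete all $y_j$; now $\delta_x=1$ while $\tilde w_x=w_x=i_x=\tau_x=t'_x-t_x=k+1$. So your fallback of routing through the strict property does not work, and routing through the regular property is precisely the missing claim.

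You need a second nontrivial argument, regular $\Rightarrow$ delete-only. The paper proves this separately (\cref{thm:delete_implies_working_set}) with the same $1/k^2$ charging, but it orders the live, eventually-deleted items at time $t$ by \emph{deletion} time instead of insertion time. For $t_x<t\le t'_x$, every item ranked before $x$ is deleted during $[t,t'_x]$, so the rank of $x$ is at most $\delta_x$. Your inequality with $\delta_x$ in place of $\tilde w_x$ then gives $\sum_x\log(t'_x-t_x)\le \O(\sum_x\log\delta_x+m)$.
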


We believe that this is a surprising equivalence since some of the definitions seem to be very different, and are treated as such in the literature, as previously discussed.

To our knowledge, \cref{thm:all_same} covers all working-set-style heap properties found in the literature, except for the stack-like property, which is strictly stronger:

\begin{restatable}{lemma}{stacklikestronger}
\label{lem:stacklike_stronger}
    The stack-like property is strictly stronger than all of the working-set properties listed in \cref{thm:all_same}, up to constant factors and in the amortized sense. That is:
    \begin{enumerate}[itemsep=0pt]
    	\item any heap satisfying the stack-like property also satisfies all of the listed working-set properties, up to constant factors, and in the amortized sense; and
    	\item there is an infinite family of instances on which the total cost guarantees given by the stack-like property are asymptotically stronger than those given by the working-set property.
    \end{enumerate}
\end{restatable}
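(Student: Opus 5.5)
Part~1 reduces to a single pointwise comparison, and Part~2 to an explicit family of instances. For Part~1, \cref{thm:all_same} shows that all listed working-set properties are equivalent up to constant factors in the amortized sense. It therefore suffices to show that the stack-like property implies the regular working-set property (\cref{def:working_set}).

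For that, take any item $x$ inserted at time $t_x$ and deleted at time $t'_x$. Every item counted in the stack size $s_x$ is $x$ itself or was inserted after $t_x$ and is still present at $t'_x$. Each such insertion is a distinct operation in the interval $(t_x, t'_x)$, so $s_x \le t'_x - t_x$, as observed in the introduction. Since $\log$ is monotone, the amortized bound $\O(\log s_x)$ given by the stack-like property is at most $\O(\log(t'_x - t_x))$. This is exactly the working-set bound with the same amortization, and the costs of the other operations are unchanged.

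The only subtlety is the degenerate case $s_x = 1$, where the $\O(0)$ charge is absorbed into the $\O(1)$ paid by the insertion, as in the footnote to the stack-like definition. Applying \cref{thm:all_same} then transfers the guarantee to the touched-items, insert-only, delete-only, strong and strict variants. This transfer also preserves the cost of all other operations, because the theorem explicitly does so.

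For Part~2, I would use the family from the introduction: for each $n$, insert keys $n, n-1, \dots, 1$ in this order, then perform $n$ $\Deletemin$s. At the $i$-th deletion the minimum is the key $i$, which is the most recently inserted item still present, so $s_x = 1$ for every deletion. The stack-like bound on total cost is therefore $\O(n)$, counting the $\O(1)$ per operation.

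For the working-set bound on the same family, the item with key $i$ is inserted at time $n - i + 1$ and deleted at time $n + i$. Its gap is $t'_x - t_x = 2i - 1$. Summing gives
\[
\sum_{i=1}^n \log(2i-1) = \Theta(n \log n),
\]
since at least $n/2$ terms are at least $\log(n-1)$.

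The step needing most care is stating Part~2 correctly. The claim is that the guarantee of one property is asymptotically weaker, not that some heap actually attains the $n\log n$ cost. To make "strictly stronger than all listed properties" meaningful, I would rely on the constant-factor equivalence in \cref{thm:all_same}: any variant's bound is, up to constants and amortization, the regular working-set bound, so the $\Theta(n\log n)$ gap carries over to every variant.

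If one prefers to verify the strong or strict variants directly on this instance, the argument is the same. Their per-deletion quantity is at least the number of items inserted after $x$ and still present for some later period, which here is $i - 1$, and summing the logarithms again yields $\Theta(n\log n)$.
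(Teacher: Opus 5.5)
Your proposal is correct and follows essentially the same route as the paper: for Part~1 you compare $s_x$ pointwise with a working-set quantity and then appeal to \cref{thm:all_same}, and for Part~2 you use the same instance $I(n),\dots,I(1),E,\dots,E$. The differences are minor. You compare against the regular bound via $s_x \le t'_x - t_x$, while the paper uses the insert-only bound via $s_x \le i_x$; both suffice given \cref{thm:all_same}. You also make explicit the $\Theta(n\log n)$ lower bound on the working-set guarantee for this instance, which the paper's ``$\O(n\log n)$ versus $\O(n)$'' leaves implicit.
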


The resulting relationships between heap beyond-worst-case properties are shown in \cref{fig:diagram}.
We note that both \cref{thm:all_same} and \cref{lem:stacklike_stronger} work black-box for any heap/priority-queue implementation.

Together, these results significantly simplify the landscape and applications of working-set properties. Whenever an application calls for a working-set property, then either it requires the stack-like property, or we can just pick our favorite heap implementation that has \emph{some} working-set-style property, and chances are that it will fall into the cluster covered by \cref{thm:all_same}.

As an example, consider again the work of \citet{supi}, who construct a heap with the strict working-set property to achieve one of their results. Using \cref{thm:all_same}, we immediately get the same result just by using any working-set heap.


\subsubsection{A new heap with stack-like property and constant $\Decreasekey$}

Second, we construct a heap that has the stack-like property, while it also supports the \Decreasekey{} operation in $\O(1)$ amortized time and $\Insert$ in $\O(f(n))$ amortized time for a very slowly growing $f$.

\begin{restatable}[Informal version of \cref{thm:heap_main}]{theorem}{ackerheap}
\label{thm:heap_main_informal}
	For any $k \in \N$, there exists a heap that supports operations \Insert, \Decreasekey, and \Deletemin{} in the following amortized time: 
	\begin{enumerate}[itemsep=0pt]
		\item \Insert{} in time $\O(\log^{*(k)}n)$,
            \item \Decreasekey{} in time $\O(1)$,
		\item \Deletemin{} in time $\O(\log s_x)$.
    \end{enumerate}
	Here $\log^{*(k)}$ is the $k$-fold iterated logarithm, that is, the $k$-th function in the sequence $\log^*$, $(\log^*)^*$, $((\log^*)^*)^*$, \dots{}, defined formally in \cref{def:k-fold}.
\end{restatable}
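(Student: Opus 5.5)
We plan to prove the statement by induction on $k$, bootstrapping a heap with a weaker insert bound into one with a stronger one. The skeleton is a stack-like structure that keeps items in insertion order, grouped into blocks of doubly-exponentially growing sizes: block $i$ has capacity about $2^{2^i}$, as in the stack-like heaps of \cite{elmasry}. Within each block the items are held in a Fibonacci-style heap. This gives $\O(1)$ $\DecreaseKey$ by cutting inside the block, and it gives $\O(\log(\text{block size}))$ deletion. For $\Deletemin$ we need fast access to the global minimum, so we keep the block minima in an auxiliary heap ordered by key. Deleting an item in block $i$ then costs $\O(\log 2^{2^i}) = \O(2^i)$. Because all blocks younger than $x$'s block are essentially full, $s_x \ge 2^{2^{i-1}}$, and so $2^i = \O(\log s_x)$.

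The key steps, in order, are as follows.
\begin{enumerate}
\item Define the block invariant: blocks are contiguous in insertion order, and block $i$ has size at most $2^{2^i}$ with a matching lower bound, except possibly the youngest block. An insertion goes into block $0$ and causes merges that cascade toward older blocks, much like a binary counter. Merging Fibonacci-style heaps is $\O(1)$, so each cascade step is cheap. The number of blocks is $\O(\log\log n)$, and a potential argument charges the merges to inserts.
\item A deletion from block $i$ leaves an underfull block. We fix this by borrowing items from, or merging with, neighbouring younger blocks. The cost is charged to the $\O(2^i)=\O(\log s_x)$ budget, using that the rebuild of a younger block is proportional to that block's size logarithmically.
\item Handle the minima heap. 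Each $\Insert$ or merge can change a block minimum, and each $\DecreaseKey$ can decrease one. If the minima heap supports $\O(1)$ decrease-key and fast insert, $\DecreaseKey$ stays $\O(1)$. The expensive part is that there are $\O(\log\log n)$ block minima whose keys must be updated or reinserted after merges; a naive treatment of this costs $\O(\log\log n)$ per insert.
\item Recursion: the minima heap itself contains only $m=\O(\log\log n)$ items, and we implement it with the level-$(k-1)$ structure. With a plain Fibonacci heap for the minima and a further blocking layer, the base case gives $\log^* n$, since each level shrinks the relevant size from $n$ to roughly $\log n$. Iterating the construction $k$ times produces $\log^{*(k)}$, provided each layer costs $\O(1)$ plus the cost of one insert into the next layer down. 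Unrolling this recursion to reach $\log^{*(k)} n$, rather than merely $\log^* n$, requires the inner layers themselves to be stack-like instances: a merge cascade must touch the inner heap only $\O(1)$ times amortized, and the inner heap's deletions must be stack-like in its own items.
\end{enumerate}

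The main obstacle is step 4, specifically making the amortized interaction between layers telescope. An insert at the outer layer must trigger only $\O(1)$ amortized operations at the next layer, and those must be cheap deletions of recent items (stack-like) rather than arbitrary ones. The reason is that a merge cascade replaces several block minima by one, and those deleted minima correspond to the most recently created blocks. I would first try to show that the minima heap sees its items in block order, so that removing young blocks' minima costs $\O(1)$ there by the stack-like property at the inner level. I would also define a potential that sums inner-layer potentials, which should give the $\log^{*(k)}$ bound by induction on $k$.
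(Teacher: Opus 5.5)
There is a genuine gap: the recursion in step 4 is applied to the wrong component, so it cannot reduce the overhead.

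\textbf{Where the overhead actually comes from.} The $\log\log n$ term does not come from the heap of block minima. That heap has only $\O(\log\log n)$ entries, and the paper does not even maintain one: it rescans the $t$ block minima after each deletion and charges this to insertions. In your own binary-counter scheme, at least without deletions, a cascade reaching level $j$ occurs once per about $2^{2^{j-1}}$ insertions, so minima updates are already $\O(1)$ amortized. The overhead comes from the \emph{number of levels} $t$. Blocks must stay contiguous in insertion order with lower bounds on their sizes, since that is what gives $s_x \ge w_{i-1}/2$. So items cross every level boundary as they age, and cross back when younger items are deleted, at $\O(1)$ amortized cost per item per level, i.e.\ $\O(t)$ per insertion. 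With Fibonacci heaps inside the blocks, a deletion in block $i$ costs $\Theta(\log w_i)$, which must be $\O(\log w_{i-1})$. This forces $w_i \le w_{i-1}^{\O(1)}$ and $t=\Theta(\log\log n)$, whatever you use for the minima. Your claim that this base case already gives $\log^* n$ is therefore unsupported. The paper instead recurses on the heap \emph{inside} each block. If the block heap (the warehouse) is $f$-deque-like, deleting $x$ from block $i$ costs $\O(\log d_x + f(w_i))$, so it suffices that $2^{f(w_i)} \approx w_{i-1}$. The sizes then grow much faster, and the number of levels drops to $g(n)$, the next function in the sequence (\cref{cl:fstar}, \cref{lem:inductive}).

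\textbf{Steps 1 and 2.} These also do not work as stated, and repairing them is where the paper's remaining ideas come in.
\begin{itemize}
\item Melding block $i-1$ into block $i$ empties block $i-1$. This violates the lower bound you need, since you exempt only the youngest block.
\item Restoring lower bounds after deletions requires splitting off the oldest or youngest items of a block in insertion order. A key-ordered heap cannot do this cheaply: each such item costs a $\Theta(\log w_i)$ deletion in a Fibonacci heap, and still up to $f(w_i)$ per push or extreme deletion in a recursive $f$-deque-like heap.
\item Rebuilding a block takes time linear, not logarithmic, in its size.
\end{itemize}
The paper handles this with bundles of at most $f(w_i)$ consecutive items, stored as single keys in the warehouse. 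The warehouse operations of cost $f(w_i)$ are then amortized over $\Theta(f(w_i))$ items, so moving items between adjacent blocks costs $\O(1)$ amortized per item (\cref{lem:quartermaster}). Because bundles leave a block from both ends, the inner heap must be deque-like rather than merely stack-like (\cref{lem:stack_implies_deque}).

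Finally, building on melds conflicts with $\O(1)$ \DecreaseKey{}. After melds, an item must find its current block heap to update minima, which is a union--find problem (\cref{rem:pointerless}). The paper never merges heaps; it updates an item's heap pointer only when the item physically moves, which is already paid for.
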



\cref{thm:heap_main_informal} is proven by a careful recursive construction, starting from the Fibonacci heap and gradually bringing the additive overhead of $\Insert$ down at the expense of making other operations $\O(1)$-times slower. The overall challenge comes from the fact that our heap has to support fast \Decreasekey{} operations. At the same time, we have to keep track of the insertion order of the items, to support stack-like \Deletemin. 

We note that up to the $\O(\log^{*(k)}n)$ additive overhead, and the fact that bounds are amortized, our construction improves upon the heaps known in the literature. First, our data structure matches the fast decrease-key supported by Fibonacci heaps \cite{fibonacci}, hollow heaps \cite{hollow}, Brodal heaps \cite{brodal} or rank-pairing heaps \cite{rank_pairing}. This is not known to be the case for many beyond-worst-case heaps: funnel heaps \cite{funnel,elmasry}, pairing heaps \cite{pairing,iacono} and the heaps by \citet{elmasry,elmasry_farzan_iacono,unified-bound-heaps}, most of which are not known to support $\DecreaseKey$ better than by deleting and re-inserting. On the other hand, the only beyond-worst-case heaps that are known to support fast $\DecreaseKey$ \cite{dijkstra,dijkstra-simpler,dijkstra-pm} are only known to have the working-set, but not the stack-like property.


\subsubsection{Simple heaps with the working-set and stack-like property}

We emphasize that $\O(1)$-time $\DecreaseKey$ is a crucial part of our stack-like heap result.
Generally, requiring fast $\DecreaseKey$ seems to make designing beyond-working-set heaps more challenging: while working-set and stack-like heaps without fast $\DecreaseKey$ have been known since the early 2000s, the result of \citet{dijkstra} was the first one to achieve a working-set heap with $\O(1)$-time $\DecreaseKey$.

One possible reason is that fast $\DecreaseKey$ is what sets heaps apart from BSTs. In fact, some straightforward constructions with BSTs directly lead to heaps with the working-set and stack-like properties. For completeness, we sketch some of them in \cref{sec:simple-heaps}.

\subsection{More related work}
\label{sec:related_work}

There is a long line of work on beyond-worst-case heaps; besides the results already mentioned, we highlight the paper of \citet{munro2019dynamic} that proves that the heap analog of the dynamic-optimality conjecture \cite{splay} fails for a wide family of heaps known as the tournament heaps, and the paper by \citet{kozma_saranurak} who show a close correspondence between beyond-worst-case heaps and binary search trees. We refer the reader to the surveys  \cite{risaSurveyPic,bernhardsSurvey} about other input-sensitive time bounds for heaps and data structures in general. 
\citet{unified-bound-heaps} mentioned above constructs a heap that achieves a so-called \emph{unified bound}, which goes beyond our working-set classification and implies the stack-like property. However, his heap does not support fast $\DecreaseKey$.

There is even a longer line of research related to the beyond-worst-case binary search trees \cite{tango-trees,unified-bound}, motivated by the dynamic optimality conjecture by \citet{splay}. The working-set property is only one of a number of beyond-worst-case properties attained by splay trees; the others include the static finger property \cite{splay}, the dynamic finger property \cite{dynamic-finger1,dynamic-finger2}, static optimality \cite{splay} and others.

\section{Technical Overview}
\label{sec:overview}
\label{subsec:sketch_heap}

Here we provide an informal overview of our most technical contribution, a stack-like heap with fast $\DecreaseKey$. Our other main result, about the equivalence between working-set properties, is treated in a self-contained way in \cref{s:weakstrong}.

We repeat \cref{thm:heap_main_informal} here for convenience.

\ackerheap*

\paragraph{Construction with $\log\log n$ instead of $\log^{*(k)}n$}
We first sketch how one can prove \cref{thm:heap_main_informal} with a somewhat worse loss of $\log\log n$ in the complexity of \Insert. The construction is also shown in \cref{fig:heap}. 

\begin{figure}
    \centering
    \includegraphics[width=\linewidth]{img/heap.pdf}
    \caption{Diagram of our basic heap construction: 
The data structure consists of warehouses of doubly-exponentially-growing sizes. The construction aims to preserve the linked-list order of the items, newly inserted items are thus inserted in the leftmost, smallest, warehouse. \\
We group the items in each warehouse in bundles of size that is roughly the logarithm of the size of this warehouse (and also the neighboring, smaller, warehouse). Each warehouse then stores the minima of the bundles in a Fibonacci heap. \\
In the improved construction, we use a similar construction recursively, replacing the Fibonacci heap by an already constructed heap. }
    \label{fig:heap}
\end{figure}

To do so, we use a standard idea (see e.g. \cite{unified-bound,dijkstra}) of storing the data in heaps of sizes that grow doubly exponentially. We maintain $\O(\log \log n)$ data structures that we call warehouses, where each warehouse is responsible for storing a sequence of items that are contiguous in their insertion order. The warehouses $W_1, W_2, \dots, W_t$ are such that the $i$-th warehouse stores roughly $w_i =  2^{2^i}$ items. That is, the sizes of the warehouses grow doubly exponentially. In fact, we keep it as an invariant that each $i$-th warehouse contains between $w_i/2$ and $w_i$ items, with the exception of the last, largest warehouse that can be smaller.

Each warehouse is a Fibonacci heap. However, the heap does not store its items directly, as is the case e.g. in the data structure from \cite{dijkstra}. Instead, adjacent items in the insertion order are grouped into \emph{bundles}. Each bundle is essentially a small linked list of items represented by their minimum value in the warehouse Fibonacci heap. The bundles themselves are also organized in an auxiliary linked list maintained by the warehouse. In the warehouse $W_i$, each bundle contains roughly $b_i = \log(w_i)$ items.

We implement the three basic operations \DecreaseKey{}, \Insert, \ExtractMin{} as follows. We note that these operations can be implemented in the running times $\O(1), \O(\log \log n), \O(\log s_x + \log \log n)$. Furthermore, the additive $\log \log n$ term of $\ExtractMin$ can be charged to $\Insert$, and we get the stated complexities.
\begin{enumerate}
    \item $\DecreaseKey(x, x')$. We check whether the item becomes the new minimum in its bundle. If it does, we run $\DecreaseKey$ in the appropriate Fibonacci heap. This operation is thus implemented in $\O(1)$ amortized time. 
    \item $\Insert(x)$. We insert the item into the leftmost warehouse with bundles of size one. We also add the bundle to the warehouse's linked list, at the leftmost end.  This can be done in $\O(1)$ time. 
    Then, we run the \FixTooBig{} operation (see below).
    \item $\ExtractMin$. First, we find the warehouse containing the minimum in $\O(t) = \O(\log \log n)$ time. Next, we run $\ExtractMin$ from that warehouse in time $\O(\log|W_i|) = \O(\log w_i)$. 
     After the extraction, some bookkeeping is in order: We need to delete the item from its bundle and find its new minimum; this again takes $\O(b_i) = \O(\log w_i)$ steps. Afterward, we need to reinsert the bundle back into the warehouse, which takes constant time. The total time spent is thus $\O(\log w_i)$. 

     Finally, we call $\FixTooSmall{}$.
     
    Importantly, since the $(i-1)$-th warehouse contains at least $2^{2^{i-1}} / 2$ items that are all younger than $x$, we have for the extracted item $x$ that $s_x \ge 2^{2^{i-1}} / 2$. Combining this with the bound $w_i \le 2^{2^i}$, we notice that 
    \[
	    \log w_i \le \log 2^{2^i} \le 2 \cdot \log 2^{2^{i-1}} \le 2 \cdot \log (2s_x). 
    \]
    That is, the overall cost of extraction $\O(\log w_i)$ can be upper bounded by the cost $\O(\log s_x)$ required by the stack-like property. 
\end{enumerate}

Next, let us sketch the idea behind the remaining bookkeeping operations for maintaining correct warehouse sizes; the real implementation is slightly more technical.\footnote{For one, some bundles may contain fewer than $b_i$ items, and we might thus need to move many bundles between $W_i$ and $W_{i+1}$, each of which still costs us $\log w_i$. Fortunately, this amortizes out with the right implementation.} Both operations are run for $i = 1, \ldots, t$. After the $i$-th call, the first $i$ warehouses again satisfy the size constraints.

\begin{enumerate}
	\item $\FixTooBig{}$. If the $i$-th warehouse $W_i$ contains more than $w_i$ items, use the linked list of $W_i$ to find its rightmost bundle and move it to $W_{i+1}$. This requires running an \ExtractMin{} operation on $W_i$ that costs us $\O(\log |W_i|) = \O(\log w_i) = \O(b_i)$ steps. We therefore spend $\O(1)$ time per item moved from $W_i$ to $W_{i+1}$.
    \item $\FixTooSmall{}$. If the $i$-th warehouse is too small, we make the neighboring, larger warehouse $W_{i+1}$ send it back some items. As in the $\FixTooBig$ case, this requires one \ExtractMin{} operation that however amortizes to $\O(1)$ time per item moved. 
\end{enumerate}

To understand the overall time complexity of all the calls to $\FixTooBig$ and $\FixTooSmall$ operations, we use in \cref{sec:stacklike_heap} a standard amortization argument that essentially tells us that in the amortized sense, $1+o(1)$ items move between $W_i$ and $W_{i+1}$ per every $\Insert$ and $\Delete$. As each move costs us $\O(1)$ time per item and there are $t = \O(\log \log n)$ levels, we pay a total of $\O(\log\log n)$ steps per every $\Insert$ and $\Delete$.

\paragraph{Turning $\log\log n$ to $\log^* n, \log^{**} n, \dots$}
To improve the error term from $\log\log n$ to $\log^{*(k)}(n)$, we iteratively improve our data structure. In the $i$-th step of our construction, we rely on the previous construction, but instead of using Fibonacci heaps as the basic data structure to support the warehouse operations, we simply use the data structure from step $i-1$. 

To give an example of how the construction can be improved in the first recursive step, notice that in the construction above, we needed to have a relation $|W_{i-1}| \approx |\sqrt{W_i}|$. This was to make sure that if we run an $\ExtractMin$ of cost $\O(\log |W_i|)$ in $W_i$, we can use the observation that $s_x \ge |W_{i-1}|$ to upper bound this cost by $\O(\log s_x)$. 

Now, let us replace the Fibonacci heap with a heap that supports $\ExtractMin$ in time $\O(\log s_x + f(n))$ for a very slowly growing function $f$; for example, the heap constructed above has $f(n) = \O(\log\log n)$. Now, a careful analysis of the above construction shows that the optimal choice of parameters is $w_{i} = 2^{f(w_{i+1})}$ and $b_i = f(w_i)$. That is, for slow-growing $f$, we can let the warehouse sizes grow quicker and let the bundles be smaller portions of warehouses. 

In particular, the equation $w_{i} = 2^{f(w_{i+1})}$, for a function that grows as $f(n) = \O(\log \log n)$ or slower, implies that $w_i < f(w_{i+2})$ and, thus, we conclude that we only need $\O(f^*(n))$
warehouses in our data structure. This means that we can replace the \ExtractMin{} cost of $\O(\log s_x + f(n))$ of the old data structure with the new, improved cost of $\O(\log s_x + f^*(n))$. This is at the cost of somewhat blowing up the constant factors in the complexities of our operations. 

Iterating the same procedure a constant number of times yields \cref{thm:heap_main_informal}. Some additional care is needed: For example, the recursive approach requires that we build a heap with the so-called \emph{deque-like property}, even if we only want to construct a heap with the stack-like property. The deque-like property requires that $\ExtractMin$ is cheap not only for very recent items, but also for very old items. This is because internal functions $\FixTooBig$ and $\FixTooSmall$ are moving items both to lower and higher warehouses; this requires fast \ExtractMin{} operations for bundles both at the low and high end of the insertion order within the warehouse.

\section{A stack-like heap with fast insert and decrease-key}
\label{sec:stacklike_heap}

In this section, we construct a heap that is stack-like and at the same time supports $\Insert$ and $\DecreaseKey$ in optimal time, up to a very small additive loss, thereby proving \cref{thm:heap_main_informal} (restated more formally below as \cref{thm:heap_main}).
We do so by progressively constructing stack-like heaps with smaller and smaller additive loss.
For a high-level sketch of the construction, see \cref{subsec:sketch_heap}.

Our heap will actually be deque-like, which is a strictly stronger property. We show in \cref{lem:stack_implies_deque} that any stack-like heap can be made deque-like with only constant-factor losses to all its operations.

We start by formally defining the family of slow-growing functions that appear in our time complexities.

\begin{definition}[$k$-fold iterated logarithm]
	\label{def:k-fold}
	Define $\log^{*(0)}(n)$ as $\log(n)$.\footnote{All logarithms in this paper are binary and we use the convention that $\log x = 0$ for $x \le 1$.} Now, define inductively $\log^{*(k + 1)}(n)$ as the smallest integer $i$ such that 
\[
	\underbrace{\log^{*(k)}(\log^{*(k)}(\dots \log^{*(k)}(}_{\text{$i$ times}}n)\dots)) \le 1.
\]
\end{definition}

\paragraph{Deque-like interface}

It helps to think of our data structure not primarily as a heap, but rather as a doubly-ended queue (deque) that happens to also support heap operations. In particular, our data structure maintains an implicit doubly linked list of all its items. Maintaining this linked-list order is in fact necessary to make our inductive approach work. Instead of a standard $\Insert(x)$ operation, we have two operations, $\PushLeft(x)$ and $\PushRight(x)$ that append the item to the appropriate end of the linked list. Furthermore, we also implement operations \First{} and \Last{} that return respectively the leftmost and rightmost item, as well as operations $\Next(x)$ and $\Prev(x)$ that allow us to traverse the linked list.

Together with the above deque operations, our data structure also supports the standard heap operations $\DecreaseKey{}(x, x')$, $\IncreaseKey{}(x, x')$, $\FindMin$ and $\Delete(x)$.\footnote{We choose to treat $\Delete$ as a fundamental heap operation instead of the more commonly used $\Deletemin$. In a heap supporting $\O(1)$-time $\DecreaseKey$ and $\FindMin$, both operations are equivalent up to additive constants.} The amortized cost of $\Delete(x)$ is the main feature of our construction: it is $\O(\log d_x)$, where $d_x$ is the \emph{deque size} of $x$ defined below. In particular, if we use our data structure as a deque and only ever delete the left-/rightmost item, the cost of \Delete{} is $\O(1)$.

\begin{definition}[Stack, queue and deque size]
	For a heap $H$ having the interface described above, define the \emph{stack size} ($s_x$) and \emph{queue size} ($q_x$) of $x$ as, respectively, the distance of $x$ to $H.\First$ and $H.\Last$, plus one. Define the \emph{deque size} of $x$ as $d_x = \min(s_x, q_x)$.
\end{definition}

Note that the standard way to define $s_x$, $q_x$ and $d_x$ is by counting the number of items newer/older than $x$. In our interface, we can recover those definitions by only allowing $\PushLeft$ and not $\PushRight$: then the linked-list order is exactly the insertion-time order. By allowing $\PushRight$, our heap and definitions are slightly stronger. We need this flexibility (and the ability to perform linked-list operations) in our inductive construction.

We are ready to formally state the guarantees of our construction:

\begin{theorem}
    \label{thm:heap_main}
    For any fixed positive integer $k$, there exists a data structure in the comparison model that supports the following operations in the following amortized time complexity. 
    \begin{enumerate}[itemsep=0pt]
		\item $\DecreaseKey(x, x')$, \FindMin, $\Next(x)$, $\Prev(x)$, \First, \Last{} in time $\O(1)$, 
		\item $\PushLeft(x)$, $\PushRight(x)$ in time $\O(\log^{*(k)}(n))$, 
		\item $\IncreaseKey(x, x')$ and $\Delete(x)$ in time $\O(\log d_x + \log^{*(k)}(n))$.
    \end{enumerate}
	Here $\log^{*(k)}$ is the $k$-fold iterated logarithm, that is, the $k$-th function in the sequence $\log^*$, $\log^{**}$, $\log^{***}$, \dots{}, formally defined in \cref{def:k-fold}.
\end{theorem}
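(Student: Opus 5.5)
We prove \cref{thm:heap_main} by induction on $k$, through a single \emph{bootstrapping lemma}. Suppose $\mathcal{H}$ is a deque-like heap with the interface above. Assume it supports $\DecreaseKey$, $\FindMin$ and the list operations in $\O(1)$, pushes in $\O(f(n))$, and $\Delete$/$\IncreaseKey$ in $\O(\log d_x + f(n))$, where $f$ is nondecreasing with $f(n) = \O(\log\log n)$. Then we construct a heap $\mathcal{H}'$ with the same guarantees, except that $f$ is replaced by $\O(f^*)$ and all constants grow by a constant factor.

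For the base case we take a Fibonacci heap augmented with a doubly linked list. This gives $\Delete$ in $\O(\log n)$, which is too weak, since $\log n$ is not $\O(\log\log n)$. So we apply the construction of \cref{subsec:sketch_heap} once with Fibonacci-heap warehouses to get $f = \log\log n$. Applying the lemma once more yields $\log^* n$, which is the case $k=1$. For $k \ge 2$ we apply the lemma $k-1$ more times, using that $(\log^{*(j)})^* = \log^{*(j+1)}$ up to constants. Since $k$ is fixed, the blow-up of constants is harmless.

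\textbf{Construction.} $\mathcal{H}'$ keeps warehouses $W_1,\dots,W_t$, each an instance of $\mathcal{H}$ whose items are bundles (short linked lists of consecutive items, keyed by their minimum). The warehouse linked-list order is consistent with the global list order, and $W_1$ sits at the left end. To serve $\PushRight$ and the queue side of $d_x$, we mirror the structure with a second chain anchored at the right end. Alternatively, we first build a stack-like version and then invoke \cref{lem:stack_implies_deque}.

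We choose sizes $w_{i} \approx 2^{c f(w_{i+1})}$ and bundle sizes $b_i \approx f(w_i)$, and keep $|W_i| \in [w_i/2, w_i]$ except for the last warehouse. Then $t = \O(f^*(n))$, because $w_i < f(w_{i+2})$ whenever $f = \O(\log\log)$. A global min pointer over the $t$ warehouse minima is recomputed lazily.

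The operations work as follows.
\begin{itemize}
\item $\DecreaseKey$ updates the bundle minimum and, if needed, calls $\DecreaseKey$ in the warehouse. This costs $\O(1)$.
\item $\Delete(x)$ with $x\in W_i$ scans the bundle in $\O(b_i)$ time and then performs one $\IncreaseKey$ or $\Delete$ on the bundle inside $W_i$. The bundle's deque position inside $W_i$ is at most $d_x$, so this costs $\O(\log d_x + f(w_i))$. Because $W_{i-1}$ lies between $x$ and the end, $d_x \ge w_{i-1}/2$, hence $f(w_i) = \O(\log w_{i-1}) = \O(\log d_x)$; the same bound covers $b_i$. Recomputing the global minimum costs $\O(t) = \O(f^*(n))$, which is charged to the push.
\item $\IncreaseKey$ is handled the same way.
\end{itemize}

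\textbf{Rebalancing and amortization.} $\FixTooBig$ and $\FixTooSmall$ move a bundle between $W_i$ and $W_{i+1}$. This only ever removes an extreme bundle of $W_i$ (deque size $1$, cost $\O(f(w_i))$) and pushes onto an extreme of $W_{i+1}$ (cost $\O(f(w_{i+1}))$). Bundles are merged or split to have size $\Theta(b_{i+1})$ at the destination, so the cost per moved item is $\O(f(w_{i+1})/b_{i+1} + \ldots) = \O(1)$.

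The potential argument has two parts. First, we use hysteresis: after a fix, a warehouse is at its midpoint size, so $\Omega(w_i)$ pushes or deletes must occur before the next move across boundary $i$. This gives $\O(1)$ amortized item moves per boundary per operation. Second, we account for underfull bundles created by deletions with a potential of $\O(1)$ per bundle, paid by the deletion, and merge them when they are moved.

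Summing over the $t$ boundaries gives push cost $\O(f^*(n))$ and leaves $\Delete$ at $\O(\log d_x + f^*(n))$.

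\textbf{Main obstacle.} The hardest step is this amortization. We need each rebalancing move to touch only the extreme ends of an inner heap, so that the inner deque-like bound gives $\O(f)$ and not $\O(\log w)$. We also need bundle splitting and merging to interact correctly with deletions that leave bundles sparse; the footnote in the overview flags this as the delicate point. We will first try a potential of the form $\Phi = \sum_i \bigl(\text{dist}(|W_i|,\text{midrange})\cdot \O(1)\ +\ \#\text{underfull bundles in } W_i\bigr)$. Separately, we need to check the parameter inequality $f(w_{i+1}) \lesssim \log w_i$ together with $t=\O(f^*)$ for the chosen recurrence.
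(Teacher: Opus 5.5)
Your architecture matches the paper's in every structural respect:
\begin{itemize}
\item warehouses of sizes $w_i\approx 2^{f(w_{i+1})}$ storing bundles of size $f(w_i)$ in an inner $f$-deque-like heap;
\item a stack-like version upgraded by \cref{lem:stack_implies_deque};
\item $t=\O(f^*(n))$ levels and $k+1$ bootstrapping rounds.
\end{itemize}
The gap is the rebalancing amortization, which you leave open and which does not close as sketched.

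First, you price each extreme pop from $W_i$ at $\O(f(w_i))$ but give an underfull bundle only $\O(1)$ potential. Popping a bundle that deletions have reduced to a single item then costs $\O(f(w_i))$, while only one item moves and only $\O(1)$ potential is available.

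Second, consider a potential $C\sum_i \mathrm{dist}(|W_i|,\text{midrange})$ with a uniform constant $C$. It releases $Ck$ when $k$ items leave $W_i$, but may spend the same $Ck$ when those items push $|W_{i+1}|$ away from its midrange. The $\Theta(k)$ actual work is then unpaid.

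Third, your separate base step invokes the overview sketch, so it inherits exactly the footnote issue you cite. The paper instead runs the same lemma from $f=\frac12\log n$, where $w_{i+1}\ge w_i^2$ yields $\log\log n$.

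The idea that resolves this is \cref{rem:delete_no_overhead}: prepay the additive $f(n)$ term of an inner \Delete{} when the bundle is pushed into the warehouse. Apart from the $\O(1)$-cost case of an empty warehouse, bundles enter the warehouse only when a full bundle of size $b=f(w)$ is split. That split releases $\Theta(b)$ of the potential $\sum_B\max(0,|B|-\lceil b/2\rceil)$ (\cref{cl:quartermaster_pushright}). Hence removing an extreme bundle costs $\O(1)$ amortized however depleted it is, and \BatchPopLeft{}/\BatchPopRight{} cost $\O(|L|)$ (\cref{cl:quartermaster_batch}).

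This removes the need for underfull-bundle bookkeeping or destination-side merging, since moved items are pushed one at a time into the destination's extreme bundle. Rebalancing then costs $\O(1)$ per item move. \cref{lem:fix-cost} counts moves through the prefix sizes $|F_i|=|Q_1|+\dots+|Q_i|$, which change only by pushes, deletes and moves across boundary $i$. Backward moves are charged to earlier forward ones. Your hysteresis count is fine once phrased via $|F_i|$, using $|F_{i-1}|=o(w_i)$.

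Within your own scheme, you could also repair the bundle accounting in either of two ways:
\begin{itemize}
\item deposit $\O(1)$ per deleted item, since a bundle born with at least $b_i/2$ items must absorb $\Omega(b_i)$ deletions to become underfull;
\item let each deletion pay $\O(f(w_i))=\O(\log s_x)$ directly.
\end{itemize}
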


Note that the amortized cost of $\Delete(x)$ can be reduced to $\O(\log d_x)$ by charging $\PushLeft$ and $\PushRight$ (see \cref{rem:delete_no_overhead}). With this, it is easy to see that \cref{thm:heap_main} implies the informal \cref{thm:heap_main_informal}.

\paragraph{Roadmap}
We prove \cref{thm:heap_main} in the rest of this section. First, in \cref{subsec:induction}, we state the main inductive lemma, \cref{lem:inductive}, that implies \cref{thm:heap_main}. Next, in \cref{subsec:quartermaster} we explain the main building block of our data structure that we call a \emph{quartermaster}. Finally, \cref{sec:heap_finish} finishes the proof by discussing how the quartermasters are used to build the heap from \cref{lem:inductive}.

\subsection{Main inductive lemma}
\label{subsec:induction}
The main lemma that we prove in this section is an inductive construction: We start with a heap that is close to having the deque-like property with fast $\Insert$ (paying an additive $\O(\log^{*(k)}n)$ factor) and that creates a new data structure that is even closer to having the property (paying an additive $\O(\log^{*(k+1)}n)$ factor). We start with defining a heap that is close to having the deque-like property.

\begin{definition}[$f$-deque-like, $f$-stack-like heap]
	\label{def:f-deque-like}
	Let $f$ be a function from the following sequence:
	\begin{equation}
		\label{eq:f-iterated-log-like}
		\frac 12 \log n, \log \log n, \log^{*} n, \log^{**} n, \ldots, \log^{*(k)} n, \ldots
	\end{equation}
	We say a heap is \emph{$f$-deque-like} if it supports the following operations in the following amortized time complexity, where $n$ is the number of items currently in the heap:
	\begin{enumerate}[itemsep=0pt]
		\item $\DecreaseKey(x, x')$, \FindMin, $\Next(x)$, $\Prev(x)$, \First, \Last{} in time $\O(1)$, 
		\item $\PushLeft(x)$, $\PushRight(x)$ in time $\O(f(n))$, 
		\item $\Delete(x)$ and $\IncreaseKey(x, x')$ in time $\O(\log d_x + f(n))$.
    \end{enumerate}

	An $f$-stack-like heap is defined analogously, with $d_x$ being replaced by $s_x$, and without requiring the support for $\PushRight(x)$.
\end{definition}

Now we can state the inductive lemma:

\begin{lemma}
    \label{lem:inductive}
	If there is an $f$-deque-like heap, there is also a $g$-deque-like heap, where $f, g$ is a pair of two consecutive functions in \cref{eq:f-iterated-log-like}.
\end{lemma}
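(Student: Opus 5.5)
We build the $g$-deque-like heap $H$ as a chain of \emph{warehouses} $W_1, W_2, \dots, W_t$, arranged left to right. Each warehouse stores a contiguous segment of the global linked-list order. The two ends of the chain are handled symmetrically: $W_1$ holds the leftmost items, and a mirrored chain holds the rightmost ones. To keep the deque bound, we maintain two half-chains growing from the two ends toward a middle warehouse.

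Each warehouse $W_i$ partitions its items into \emph{bundles} of consecutive items, of size about $b_i = f(w_i)$. Each bundle is a plain doubly linked list together with a pointer to its minimum. The bundle minima are stored as keys in an instance $Q_i$ of the assumed $f$-deque-like heap. The linked-list order of $Q_i$ is the order of the bundles, so the deque operations of $Q_i$ give us the leftmost and rightmost bundles for free. The capacities are chosen as $w_1=\Theta(1)$ and $w_{i+1} = 2^{\Theta(b_i)}$ with $b_i = f(w_{i+1})$; the overview phrases this as $w_i = 2^{f(w_{i+1})}$. A global minimum pointer is kept over the $t$ warehouse minima, or alternatively a small Fibonacci heap over them. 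The invariant is that every non-extreme warehouse has between $w_i/2$ and $w_i$ items.

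The operations are then as follows.

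\textbf{Navigation.} $\Next$, $\Prev$, $\First$ and $\Last$ are direct pointer operations on bundles and items.

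\textbf{$\DecreaseKey$.} If the item becomes its bundle's minimum, we call $\DecreaseKey$ in $Q_i$ and update the global minimum, all in $\O(1)$.

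\textbf{$\PushLeft$ / $\PushRight$.} We create a singleton bundle at the appropriate end of $W_1$ (or its mirror), and then run $\FixTooBig$ cascading inward.

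\textbf{$\Delete(x)$ and $\IncreaseKey(x)$.} We locate the bundle containing $x$ and rescan it in $\O(b_i)$ time. We then call $\Delete$ on its old key in $Q_i$ and reinsert the bundle at the same linked-list position. The reinsertion is done by splitting: we delete the bundle's entry and push its new minimum. The cost of $\Delete$ in $Q_i$ is $\O(\log d^{Q_i}_{B} + f(|Q_i|))$, which is at most $\O(\log w_i + f(w_i))$. Finally we run $\FixTooSmall$.

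$\FixTooBig$ and $\FixTooSmall$ move a whole boundary bundle between $W_i$ and $W_{i+1}$. Because that bundle is at the \emph{end} of $Q_i$, its deque size there is $1$, so the move costs only $\O(f(w_i)+f(w_{i+1})) = \O(b_i)$. This is $\O(1)$ per moved item. Bundles that are too small are merged with a neighbour, and bundles that are too large are split, inside the same warehouse. A standard potential argument charges the warehouse-boundary traffic: slack of $w_i/4$ items between rebalancings gives $\O(1)$ amortized moved items per level per update. The total charge per update is therefore $\O(t)$.

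It remains to bound $t$ and the delete cost. From $w_{i} = 2^{f(w_{i+1})}$ one gets $w_{i+2}\ge$ roughly $f^{-1}(2^{f^{-1}(\cdot)})$. For consecutive $f,g$ in \cref{eq:f-iterated-log-like} this yields $t = \O(g(n))$: for $f=\tfrac12\log$ it gives doubly exponential growth and $g=\log\log$; for $f=\log^{*(k)}$ it gives $t=\O(\log^{*(k+1)} n)$. Locating the minimum warehouse, or maintaining the global minimum after a change, costs $\O(t)=\O(g(n))$.

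For $x\in W_i$ with $i\ge 2$, all of $W_{i-1}$ lies between $x$ and the nearer end. Hence $d_x \ge w_{i-1}/2$, and the delete cost satisfies $\O(\log w_i + b_i) = \O(f(w_i)) = \O(\log w_{i-1}) = \O(\log d_x)$. The additive terms are $\O(g(n))$. Note that $n$ is changing, so $t$ must be recomputed lazily, with a potential absorbing rebuilds of the last warehouse.

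The main obstacle is the amortization of $\FixTooBig$/$\FixTooSmall$ in the presence of partially filled bundles, where the count of bundles moved can exceed the count of items moved. We would handle it with an invariant that adjacent bundles in a warehouse have combined size at least $b_i$. Under that invariant, any move of $m$ bundles transfers $\Omega(m b_i)$ items except for $\O(1)$ bundles, and those leftovers are paid for by the potential $\Phi=\sum_i(\#\text{bundles in }W_i)\cdot\O(b_i)/b_i$. A second delicate point is choosing the parameters so that the base case $f=\tfrac12\log n$ corresponds to a Fibonacci heap with bundles of size $1$. There, $\Delete$ costs $\O(\log n)$, which fits $\O(\log d_x + \tfrac12\log n)$ trivially, so the base case holds.
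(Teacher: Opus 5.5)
Your architecture matches the paper's: bundles keyed by their minima in an $f$-deque-like heap, the relation $w_i=2^{f(w_{i+1})}$, boundary moves between neighbouring warehouses, and an $\O(t)=\O(g(n))$ scan for the global minimum. However, your cost accounting is only valid for the first step $f=\tfrac12\log n$, where $\log w_i=2f(w_i)$. From $f=\log\log n$ on, $f(w_{i+1})=\log w_i=w_{i-1}$ while $f(w_i)=\log w_{i-1}$, and two of your claims fail.

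\textbf{The $\Delete$ bound.} You bound the warehouse term by $\log d^{Q_i}_B\le\log w_i$ and then assert $\O(\log w_i+b_i)=\O(f(w_i))$. But $\log w_i=w_{i-1}$ is exponentially larger than $\log d_x$ when $d_x\approx w_{i-1}$. The step you need is that bundles are nonempty. Then the distance of $B$ from either end of the warehouse order is at most that of $x$ in the global order, so $\log d^{Q_i}_B\le\log d_x$. Inheriting the beyond-worst-case term this way is the whole point of using an $f$-deque-like warehouse; only the additive $f(w_i)=\log w_{i-1}$ is absorbed via $d_x\ge w_{i-1}/2$. Separately, the interface has no insertion in the middle of the list, so just call $\IncreaseKey$ on the bundle in $Q_i$.

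\textbf{Boundary moves.} A whole boundary bundle of $W_i$ carries at most $f(w_i)$ items, but pushing it into the heap of $W_{i+1}$ costs $f(w_{i+1})=\log w_i$. So ``$\O(f(w_i)+f(w_{i+1}))=\O(b_i)$'' is false, and the cost per moved item is unbounded. Your two inconsistent definitions of $b_i$ do not rescue this. Taking $b_i=f(w_{i+1})$ would make the rescan in $\Delete$ cost $\log w_i$. The combined-size invariant is measured against $b_i$, not $b_{i+1}$, so it does not help on arrival in $W_{i+1}$.

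The missing idea is the quartermaster of \cref{lem:quartermaster}. Items arriving at a warehouse are appended one at a time to its existing end bundle, whose key is repaired by an $\O(1)$ $\DecreaseKey$. A bundle is split only when it reaches size $b=f(w)$, and the $\O(f(w))$ split is paid by the potential $\sum_B\max(0,|B|-\lceil b/2\rceil)$. Popping an end bundle costs $\O(1)$, because its deque size in the warehouse is $1$ and the additive term of $\Delete$ is charged to pushes (\cref{rem:delete_no_overhead}).

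\textbf{The deque bound.} Your deque argument is also incomplete. ``All of $W_{i-1}$ lies between $x$ and the nearer end'' lower-bounds only one of $s_x,q_x$. But $d_x=\min(s_x,q_x)$ requires at least $w_{i-1}/2$ items on \emph{both} sides of $x$. With two half-chains meeting in a middle warehouse, you would need three things, none of which you specify:
\begin{enumerate}
\item the depths of the two half-chains stay matched;
\item the middle warehouse stays bounded by roughly the next preferred size;
\item there is a rule for shifting items across the middle.
\end{enumerate}
Without such a rule, a long run of $\PushLeft$s can make the left half-chain deep while the right one stays short. The rightmost item may then lie in a large warehouse $W_i$ with $d_x=1$. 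Deleting it can cost $\Theta(f(w_i))$ (bundle rescan plus the warehouse's additive term), which is not $\O(\log d_x+g(n))=\O(g(n))$.

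The paper avoids this entirely. It builds a single chain, which only needs $s_x\ge w_{i-1}/2$ and so yields a $g$-stack-like heap. It then applies the black-box \cref{lem:stack_implies_deque}. That lemma keeps two stack-like heaps $H_L,H_R$ with $|H_L|,|H_R|\ge\lfloor n/4\rfloor$, rebuilding both whenever this fails and paying with the $\Omega(n)$ intervening operations. As a result $s_x$ and $d_x$ agree up to a constant factor.
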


We will now verify that there is a $(\frac 12\log n)$-deque-like heap. This will serve as the base case of our construction. This data structure is essentially just the Fibonacci heap with an additional linked list for maintaining the deque order. 

\begin{lemma}[Maintaining deque order with a heap]
	\label{lem:heaps_fibonacci_deque}
	There is a $(\frac 12\log n)$-deque-like heap.
\end{lemma}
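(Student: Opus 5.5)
We build the heap from a Fibonacci heap $F$ and a doubly linked list $L$, kept side by side. Each item is stored as a node of $F$ keyed by its value and also as a node of $L$, and the two nodes hold pointers to each other. The list $L$ records the deque order. We also keep pointers to both ends of $L$, so that $\First$, $\Last$, $\Next(x)$ and $\Prev(x)$ take $\O(1)$ worst-case time. Every heap operation goes to $F$, and every deque operation goes to $L$.

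We implement the operations as follows and check each cost against \cref{def:f-deque-like} with $f(n)=\frac12\log n$.
\begin{itemize}
\item $\PushLeft(x)$ and $\PushRight(x)$ splice $x$ onto the correct end of $L$ and insert it into $F$. Both steps cost $\O(1)$ amortized, which is within $\O(f(n))$.
\item $\DecreaseKey(x,x')$ and $\FindMin$ are passed to $F$ unchanged. They cost $\O(1)$ amortized by the Fibonacci heap guarantee~\cite{fibonacci}.
\item $\Delete(x)$ unlinks $x$ from $L$ in $\O(1)$ time and deletes it from $F$ in $\O(\log n)$ amortized time.
\item $\IncreaseKey(x,x')$ deletes $x$ from $F$ and reinserts it with key $x'$. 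Its node in $L$ stays where it is, so the deque position does not change. The cost is $\O(\log n)$ amortized.
\end{itemize}
Because $\log n = 2\cdot\frac12\log n$, the cost $\O(\log n)$ is at most $\O(\log d_x + \frac12\log n)$, as required.

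There are two details to handle, and neither is a real obstacle. First, \cref{def:f-deque-like} measures $n$ as the current size, whereas the Fibonacci heap analysis bounds the cost by the logarithm of the maximum size ever reached. We fix this with standard global rebuilding: whenever the current size drops below half of the size at the last rebuild, we rebuild $F$ from scratch. The rebuild is charged to the deletions that caused the shrinkage. Alternatively, we can note that the Fibonacci heap analysis already gives $\O(\log n)$ in terms of the current $n$, since node degrees are bounded logarithmically in the current subtree sizes.

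Second, we must confirm that the potential function of $F$ is not disturbed by the operations on $L$. Those operations are worst-case $\O(1)$ and do not touch $F$, so adding them changes nothing in the amortized analysis.
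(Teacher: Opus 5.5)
Your construction is correct and matches the paper's proof: a Fibonacci heap whose items also carry deque-order linked-list pointers plus end pointers, with $\IncreaseKey$ done as delete-and-reinsert while keeping the list position, and the factor $\frac12$ absorbed into the $\O(\cdot)$. Your first ``detail'' is unnecessary, since the Fibonacci heap's $\O(\log n)$ bounds already refer to the current size; also, the rebuilding rule as you state it would not achieve its aim, because the threshold would have to be the maximum size since the last rebuild, not the size at the last rebuild.
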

\begin{proof}
	First, the $\frac12$ factor does not matter as $\O(f(n)) = \O(\log n)$.
	We store the items in a Fibonacci heap while additionally maintaining a linked-list order on the items. That is, a~record of item $x$ in the Fibonacci heap is a triple ($x$, \textit{pointer to the item left of~$x$}, \textit{pointer to the item right of $x$}). The heap also maintains pointers to its leftmost and rightmost item. Whenever we call $\PushLeft$, $\PushRight$ and $\Delete$, we perform the Fibonacci heap operation ($\Insert$ in case of $\PushLeft$ and $\PushRight$, $\Delete$ in case of $\Delete$) and then update the linked list pointers and the pointers to the leftmost/rightmost item, as we would in a proper linked list.\footnote{In the case of $\Delete(x)$, we need to back up the record of $x$ before doing the heap $\Delete$, since afterwards we would lose the access to the linked list pointers and would be unable to update the linked list properly.} $\IncreaseKey(x, x')$ is implemented as $\Delete$ followed by $\Insert$ on the heap (while keeping the item's position on the linked list). The implementation of other operations is straightforward, as those operate purely on the heap or purely on the linked list structure.
\end{proof}

Now we can verify that \cref{thm:heap_main} follows from \cref{lem:inductive}. 
\begin{proof}[Proof of \cref{thm:heap_main}]
    To achieve the data structure from the statement of \cref{thm:heap_main} parameterized by $k$, we use \cref{lem:heaps_fibonacci_deque} and apply \cref{lem:inductive} $k+1$ times. The complexity of $\Delete$ and $\IncreaseKey$ then goes from $\O(\log d_x + \log n)$ (from \cref{lem:heaps_fibonacci_deque}) to $\O(\log d_x + \log\log n)$ (first application of \cref{lem:inductive}) to $\O(\log d_x + \log^{*(k)}(n))$ ($k$ more applications of \cref{lem:inductive}). Similar analysis also holds for $\PushLeft$ and $\PushRight$.
\end{proof}

\begin{remark}[No free lunch]
	One might ask: why not just repeat \cref{lem:inductive} indefinitely and drive the overhead down to $\O(1)$? The answer is that each application of \cref{lem:inductive} comes at the cost of worse constant factors in the complexities of other operations; hence we apply the lemma only $\O(1)$ times in the proof. Thus picking $k = \alpha(n)$ would lead to a $2^{\O(\alpha(n))}$-deque-like heap (where $\alpha(n)$ is the inverse Ackermann function). It is an interesting question whether a better analysis can show that this gives an $\alpha(n)$-deque-like heap.
\end{remark}

\begin{remark}[Heap pointers]\label{rem:pointerless}
	In our construction, our items are scattered across multiple heaps, and we need to answer queries of the form ``what heap is the item $x$ in?''. This can be problematic in general -- in particular, when the heap has to support the $\Merge$ operation, then this is as hard as the disjoint-set-union problem. However, in our case, we do not need heap merging. Thus we simply augment each item to store the pointer to the heap it is in. This gets assigned during the item's $\Insert$, and updated whenever we internally move items across heaps. This also works recursively: if items are stored in heaps, which are stored in heaps of heaps, and so on, then each item gets a pointer to its heap, each heap gets a pointer to its heap of heaps, and so on.
\end{remark}

\begin{remark}[Properties of $f$]\label{rem:f-properties}
	In our construction of $f$-deque-like heaps, we use subtle assumptions on $f$ being ``reasonable''. For example, one can verify that any $f$ from \cref{eq:f-iterated-log-like} is non-decreasing, and we have $f(\O(n)) = \O(f(n))$, and thus also e.g.~$f(n + 1) = \O(f(n))$.
\end{remark}

In the rest of this section, our goal is to prove \cref{lem:inductive} by assuming the existence of an $f$-deque-like heap (as defined in \cref{def:f-deque-like}) and trying to construct a $g$-deque-like heap for $g = o(f)$ defined in \cref{lem:inductive}.

\subsection{Bundles, warehouses, and quartermasters}
\label{subsec:quartermaster}

In this subsection, we describe three helpful data structures that we call a \emph{bundle}, a \emph{warehouse} and a \emph{quartermaster}. Each data structure has roughly the same interface as the final heap we are going to build, but each exhibits a different set of tradeoffs in their time complexities and restrictions on how they can be used. Together, these tradeoffs combine in a way that makes the whole induction possible. On a high level, our final data structure will be a collection of quartermasters, arranged next to each other, each of which internally organizes its items in a warehouse, which is a collection of bundles. The whole construction has two main goals: preserve the $\O(\log d_x)$ beyond-worst-case term in the amortized cost of $\Delete(x)$ and $\IncreaseKey(x, x')$, while reducing the additive overhead from $\O(f(n))$ to $\O(g(n))$.

The outline of the construction is displayed in \cref{fig:heap2}, which is a more general version of \cref{fig:heap}.

\begin{figure}
    \centering
    \includegraphics[width=\linewidth]{img/heap2.pdf}
	\caption{Diagram of our recursive heap construction.
The data structure consists of quartermasters of fast-growing sizes. The construction aims to preserve the linked-list order of the items, newly inserted items are thus inserted in the leftmost, smallest, quartermaster. \\
	Internally, each quartermaster groups its items into bundles of size roughly $f(w_i)$, where $w_i$ is the preferred size of the $i$-th quartermaster, and stores the bundles, keyed by their minima, in an already-constructed $f$-deque-like heap called a warehouse.}
    \label{fig:heap2}
\end{figure}

\paragraph{Bundle} We start by defining a very simple data structure that we call a bundle.

\begin{claim}[Bundle]
    \label{cl:bundle}
    There exists a data structure that stores a collection $B$ of items. It implements the following operations in the following time. 
    \begin{enumerate}
        \item $\DecreaseKey(x, x')$, $\FindMin$, $\Next(x)$, $\Prev(x)$, $\First$, $\Last$, $\PushLeft(x)$, \PushRight$(x)$ in $\O(1)$ worst-case time per operation.
		\item $\IncreaseKey(x, x')$ and $\Delete(x)$ in $\O(|B|)$ worst-case time per operation, where $|B|$ is the number of items stored in the bundle.
    \end{enumerate}
\end{claim}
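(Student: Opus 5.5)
The bundle will be a plain doubly linked list of its items together with a pointer $m$ to a current minimum item, plus pointers to the leftmost and rightmost items. Each item record stores its key, its two neighbor pointers, and (as in \cref{rem:pointerless}) a pointer back to the bundle that contains it. All bounds in \cref{cl:bundle} are worst case, so I will not use any amortization. I only need to check each operation directly against this representation.

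The constant-time operations are immediate. $\First$, $\Last$, $\Next(x)$ and $\Prev(x)$ just read the stored pointers. $\FindMin$ returns $m$. $\PushLeft(x)$ and $\PushRight(x)$ splice $x$ onto the appropriate end of the list, update the end pointer, and then compare $x$ with $m$, replacing $m$ if $x$ is smaller (or if the bundle was empty). For $\DecreaseKey(x,x')$ we overwrite the key of $x$ and do the same single comparison with $m$. This keeps $m$ correct, because the only key that changed went down. Each of these operations does $\O(1)$ pointer updates and at most one comparison.

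The $\O(|B|)$ operations are the ones that can invalidate $m$. For $\Delete(x)$ we unlink $x$ in $\O(1)$ time and fix the end pointers if $x$ was at an end. If $x \neq m$ we are done. Otherwise we recompute the minimum by scanning the whole list, which costs $\O(|B|)$ time. For $\IncreaseKey(x,x')$ we overwrite the key. If $x = m$ we rescan to find the new minimum, and otherwise nothing changes, so again the cost is at most $\O(|B|)$. The list order is never affected by key changes.

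There is no real obstacle here. The one point that needs care is the invariant that $m$ is always a true minimum of $B$: only $\Delete$ and $\IncreaseKey$ can break it, and both restore it by a full scan. One more detail is the empty bundle, where $m$ and the end pointers must be null; a deletion that empties the bundle handles this by setting them all to null.
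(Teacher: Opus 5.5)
Your proposal is correct and matches the paper's proof: a doubly linked list that maintains a pointer to the current minimum, with $\Delete$ and $\IncreaseKey$ recomputing the minimum by a full $\O(|B|)$ scan. Rescanning only when $x$ is the current minimum is a harmless refinement of the paper's unconditional scan, and it does not change the worst-case bounds.
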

\begin{proof}
    We can implement a bundle using a doubly-linked list that maintains the current minimum. $\Delete(x)$ and $\IncreaseKey(x, x')$ first delete/increase $x$ and then iterate over the whole bundle to recalculate the minimum.
\end{proof}

\paragraph{Quartermaster}
%
%
%
Next, we define a data structure that we call a quartermaster. It can be thought of as a specialized version of an~$f$-deque-like heap that internally stores items in bundles. The resulting heap is still an~$f$-deque-like heap (and not $o(f)$-deque-like), but it allows (limited) constant-time manipulation with items at both ends of the linked-list order (as opposed to $\O(f(n))$ guaranteed by $f$-deque-likeness).

This is crucial for our construction. The final $g$-deque-like heap will consist of several quartermasters, each taking care of a segment of the linked list order. The quartermasters will frequently pass items to each other, and thus our goal is to make insertion/deletion to/from both ends of the linked list run in amortized $\O(1)$ time per item. We pay for this in slight inflexibility: faster deletion only applies when we delete in \emph{batches}, and the whole data structure has a maximum size limit $w$ that must be set in advance.

In the rest of this section, we prove the following lemma, which specifies the interface of a quartermaster.

\begin{lemma}[Quartermaster]
    \label{lem:quartermaster}
	Assume there exists an $f$-deque-like heap for some $f$.
	Then for any positive integer parameter $w$, there exists a data structure that we call a \emph{quartermaster} and that implements the following operations in the following time. At any point in time, the quartermaster can store at most $\O(w)$ items.
	\begin{enumerate}[itemsep=0pt]
	\item \Next, \Prev, \First, \Last, \FindMin{}: in $\O(1)$ worst-case time per operation.
        \item \DecreaseKey{}, $\PushLeft$, $\PushRight$: in $\O(1)$ amortized time per operation. 
        \item $\Delete(x), \IncreaseKey(x, x')$: in $\O(\log d_x + f(w))$ amortized time per operation.
        \item \BatchPopLeft{} and \BatchPopRight{}: in $\O(|L|)$ amortized time per operation, where $L$ is the list of the items being returned by this operation.
    \end{enumerate}
The operation \BatchPopLeft{} extracts some unknown number of items that is between 1 and $f(w)$ from the data structure that are leftmost in the deque order and returns them, correctly ordered, on output. The operation \BatchPopRight{} works analogously.
\end{lemma}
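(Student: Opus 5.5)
We build the quartermaster from bundles (\cref{cl:bundle}) stored in an $f$-deque-like heap, the \emph{warehouse}, given by the assumption. The items are kept in deque order and partitioned into contiguous bundles. Each bundle holds at most $b = f(w)$ items; in the ``interior'' every bundle has at least $b/2$ items. The warehouse stores one record per bundle, keyed by the bundle minimum and ordered by the bundle order.

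The two extreme bundles play a special role. The leftmost and rightmost bundles are \emph{buffers}: they are kept outside the warehouse and are handled separately. $\FindMin$ compares the warehouse minimum with the two buffer minima in $\O(1)$. The linked-list operations \Next, \Prev, \First, \Last{} are answered in $\O(1)$ worst case. Inside a bundle this uses its own list. At bundle boundaries it uses per-bundle pointers to the neighbouring bundles, kept as an explicit list alongside the warehouse's deque order; each item stores a pointer to its bundle (\cref{rem:pointerless}). Since the warehouse holds at most $\O(w)/1$ bundles, its operations cost $\O(\log d + f(\O(w))) = \O(\log d + f(w))$ by \cref{rem:f-properties}.

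The operations are implemented as follows.

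$\PushLeft(x)$: append $x$ to the left buffer in $\O(1)$. When the buffer reaches $b$ items, make it a regular bundle, $\PushLeft$ it into the warehouse at cost $\O(f(w))$, and start an empty buffer. This cost is charged to the $b$ pushes, so each push pays $\O(1)$ amortized. $\PushRight$ is symmetric.

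$\DecreaseKey(x,x')$: update $x$'s bundle minimum. If that minimum changes and the bundle lies in the warehouse, call the warehouse $\DecreaseKey$; all of this is $\O(1)$ amortized.

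$\Delete(x)$ and $\IncreaseKey(x,x')$: modify $x$ inside its bundle in $\O(b) = \O(f(w))$ time and recompute the bundle minimum. If the bundle is in the warehouse, update its key by warehouse $\IncreaseKey$ at cost $\O(\log d_B + f(w))$, where $d_B$ is the bundle's deque size in the warehouse. We need $d_B \le d_x$ plus something absorbed in the buffer. This holds because bundles to the left of $x$'s bundle contain at least one item each. If $x$ empties its bundle, remove the bundle via warehouse $\Delete$. Underfull interior bundles are allowed; they only make $d_B$ smaller relative to $d_x$, so that is harmless, and we do not merge them.

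$\BatchPopLeft$: if the left buffer is nonempty, return it whole in $\O(|L|)$. Otherwise take the leftmost warehouse bundle via \First{} on the warehouse and delete it at cost $\O(\log 1 + f(w)) = \O(f(w))$, then return its items. The trouble is that this bundle may contain far fewer than $f(w)$ items, e.g.\ after deletions.

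This last case is the main obstacle. I would handle it with a potential function that charges the deficiency of bundles. Give each warehouse bundle potential $c\,(f(w) - |B|)^{+}$, meaning the positive part. A push that creates a full bundle adds no potential. A $\Delete$ raises a bundle's potential by $\O(1)$, which fits inside the $\O(f(w))$ budget of $\Delete$. Popping a bundle releases exactly enough potential to pay $\O(f(w)) - \O(|B|)$.

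The insertion-side $f(w)$ charges also need care: the charge for converting a buffer into a bundle must be paid by the pushes, not double-counted. Finally, I would verify that the items returned always number between $1$ and $f(w)$, and that moving a bundle from the warehouse into a buffer, if needed, preserves $\O(1)$ access at both ends.
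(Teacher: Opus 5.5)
Your construction is correct, but it takes a different route from the paper's at both points where amortization is needed.

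\textbf{Pushes.} The paper keeps every bundle, including the extreme ones, inside the warehouse. A push goes into the end bundle. When that bundle is full, it is deleted from the warehouse, split into halves, and both halves are pushed back. This is paid for by the potential $\sum_B \max(0,|B|-\lceil b/2\rceil)$, which drops by $\lfloor b/2\rfloor$ at a split. You instead keep the end bundles as buffers outside the warehouse, so each warehouse push is paid by the $b$ item pushes that filled the buffer.

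\textbf{Batch pops.} For \BatchPopLeft{} and \BatchPopRight{} the paper never looks at the size of the popped bundle. It invokes \cref{rem:delete_no_overhead}, charging the additive $f$ term of every warehouse \Delete{} to the warehouse push that created that bundle, so deleting an extreme bundle costs $\O(\log 1)=\O(1)$. You pay the $f(w)$ term of popping an underfull bundle with the deficiency potential $c\,(b-|B|)^+$, funded by the deletions that shrank it. This is sound: every warehouse bundle enters full (potential $0$), only \Delete{} raises the potential (by $c$), and a pop releases $c(b-|B|)$. That leaves $\O(|B|)$ once $c$ is at least the constant in the warehouse's $\O(f(w))$.

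In your setting the ``main obstacle'' has an even simpler resolution. Each warehouse bundle is created by exactly one paid conversion and deleted at most once. So having the $b$ pushes prepay a second $\O(f(w))$ covers its eventual deletion, whatever its size; this is exactly \cref{rem:delete_no_overhead}.

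\textbf{What each route buys.} Your explicit bundle list gives genuinely worst-case $\O(1)$ \Next, \Prev, \First, \Last{}. The paper's version calls the warehouse's list operations, which the $f$-deque-like definition guarantees only amortized. Your route also avoids the split step. The paper's route buys a uniform structure with no special-cased ends.

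\textbf{Two cleanups.} First, drop the claim that interior bundles have at least $b/2$ items, since you later (correctly) allow underfull bundles and never merge them. Second, do not move warehouse bundles into a buffer. It is unnecessary, and it would break your conversion charge, because a buffer that does not start empty can reach $b$ items after fewer than $b$ pushes.
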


Before proving the lemma, we start with a remark that simplifies our analysis:

\begin{remark}
	\label{rem:delete_no_overhead}
	Any $f$-deque-like heap can be assumed to perform $\Delete(x)$ (but not $\IncreaseKey(x, x')$) in amortized time $\O(\log d_x)$, without the additive $f(n)$ term, while preserving the time complexity of all other operations. This is because we can charge the additive $f(n)$ terms of all $\Delete$ operations to the cost of all $\PushLeft$ and $\PushRight$ operations. Concretely, we can define a potential $\Phi = f(1) + f(2) + \ldots + f(n)$ where $n$ is the heap's current size. $\Delete(x)$ frees up $f(n)$ potential and $\PushLeft/\PushRight$ has to pay for an additional $f(n + 1) = \O(f(n))$ potential increase.
\end{remark}

\paragraph{Roadmap}
In the rest of \cref{subsec:quartermaster}, we prove \cref{lem:quartermaster}. First we describe the data structure. Then, we describe how to implement various operations in a series of claims.

\paragraph{Data structure}
The items of a quartermaster are stored in bundles, where each bundle contains a sequence of items that are consecutive in the deque order. Moreover, we use the inductive $f$-deque-like heap to store the bundles, using their minimum as keys.\footnote{We might need to temporarily store (at most one) empty bundle, whose key we define as $+\infty$.} We call this heap the \emph{warehouse} of the quartermaster. See \cref{fig:heap2}. We maintain the invariant that the quartermaster contains at most $\O(w)$ items in total (for $w$ set in advance by the caller of \cref{lem:quartermaster}). Thus the warehouse also contains at most $\O(w)$ bundles in total, and the additive overhead of operations on the warehouse heap is $\O(f(\O(w))) = \O(f(w))$.

\paragraph{Invariants and potential function}
The quartermaster will keep the following invariant: We define $b = f(w)$ and keep that each bundle is of size at least 1 and at most $b$.
To analyze this data structure, we will also use the following potential $\Phi$. For each bundle $B$, we define its potential 
\[
\phi(B) = \max\left(0,\;|B| - \left\lceil\frac b2\right\rceil\right)
\]
and we define the overall potential of the quartermaster as the sum of bundle potentials over all bundles:
\[
\Phi = \sum_{\text{$B$}} \phi(B).
\]

\paragraph{Quartermaster operations}

Next, we show how quartermaster operations can be implemented; we also analyze their amortized complexities with respect to the potential $\Phi$.  (In the end, we scale $\Phi$ up by an appropriate constant so that it pays for the operations correctly.)

\begin{claim}
    \label{cl:simple_operations}
	The following operations can be implemented in $\O(1)$ amortized time: \Next, \Prev, \First, \Last, \DecreaseKey{}, \FindMin{}.
\end{claim}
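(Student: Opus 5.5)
We handle each operation by composing the bundle operations of \cref{cl:bundle} with the warehouse operations of the $f$-deque-like heap. None of these operations changes the multiset of bundle sizes, so the potential $\Phi$ stays the same and amortized cost equals actual cost up to the warehouse's own amortization. We assume each item stores a pointer to its bundle, and each bundle stores a pointer to its warehouse record (\cref{rem:pointerless}); the quartermaster also caches pointers to its leftmost and rightmost bundle.

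\textbf{Linked-list operations.}
\begin{itemize}
\item $\Next(x)$: call the bundle-level $\Next$ in $x$'s bundle. If $x$ is the last item of its bundle, instead take the next bundle via the warehouse's $\Next$ and return that bundle's $\First$. Both calls are $\O(1)$ by \cref{cl:bundle} and $f$-deque-likeness.
\item $\Prev(x)$: symmetric to $\Next(x)$.
\item $\First$ and $\Last$: take the warehouse's $\First$ (resp.\ $\Last$) bundle and return its $\First$ (resp.\ $\Last$) item. If an empty bundle is temporarily present at the end, skip it with one more $\Next$/$\Prev$ step; there is at most one such bundle.
\end{itemize}

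\textbf{Heap operations.}
\begin{itemize}
\item $\FindMin$: the warehouse's $\FindMin$ returns the bundle with the smallest key, and that key is exactly the bundle's minimum item. This is $\O(1)$.
\item $\DecreaseKey(x,x')$: first decrease $x$ inside its bundle, which is $\O(1)$ by \cref{cl:bundle} and keeps the bundle minimum correct. If $x'$ is below the bundle's current key, also call the warehouse's $\DecreaseKey$ on that bundle, which is $\O(1)$ amortized.
\end{itemize}
Bundle sizes do not change, so $\Phi$ is untouched.

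\textbf{Main subtlety.} Worst-case versus amortized bounds. Warehouse operations are only amortized, so for \Next, \Prev, \First, \Last, \FindMin{} we should check that the warehouse construction actually provides them in worst-case $\O(1)$ time. It does: they only follow stored pointers, namely the linked list and the root/min pointer, as in \cref{lem:heaps_fibonacci_deque} inductively. For this claim, however, amortized $\O(1)$ is all that is required, so this check is not essential.
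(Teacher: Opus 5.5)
Your proposal is correct and follows essentially the same route as the paper. Each operation is implemented as an $\O(1)$ bundle step composed with an $\O(1)$ (amortized) warehouse step. Since none of them changes any bundle size, $\Phi$ is unchanged.

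Your conditional warehouse \DecreaseKey{} and the empty-bundle skip in \First/\Last{} are harmless variants. The paper calls $W.\DecreaseKey(B, B.\FindMin)$ unconditionally, and empty bundles only exist in the middle of an operation, so the skip is never needed.
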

\begin{proof}
    We implement the operations as follows:
	\begin{itemize}[itemsep=0pt]
		\item \Next{}/\Prev{}: get the next/previous item in the bundle, if it exists. Otherwise, get the next/previous bundle in the warehouse, and return its first/last item in this bundle, if both exist. Otherwise fail.
		\item \First{}/\Last{}: return the first/last item of the first/last bundle.
		\item $\DecreaseKey(x, x')$: call $\DecreaseKey(x, x')$ on the bundle $B$ containing $x$, then call $\DecreaseKey(B, B.\FindMin)$ on the warehouse.
		\item \FindMin: call $\FindMin{}$ on the warehouse, then call $\FindMin{}$ on the returned bundle.
	\end{itemize}

	None of the operations change the potential $\Phi$. Their amortized cost is thus their worst-case cost, which is $\O(1)$.
\end{proof}

\begin{claim}
    \label{cl:quartermaster_pushright}
    The operations $\PushLeft(x)$ and $\PushRight(x)$ can be implemented in $\O(1)$ amortized time.
\end{claim}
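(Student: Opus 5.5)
We implement $\PushLeft(x)$ (and symmetrically $\PushRight(x)$) by looking at the leftmost bundle $B$ of the warehouse, obtained via the warehouse's $\First$ in $\O(1)$ time. There are two cases. If $B$ exists and $|B| < b$, we call $B.\PushLeft(x)$, which costs $\O(1)$ by \cref{cl:bundle}. If the new minimum of $B$ is now $x$, we also call $\DecreaseKey(B, x)$ on the warehouse, which costs $\O(1)$ amortized by $f$-deque-likeness. The linked-list order of items is preserved automatically, because $x$ becomes the first item of the first bundle.

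If instead $B$ is full ($|B| = b$) or the warehouse is empty, we create a fresh bundle $\{x\}$ and push it into the warehouse with $\PushLeft$. This keeps the invariant that every bundle has between $1$ and $b$ items.

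The cost analysis uses the potential $\Phi$. In the first case, the warehouse calls are $\O(1)$ amortized and $\Phi$ rises by at most $1$, since $\phi(B)$ grows by at most one. So the amortized cost is $\O(1)$.

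The second case is the main obstacle. A warehouse $\PushLeft$ costs $\O(f(\O(w))) = \O(f(w)) = \O(b)$ amortized, which is far too expensive to charge to a single insertion. Creating a singleton bundle adds no potential, because $1 \le \lceil b/2\rceil$. Potential alone therefore does not pay for this step, so we have to argue differently.

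The step is only triggered when the leftmost bundle is full, and a full bundle had to grow from size at most $\lceil b/2\rceil$ up to $b$. Bundles only grow by these pushes. Each such push raised $\phi$ by one, or, while the bundle was still below $\lceil b/2\rceil$, can be paid through a separate charge.

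The cleanest fix is to strengthen the accounting. I would add a second potential term equal to $c$ times the number of items in the leftmost and rightmost bundles, or more simply charge each push $\O(1)$ credits stored on its bundle. A new bundle is created only after the previous end bundle has received $b$ pushes since its creation, or was created full, for example by some batch operation. Its $b$ credits then pay the $\O(b)$ for the warehouse push.

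To keep everything inside $\Phi$ as the paper intends, I expect the intended argument to be the following. When the end bundle is full it has $\phi(B) = b - \lceil b/2\rceil = \lfloor b/2\rfloor = \Omega(b)$. We do not release that potential; instead we charge the new warehouse push against the fact that the previous bundle accumulated $\Omega(b)$ potential through $\Omega(b)$ prior $\O(1)$-cost pushes. We then scale $\Phi$ by a suitable constant so that every push pays $\O(1)$ toward future bundle creations.

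The possible difficulty is full bundles that arrive at the end through other operations, such as batch pops exposing an interior full bundle. I would handle this by giving every bundle (not just the end ones) the potential $\phi(B)$ scaled so that $C\cdot\lfloor b/2 \rfloor$ covers one warehouse push. When we open a new bundle next to a full $B$, we "freeze" $B$: it can never receive further pushes. The potential that paid for filling it is then accounted as spent. This is consistent as long as later deletions, which lower $\phi$, are analysed with the same scaled $\Phi$ in the subsequent claims.
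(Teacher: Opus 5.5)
There is a genuine gap, and it lies in the implementation, not in the bookkeeping. When the end bundle $B$ is full, you keep $B$ intact and push a fresh singleton bundle into the warehouse. That warehouse push is only guaranteed to cost $\O(f(w))=\O(b)$, and nothing pays for it.

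None of your proposed fixes pays for it:
\begin{itemize}
\item $\Phi$ does not drop, as you note yourself.
\item Appealing to the $\Omega(b)$ potential that $B$ ``accumulated'' while leaving $\phi(B)=\lfloor b/2\rfloor$ in place is double counting. The amortized cost of a step is its actual cost plus $\Delta\Phi$, so if $\Phi$ does not decrease, the step costs its full $\Theta(b)$.
\item Per-bundle credits fail exactly at the difficulty you flag. The credits of $B$ are spent on the first bundle created next to it, and a later batch pop can re-expose $B$ with no credits left.
\item Charging $c$ per item of the end bundles fails too. The batch pop that re-exposes $B$ would have to pay $cb$ while possibly returning a single item.
\item Freezing $B$ only makes this worse.
\end{itemize}

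In fact, no potential function can rescue your scheme. Take a quartermaster with $\Theta(w/b)$ bundles whose leftmost bundle $B$ is full. Repeat the pair ``$\PushLeft(x)$, then $\BatchPopLeft$'' $M$ times. Each $\PushLeft$ finds $B$ full and pushes a new singleton bundle into the warehouse. Each $\BatchPopLeft$ deletes exactly that singleton, so $|L|=1$. Every round therefore performs one warehouse push and one warehouse deletion. The $f$-deque-like interface bounds these together only by $\O(f(w))$, and a Fibonacci-heap warehouse can really spend $\Theta(\log n)$ per round here. The claimed budget is $\O(1)$ per round. Since \cref{lem:quartermaster} must hold for arbitrary sequences of quartermaster operations, this is a counterexample to your implementation.

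The missing idea is to split the full bundle rather than start a new one:
\begin{itemize}
\item delete the full $B$ from the warehouse;
\item split it into two halves, each of size at most $\lceil b/2\rceil$;
\item push both halves back into the warehouse;
\item push $x$ into the outer half.
\end{itemize}
Both halves have potential $0$, so $\Phi$ falls by $\lfloor b/2\rfloor=\Omega(b)$. After scaling $\Phi$, this pays for the warehouse deletion, the $\O(b)$ split, and the two warehouse pushes. The release happens no matter how $B$ came to be at the end of the order, so the re-exposure problem never arises. Your first case, where the end bundle is not full, is fine as written.
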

\begin{proof}
	We will describe and analyze $\PushRight(x)$; $\PushLeft(x)$ is analogous. The operation is described in \cref{alg:quartermaster_pushright}.
\begin{algorithm}
\caption{quartermaster.\PushRight(x)}
\label{alg:quartermaster_pushright}
\begin{algorithmic}[1]
\State $W$ is the $f$-deque-like warehouse storing the bundles 
\If{$W$ is empty}
\State insert an empty bundle into $W$\Comment{$\O(1)$}
\EndIf
\State $B \gets W.\Last$
\If{$|B| = b$}
	\State $W.\Delete(B)$ \Comment{$\O(\log d_B' + f(w)) = \O(f(w))$}
	\State $B_1, B_2 \gets $ split $B$ into two bundles, $B[:\lceil b/2\rceil]$ and $B[\lceil b/2\rceil:]$\Comment{$\O(b) = \O(f(w))$}
	\State $W.\PushRight(B_1)$, $W.\PushRight(B_2)$ \Comment{$\O(f(w))$}
	\State $B \gets B_2$
\EndIf
\State $B.\PushRight(x)$
\State $W.\DecreaseKey(B, B.\FindMin)$
\end{algorithmic}
\end{algorithm}

	In the special case where $W$ starts out empty, we insert an empty bundle in $\O(f(|W|)) = \O(f(0)) = \O(1)$ time.

The $B.\PushRight(x)$ call at the end runs in $\O(1)$ worst-case time and only changes the potential by a constant. The $\DecreaseKey$ call runs in $\O(1)$ time.

	What remains is to bound the amortized time complexity of the split step when $|B| = b$. $W.\Delete(B)$ takes $\O(\log d_B' + f(w)) = \O(f(w))$ time, where $d_B' = 1$ is the deque size of $B$ in the warehouse heap. The split step takes $\O(|B| + f(w)) = \O(b)$ worst-case time. Before the split, the potential of the bundle was $\max(0, |B| - \lceil b/2\rceil) = b - \lceil b/2 \rceil = \lfloor b/2 \rfloor$. The potential of the newly created bundles is $0$, as their size is not greater than $\lceil b/2\rceil$. Thus, the operation freed up $\Theta(b)$ potential and the total amortized cost is $\O(1)$. 
\end{proof}

\begin{claim}
    \label{cl:quartermaster_extract}
    The operations $\Delete(x)$ and $\IncreaseKey(x, x')$ can be implemented in amortized time $\O(\log d_x + f(w))$. 
\end{claim}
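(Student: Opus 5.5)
We implement $\Delete(x)$ on the bundle containing $x$ and then repair the warehouse; $\IncreaseKey(x, x')$ is handled the same way with ``delete $x$ from its bundle'' replaced by ``increase $x$'s key inside its bundle''. Let $B$ be the bundle containing $x$, found via the heap pointer of \cref{rem:pointerless}. The steps are:
\begin{enumerate}
\item call $W.\Delete(B)$;
\item call $B.\Delete(x)$ (resp.\ $B.\IncreaseKey(x,x')$), which costs $\O(|B|) = \O(b) = \O(f(w))$ by \cref{cl:bundle};
\item if $B$ is now nonempty, reinsert it at its original position with its new minimum as key; if $B$ became empty, simply discard it.
\end{enumerate}
Reinserting at the original position cannot be done with $\PushLeft$ or $\PushRight$, since the warehouse interface only appends at the ends. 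So instead of $W.\Delete(B)$ followed by a reinsertion, we use the warehouse's own $\IncreaseKey(B, B.\FindMin)$ in the $\IncreaseKey$ case and in the nonempty $\Delete$ case. This keeps $B$'s position in the warehouse deque order. If the minimum of $B$ did not change, or decreased, we call nothing or $W.\DecreaseKey$ respectively. We call $W.\Delete(B)$ only when $B$ becomes empty.

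Each warehouse call costs $\O(\log d'_B + f(\O(w))) = \O(\log d'_B + f(w))$, where $d'_B$ is the deque size of $B$ in the warehouse and we use \cref{rem:f-properties}. The key step is to relate $d'_B$ to $d_x$. Every bundle strictly between $B$ and the left end of the warehouse contains at least one item, and all those items lie strictly left of $x$ in the quartermaster's deque order. Hence $s'_B \le s_x$, and symmetrically $q'_B \le q_x$, so $d'_B \le d_x$. The warehouse call therefore costs $\O(\log d_x + f(w))$.

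It remains to bound the potential change. Deleting an item from a bundle never increases $\phi(B)$. Increasing a key leaves $|B|$ unchanged, so $\phi(B)$ is unchanged. Discarding an empty bundle removes a term equal to $0$. So $\Delta\Phi \le 0$, and the amortized cost is $\O(\log d_x + f(w)) + \O(b) = \O(\log d_x + f(w))$, using $b = f(w)$.

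The invariant $|B|\ge 1$ is maintained because empty bundles are removed. The one step to double-check is the special handling of the last remaining bundle when the quartermaster becomes empty, which \cref{alg:quartermaster_pushright} already accommodates by reinserting an empty bundle. The main obstacle is the positional issue above: making sure that bundle reordering is never needed. This is resolved by using the warehouse's $\IncreaseKey$ rather than delete-and-reinsert.
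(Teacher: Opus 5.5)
Your final algorithm (modify $x$ inside its bundle, then call $W.\IncreaseKey(B, B.\FindMin)$, or $W.\Delete(B)$ if $B$ became empty) and your accounting via $d'_B \le d_x$ and a non-increasing $\Phi$ are essentially the paper's argument. Your nonempty-bundle justification of $d'_B \le d_x$ spells out a step the paper only asserts. For a cleaner write-up, drop the initial delete-and-reinsert list and the $W.\DecreaseKey$ branch, which never occurs because deleting or increasing an item cannot lower a bundle's minimum.
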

\begin{proof}
	The implementation of $\Delete(x)$ is described in \cref{alg:quartermaster_extractmin}. On Line~2, we use our ability to query an item's bundle, guaranteed by \cref{rem:pointerless}.

\begin{algorithm}
\caption{quartermaster.\Delete$(x)$}
\label{alg:quartermaster_extractmin}
\begin{algorithmic}[1]
\State $W$ is the $f$-deque-like warehouse storing the bundles
\State $B \gets \text{the bundle containing $x$}$
\State $B.\Delete(x)$ \Comment{$\O(|B|) = \O(f(w))$}
\If{$|B| = 0$}
	\State $W.\Delete(B)$ \Comment{$\O(\log d_x + f(w))$}
\Else
	\State $W.\IncreaseKey(B, B.\FindMin)$ \Comment{$\O(\log d_x + f(w))$}
\EndIf
\State\Return $x$
\end{algorithmic}
\end{algorithm}
	The bundle potential only changes by a constant. It is thus enough to bound the cost of the $W.\Delete$ and $W.\IncreaseKey$ call. This is $\O(\log d_B' + f(w))$, where $d_B'$ is the deque size of the bundle $x$ is contained in (measured in the warehouse heap). We have $d_B' \le d_x$, and thus the cost is $\O(\log d_x + f(w))$, as needed.

 The $\IncreaseKey(x, x')$ operation is implemented similarly, except that the $|B| = 0$ case never happens. Thus the amortized cost is also $\O(\log d_x + f(w))$.
\end{proof}

\begin{claim}
    \label{cl:quartermaster_batch}
    The operations \BatchPopLeft{}, \BatchPopRight{} can be implemented in $\O(|B|)$ amortized time where $B$ is the bundle being returned.
\end{claim}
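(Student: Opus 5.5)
The plan is to make \BatchPopLeft{} return the entire leftmost bundle $B = W.\First$, and symmetrically for \BatchPopRight{} with $W.\Last$. By the invariant, every bundle has size between $1$ and $b = f(w)$, so the output size is in the allowed range $[1, f(w)]$. Since the items of $B$ are consecutive in deque order and already stored as a linked list, we can return them correctly ordered in $\O(|B|)$ time. The only step that costs more is removing $B$ from the warehouse. We call $W.\Delete(B)$, and because $B$ is the leftmost bundle its deque size in $W$ is $d_B' = 1$. By \cref{rem:delete_no_overhead}, this deletion therefore costs $\O(\log 1) = \O(1)$ amortized, with the additive $f(w)$ charged to the earlier $W.\PushLeft$ or $W.\PushRight$ that inserted the bundle. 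Finally, if $W$ becomes empty we update the leftmost and rightmost pointers in $\O(1)$.

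The subtle part is that those warehouse push costs of $\O(f(w))$ must themselves be paid for. In \cref{cl:quartermaster_pushright} they were paid out of released potential $\Theta(b)$ during splits, so that accounting is already settled. However, a bundle may be tiny, say of size $1$, while still having been charged $\O(f(w))$ at its creation, and then \BatchPopLeft{} returning it costs only $\O(|L|) = \O(1)$ in the budget. I would handle this by observing that every bundle insertion into $W$ happens either as the initial empty bundle (cost $\O(f(1)) = \O(1)$) or in a split, which inserts two bundles and is already paid by the freed $\lfloor b/2\rfloor$ potential. It follows that the additive $f(w)$ term of the later $W.\Delete$ is prepaid by the insertion, exactly as in the potential $\sum f(i)$ of \cref{rem:delete_no_overhead}, and no extra charge falls on the batch pop.

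Next I would check the change in $\Phi$. Removing $B$ removes $\phi(B) \ge 0$, so $\Phi$ does not increase, and the amortized cost is $\O(|B|) + \O(1) = \O(|B|)$. The main obstacle I expect is making sure the combined potential, $\Phi$ scaled by a constant plus the warehouse's internal potential, stays consistent. In particular, the warehouse's amortized bounds rely on its own potential, and we must confirm that pops never need to release potential they do not have. This reduces to the prepayment argument above, which I would write carefully once for both ends by symmetry.
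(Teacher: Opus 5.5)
Your proposal is correct and matches the paper's proof: pop the extreme bundle via $W.\Delete(B)$, use \cref{rem:delete_no_overhead} with $d_B' = 1$ to make that deletion $\O(1)$ amortized, and observe that removing $\phi(B) \ge 0$ cannot increase $\Phi$. Your prepayment paragraph is sound but not needed, since \cref{rem:delete_no_overhead} already keeps warehouse pushes at $\O(f(w))$, which the split accounting in \cref{cl:quartermaster_pushright} pays for.
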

\begin{proof}
	\BatchPopLeft{}/\BatchPopRight{} simply delete and return the leftmost/rightmost bundle from the warehouse. By \cref{rem:delete_no_overhead}, we can assume that the warehouse supports $\Delete$ in $\O(\log d_B)$. This is $\O(1)$, as $B$ is the leftmost/rightmost bundle. Additionally, the overall potential only decreases.
\end{proof}

Let us finish the proof of \cref{lem:quartermaster}:

\begin{proof}[Proof of \cref{lem:quartermaster}]
	First, we notice that all operations keep the necessary invariant that all bundles contain between 1 and $b$ items. Next, the claimed amortized complexity of operations follows from \cref{cl:simple_operations,cl:quartermaster_pushright,cl:quartermaster_extract,cl:quartermaster_batch}. This finishes the proof. 
\end{proof}

\subsection{Building from quartermasters}
\label{sec:heap_finish}

With bundles and quartermasters as our building blocks, we now turn our attention to finishing the construction. We start by proving that we only need to achieve the $g$-stack-like property, which simplifies the analysis.

\subsubsection{From stack-like to deque-like}

Here we present a simple construction that turns any $f$-stack-like heap into an $f$-deque-like heap. Its caveat is that even if the old heap had worst-case guarantees, the new heap only has amortized guarantees. Recall that an $f$-stack-like heap supports $\PushLeft(x)$, but not $\PushRight(x)$, and that the cost of $\Delete(x)$ and $\IncreaseKey(x, x')$ is $\O(\log s_x + f(n))$, where $s_x$ is the number of items left of $x$ (including $x$ itself). 

Although we were not able to find this exact black-box construction in the literature, the general idea is standard.

\begin{lemma}
	\label{lem:stack_implies_deque}
	If there is an $f$-stack-like heap, then there is an $f$-deque-like heap.
\end{lemma}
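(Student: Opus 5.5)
We plan to use the standard two-stacks simulation of a deque. The $f$-deque-like heap will consist of two $f$-stack-like heaps: a left heap $L$ and a right heap $R$. The deque order is the order of $L$, read from its \First{} onward, followed by the reverse order of $R$, ending at its \First{}. So the leftmost item of the deque is $L.\First$ and the rightmost is $R.\First$. With this layout, \PushLeft{} maps to $L.\PushLeft$ and \PushRight{} maps to $R.\PushLeft$.

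Each item stores a pointer to the heap containing it, as in \cref{rem:pointerless}. This lets \DecreaseKey, \IncreaseKey{} and \Delete{} be forwarded to the right heap. \FindMin{} compares the two heap minima. \Next{} and \Prev{} translate to \Next{}/\Prev{} in $L$ or $R$, with the orientation swapped for $R$. At the boundary between the two parts, \Next{} goes from $L.\Last$ to $R.\Last$ and back. \First{} and \Last{} return $L.\First$ and $R.\First$, falling back to the other heap's \Last{} when one side is empty.

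The cost of an operation on $x\in L$ is $\O(\log s^L_x + f(n))$, where $s^L_x$ is $x$'s stack size in $L$. This is at most its distance to the deque's left end, since every item of $L$ left of $x$ is also left of $x$ in the deque. So the cost is fine whenever $d_x$ is realized by the left end. The problem is when $x$ is close to the right end but deep in $L$. This happens exactly when $R$ is much smaller than $L$, or empty.

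To handle this, we maintain a balance invariant $|L|\le 3|R|+c$ and $|R|\le 3|L|+c$ for a constant $c$. Whenever it breaks, we rebuild: pop all items of both heaps in deque order using \First/\Next, and redistribute them so that each heap gets half. The left half is pushed into a fresh $L$ via \PushLeft{} from the middle outward; the right half is pushed into a fresh $R$ symmetrically. The rebuild costs $\O(n f(n))$.

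The key step is to check that the invariant yields $\log s^{L}_x = \O(\log d_x)$. Suppose $x\in L$ and $q_x<s_x$. Then $q_x\ge |R|+1$, while $s^L_x\le |L|\le 3|R|+c = \O(q_x)$. Hence $\log s^L_x=\O(\log q_x)=\O(\log d_x)$.

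The rebuild cost is amortized by a potential $\Phi=C\,f(n)\,\bigl||L|-|R|\bigr|$. Right after a rebuild, $\Phi\approx 0$. Each push or delete changes $\Phi$ by $\O(f(n))$, using that $f$ changes by at most a constant factor per unit change in $n$ (\cref{rem:f-properties}). This stays within the $\O(f(n))$ budget of \PushLeft/\PushRight; for \Delete{} it stays within the $\O(\log d_x+f(n))$ budget. Just before a rebuild, $\bigl||L|-|R|\bigr|\ge n/2-\O(1)$, so $\Phi=\Omega(n f(n))$ pays for the rebuild.

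The main obstacle is the interaction of the potential with $f(n)$ as $n$ changes. A deletion both lowers $f(n)$ and may raise $\bigl||L|-|R|\bigr|$. I would instead use $\Phi = C\,\bigl||L|-|R|\bigr|\cdot f(N)$, where $N$ is the size at the last rebuild. I would also force a rebuild whenever $n$ leaves $[N/2,2N]$. Such a rebuild is paid for by charging $\O(f(n))$ to each of the $\Omega(N)$ pushes or deletes since the last rebuild.
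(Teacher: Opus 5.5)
Your proposal is correct and follows essentially the same route as the paper: two $f$-stack-like heaps with the right one stored reversed, a constant-factor balance invariant (the paper's $|H_L|,|H_R|\ge\lfloor n/4\rfloor$ is equivalent to yours), global rebuilding into two halves at cost $\O(n f(n))$ paid for by the $\Omega(n)$ operations since the last rebuild, and the same key inequality $q_x>|R|$ together with $s_x\le|L|=\O(|R|)$.

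One small correction: with $\Phi=C f(n)\bigl||L|-|R|\bigr|$, it is pushes, not deletions, that break the per-operation $\O(f(n))$ bound. A push at which $f$ increases raises $\Phi$ by $C\bigl||L|-|R|\bigr|\,(f(n+1)-f(n))$, which can be $\Theta(n)$. Your final version, with $f(N)$ frozen and forced rebuilds whenever $n$ leaves $[N/2,2N]$, handles this correctly. It is also slightly more careful than the paper's direct charging about $f$ varying between rebuilds.
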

\begin{proof}
	The construction works as follows: We maintain the items in two $f$-stack-like heaps, $H_L$ and $H_R$, such that the linked-list order is ``items from $H_L$ in left-to-right order, followed by items from $H_R$ right-to-left order''. At all times, we maintain that the sizes of both heaps satisfy $|H_L|, |H_R| \ge \lfloor n/4\rfloor$. Whenever this condition is violated, we simply dismantle both structures, capture the linked list order and build $H_L$ and $H_R$ anew with $\lfloor n/2\rfloor$ and $\lceil n/2\rceil$ items respectively.

Whenever we are rebalancing a heap of size $n$, at least $\Omega(n)$ $\PushLeft$, $\PushRight$ and $\Delete$ operations must have happened since the last rebuild. The cost of rebalancing is $\O(n \cdot f(n))$ and we can pay for it by charging each past operation since the last rebuild an additional $\O(f(n))$ cost. Every operation is charged at most once in this way.

	We can straightforwardly implement all operations: $\PushLeft$, $\PushRight$, $\Delete$, \DecreaseKey{}, $\IncreaseKey$ pick the correct heap and perform the operation there. $\FindMin$ returns the smaller of the two heap minima, $\Next(x)$ and $\Prev(x)$ call the respective operation on the heap $x$ is in, possibly jumping into the other heap if necessary. $\First$ returns $H_L.\First$ and $\Last$ returns $H_R.\First$.

	Finally, we need to verify that the resulting heap is $f$-deque-like. Let $x$ be an item, without loss of generality in $H_L$, and let $s_x$ be its stack size in $H_L$. It is also the stack size in the compound heap. Let $d_x, q_x$ be the deque and queue size of $x$ in the compound heap, respectively. The $f$-deque-like property follows if we prove that $\log s_x = \O(\log d_x)$. We have $q_x \ge |H_R| \ge \lfloor n/4 \rfloor \ge \lfloor s_x / 4\rfloor$, where we used the fact that all of $H_R$ is right of $x$, the balancing property, and the fact that the stack size is smaller than the heap size. By definition, $d_x = \min(s_x, q_x) \ge \min(s_x, \lfloor s_x/4\rfloor) = \lfloor s_x/4\rfloor$. On the other hand, $d_x \le s_x$. Thus, $d_x$ and $s_x$ are only a constant factor apart, and indeed $\log s_x = \O(\log d_x)$ as needed.
\end{proof}

In the remainder of \cref{sec:heap_finish}, we will focus on implementing a $g$-stack-like heap.

\subsubsection{Data structure description}

Our data structure follows the sketch from \cref{fig:heap2}.

First, we start by defining a sequence of \emph{preferred sizes} $w_0, w_1, w_2, \dots$. This is done by setting $w_0$ to be a sufficiently large constant greater than $2$ and then inductively defining $w_{i+1}$ to be the smallest integer such that $2^{f(w_{i+1})} \ge w_i$. (From this it follows that $2^{f(w_{i+1} - 1)} < w_i$.)
Define $g(n)$ as the smallest integer $x$ such that $w_x \ge n$. The following lemma bounds $g$.

\begin{claim}
\label{cl:fstar}
	Assume a nondecreasing function $f$. If $f(n) \le 1/2 \cdot \log_2 n$ for $n \ge w_1$, then $w_{\log\log n} \ge n$. If $f(n) \le \log\log n$ for $n \ge w_1$, then $w_{2f^*(n)} \ge n$. In both cases, the sequence $w_i$ is increasing.

	Note that we can pick $w_0$ (and thus $w_1$) large enough so that the condition above always applies.
\end{claim}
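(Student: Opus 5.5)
The plan is to turn the defining relation $2^{f(w_{i+1})} \ge w_i$, together with its minimality $2^{f(w_{i+1}-1)} < w_i$, into explicit lower bounds on $w_{i+1}$ in terms of $w_i$. We then iterate these bounds.

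\emph{Monotonicity.} Since $w_{i+1}$ is the smallest integer with $2^{f(w_{i+1})} \ge w_i$, and $f(n) < \log_2 n$ for $n \ge w_1$ under either hypothesis, we get $2^{f(w_i)} < w_i$. So $w_i$ itself fails the defining inequality for $w_{i+1}$. Because $f$ is nondecreasing, every integer below $w_i$ fails as well. Hence $w_{i+1} > w_i$ for $i\ge 1$. For $i=0$, the inequality $w_1>w_0$ follows by choosing $w_0$ large, as the statement allows.

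\emph{Case $f \le \frac12\log_2 n$.} From $2^{f(w_{i+1})}\ge w_i$ we get $\frac12\log w_{i+1} \ge \log w_i$, that is, $w_{i+1} \ge w_i^2$. Since $w_0 \ge 2$, induction gives $w_i \ge 2^{2^i}$. Therefore $w_{\log\log n} \ge 2^{\log n} = n$. Integrality of the index is handled by rounding up, where monotonicity helps.

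\emph{Case $f \le \log\log n$.} The recursion now only gives $\log\log w_{i+1} \ge \log w_i$, so $w_{i+1} \ge 2^{w_i}$: a tower of height $i$. We need to compare this with $f^*(n)$, the number of times $f$ must be iterated from $n$ to drop below $1$. The key step is to run the iteration in reverse. Set $m_0=n$ and $m_{j+1}=f(m_j)$, so that $m_{f^*(n)}\le 1$. We want to show that $w_{2j}$ dominates $m_{f^*(n)-j}$, roughly. This is where the factor $2$ in the index comes from.

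The relation we have is between $w_i$ and $f(w_{i+1})$: by definition $f(w_{i+1}) \ge \log w_i$. The inverse direction is what is needed, namely that if $w_i \ge m$ then $w_{i+2} \ge$ something with $f(\cdot)\ge m$. From $w_{i+1}\ge 2^{w_i}$ we get $f(w_{i+2}) \ge \log w_{i+1} \ge w_i$. So two steps of the $w$-sequence overtake one application of $f^{-1}$. The precise claim to prove by induction on $j$ is: $w_{2j} \ge m_{f^*(n)-j}$, with the base case $w_0 \ge 2 > 1\ge m_{f^*(n)}$.

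For the inductive step, suppose $w_{2j}\ge m_{k}$ with $k = f^*(n)-j$, and suppose for contradiction that $w_{2j+2} < m_{k-1}$. Monotonicity of $f$ gives $f(w_{2j+2}) \le f(m_{k-1}) = m_k \le w_{2j}$. But we showed $f(w_{2j+2})\ge w_{2j}$, so at worst there is equality. This boundary case needs care. I would strengthen the bound by using $w_{i+1}\ge 2^{w_i} > w_i$ strictly, or by using the strict minimality $2^{f(w_{i+2}-1)}<w_{i+1}$. Taking $j=f^*(n)$ then yields $w_{2f^*(n)}\ge n$.

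The main obstacle is getting these off-by-one and strictness issues right, including the precise convention in the definition of $f^*$. I expect the slack from $w_{i+1}\ge 2^{w_i}\gg w_i$ to absorb them.
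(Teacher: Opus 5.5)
Your monotonicity argument, Case 1, and the key Case 2 inequality $w_i \le \log w_{i+1} \le f(w_{i+2})$ match the paper's proof. The gap is the final induction in Case 2. As you noticed, it does not close, and neither of your proposed remedies repairs it.

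With the non-strict hypothesis $w_{2j}\ge m_k$ (your $k=f^*(n)-j$), the step has to pass from $f(w_{2j+2})\ge w_{2j}\ge m_k=f(m_{k-1})$ to $w_{2j+2}\ge m_{k-1}$. That inference is invalid for a merely nondecreasing $f$: on a flat stretch, $f(a)\ge f(b)$ says nothing about $a$ versus $b$. The chain $w_i\le f(w_{i+2})$ offers no strictness to exploit either. For $f(n)=\log\log n$ and integer $w_0$ you get $w_{i+1}=2^{w_i}$ and $f(w_{i+2})=w_i$ exactly, so $w_{i+1}>w_i$ does not help. The minimality bound $2^{f(w_{i+2}-1)}<w_{i+1}$ is an upper bound on $f$ just below $w_{i+2}$, so it can never force $w_{i+2}$ to be large.

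The strictness has to come from the base case and be carried along: prove $w_{2j}>m_{f^*(n)-j}$ instead. The base case is what you already wrote, $w_0\ge 2>1\ge m_{f^*(n)}$. For the step, suppose $w_{2j+2}\le m_{k-1}$. Then monotonicity gives $w_{2j}\le f(w_{2j+2})\le f(m_{k-1})=m_k<w_{2j}$, a contradiction. At $j=f^*(n)$ this yields $w_{2f^*(n)}>n$.

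Equivalently, and this is the argument implicit in the paper's one-line conclusion, set $K=f^*(n)$, assume $w_{2K}<n$, and unwind backwards: $w_{2K-2}\le f(w_{2K})\le f(n)$, then $w_{2K-4}\le f(f(n))$, and so on down to $w_0\le f^{(K)}(n)\le 1$, contradicting $w_0\ge 2$. This direction only ever uses $a\le b\Rightarrow f(a)\le f(b)$, which is why it goes through.

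A minor point: $w_1>w_0$ does not require choosing $w_0$ large. The hypothesis applies at $n=w_1$ and gives $w_0\le 2^{f(w_1)}<w_1$ directly.
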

\begin{proof}
    If $f(n) \le 1/2 \cdot \log_2 n$, we have
    \[
	w_i
	\le
	2^{f(w_{i + 1})}
	\le
	2^{1/2 \cdot \log_2 w_{i+1}}
	=
	2^{\log_2 \sqrt{w_{i + 1}}}
	=
	\sqrt{w_{i + 1}}
	,
    \]
	which, along with $w_0 \ge 2$ implies that $w_{\log_2\log_2 n} > n$.
	It also immediately follows that $w_{i + 1} \ge w_i ^ 2 > w_i$, which follows by induction on $w_0 \ge 2 > 1$.

    Next, if $f(n) \le \log_2\log_2 n$, we have that
	\[
	w_i
	\le
	2^{f(w_{i + 1})}
	\le
	2^{\log_2\log_2 w_{i+1}}
	=
	\log_2 w_{i + 1}
	\le
	\log_2 2^{f(w_{i + 2})}
	=
	f(w_{i + 2})
	\]
	Along with $w_0 \ge 2$, this implies that $w_{2f^*(n)} \ge n$. Additionally, we have $w_i \le \log_2 w_{i + 1} < w_{i + 1}$, as needed.
\end{proof}

In our heap, we maintain roughly $g(n)$ quartermasters $Q_1, Q_2, \ldots, Q_t$. (Note that $g(n)$ and $t$ change over time.) The linked list order is maintained such that the leftmost items are in $Q_1$ and the rightmost ones in $Q_t$. We also maintain the following size invariant:

\paragraph{Size invariant} The size of each quartermaster $Q_i$ satisfies $w_i/2 \le |Q_i| \le w_i$, except for the rightmost quartermaster $Q_t$, where only the upper bound $|Q_t| \le w_t$ needs to hold.

From \cref{cl:fstar}, it immediately follows that there are $\O(g(n))$ quartermasters. Also note that whenever $x \in Q_i$, then there are at least $w_{i-1} / 2$ items to the left of $x$. This is crucial for the stack-like property, as we will see later.

The size invariant can be temporarily violated in the middle of an operation. It will however still hold that $|Q_i| = \O(w_i)$, which is important for the time complexity term in $Q_i$ to still be $\O(f(w_i))$. We verify this later in \cref{cl:fix-bounded}.

In the remainder of \cref{sec:heap_finish}, we first describe the operations on our heap, which may temporarily break our size invariant, and then we describe how we can maintain the size invariant, and analyze the maintenance cost.

\subsubsection{Heap operations}

We are ready to describe the heap operations. First, we describe all operations that do not change how items are distributed across quartermasters:

\begin{claim}[Read-only operations]
\label{cl:main-read-only}
	We can implement the functions $\FindMin$, $\Next(x)$, $\Prev(x)$, $\First$, $\Last$ in $\O(1)$ worst-case time.
\end{claim}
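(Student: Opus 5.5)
The plan is to reduce each operation to $\O(1)$ read-only quartermaster operations (each worst-case $\O(1)$ by \cref{lem:quartermaster}), plus a constant amount of extra bookkeeping kept in the top-level structure. We store the quartermasters $Q_1,\dots,Q_t$ in a doubly linked list, so that each $Q_i$ has pointers to $Q_{i-1}$ and $Q_{i+1}$, and keep pointers to $Q_1$ and $Q_t$. By \cref{rem:pointerless}, every item knows which quartermaster currently contains it.

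The order operations are then immediate:
\begin{itemize}
\item $\First$ returns $Q_1.\First$, and $\Last$ returns $Q_t.\Last$.
\item For $\Next(x)$, we look up the quartermaster $Q_i$ containing $x$ and call $Q_i.\Next(x)$. If that fails because $x$ is the last item of $Q_i$, we return $Q_{i+1}.\First$, or fail if $i=t$.
\item $\Prev(x)$ is symmetric.
\end{itemize}
Each of these performs a constant number of worst-case $\O(1)$ calls. The one care point is empty quartermasters: the size invariant only allows $Q_t$ to be small. We will also ensure that $Q_t$ is nonempty unless the whole heap is empty, so that the jump to $Q_{i+1}.\First$ always succeeds.

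The main obstacle is $\FindMin$, since a naive scan over all $t=\O(g(n))$ quartermasters is not $\O(1)$. My first attempt would be to maintain an explicit pointer to the quartermaster holding the global minimum, but keeping that pointer valid after a $\Delete$ or $\IncreaseKey$ would need a rescan.

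Instead, I would charge the maintenance to the mutating operations. We keep the $t$ quartermaster minima in a small auxiliary structure and store the current global minimum in a field, and $\FindMin$ simply reads this field. Any $\DecreaseKey$ updates the field in $\O(1)$ by comparison. After a $\Delete$, an $\IncreaseKey$, or a rebalancing step, we recompute the field by scanning the $\O(g(n))$ quartermaster minima. That scan costs $\O(g(n))$, which fits within the $\O(\log s_x + g(n))$ and $\O(g(n))$ budgets those operations already have, and a push needs only an $\O(1)$ comparison.

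With this, $\FindMin$ is a single field read and thus $\O(1)$ worst-case. The remaining task, verifying that the mutating operations preserve the field and the quartermaster links, would be deferred to their respective claims.
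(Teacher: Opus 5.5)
Your proposal is correct and follows the paper's proof essentially line for line: $\First$ and $\Last$ via $Q_1.\First$ and $Q_t.\Last$, $\Next$ and $\Prev$ through the containing quartermaster with a jump to the neighboring one, and $\FindMin$ as a read of a maintained global-minimum field that is updated by comparison on pushes and $\DecreaseKey$ and recomputed by an $\O(g(n))$ scan after $\Delete$ and $\IncreaseKey$. The only addition in your version, recomputing after rebalancing, is unnecessary, because moving bundles between quartermasters does not change the set of items or its minimum.
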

\begin{proof}
	$\FindMin$: We maintain the current global minimum and modify it during all operations: for $\PushLeft$ and $\DecreaseKey$, we check in $\O(1)$ time whether the minimum has decreased; for $\Delete$ and $\IncreaseKey$, we iterate over all quartermasters and recompute the new minimum in $\O(g(n))$ time.

	$\Next(x)$ and $\Prev(x)$: we call the analogous operation on the quartermaster $x$ is in, possibly jumping to the next/previous quartermaster and returning the first/last item there, if necessary.

	$\First$ and $\Last$: we call $Q_1.\First$ and $Q_t.\Last$, respectively.
\end{proof}

\begin{claim}
\label{cl:main-decrease}    
We can implement the operation $\DecreaseKey(x, x')$ in amortized complexity $\O(1)$. 
\end{claim}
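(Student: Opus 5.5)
The implementation is direct: we locate the quartermaster containing $x$ and delegate the operation to it. By \cref{rem:pointerless}, every item keeps a pointer to the quartermaster $Q_i$ that currently holds it, so finding $Q_i$ costs $\O(1)$. We then call $Q_i.\DecreaseKey(x, x')$. By \cref{lem:quartermaster}, and concretely by \cref{cl:simple_operations}, this costs $\O(1)$ amortized time. Inside the quartermaster, the call decreases the key in the bundle of $x$ and then calls $\DecreaseKey$ on the warehouse, which is $\O(1)$ amortized because the warehouse is an $f$-deque-like heap.

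Finally, we update the global minimum maintained for \cref{cl:main-read-only}. We compare $x'$ with the stored minimum and replace it if $x'$ is smaller, which takes $\O(1)$ worst-case time.

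It remains to check that the amortization composes. The global potential is the sum of the potentials of the individual quartermasters, together with whatever potential the size-maintenance analysis uses later. A $\DecreaseKey$ touches only one quartermaster and does not change any quartermaster's size. It therefore leaves the size invariant and any size-based potential untouched. The $\O(1)$ amortized bound of $Q_i$ then carries over to the whole structure.

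The one point that needs care is that the amortized guarantee of \cref{lem:quartermaster} must hold even though $Q_i$ is embedded in a larger structure. This holds because that guarantee concerns only the quartermaster's own potential, and $\DecreaseKey$ changes no other component's potential.
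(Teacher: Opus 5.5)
Your proposal is correct and matches the paper's proof: you find $Q_i$ via the heap pointers of \cref{rem:pointerless}, call $Q_i.\DecreaseKey(x,x')$ (amortized $\O(1)$ by \cref{cl:simple_operations}), and update the stored global minimum in $\O(1)$ time. Your added check that the operation changes no quartermaster's size is correct, and the paper leaves it implicit; it means the operation plays no role in the size-maintenance analysis, which the paper does by a charging argument rather than a potential.
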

\begin{proof}
	We call $\DecreaseKey(x, x')$ in the appropriate quartermaster storing $x$ (that we find using \cref{rem:pointerless}). Then we update the global minimum in $\O(1)$ time.
\end{proof}

Now we describe $\PushLeft(x)$ and $\Delete(x)$, which change how items are distributed across quartermasters. Namely, we insert/delete the item to/from an appropriate quartermaster, and as a result, its size constraint may become tight. We fix this by running one of two rebalancing procedures, which we call $\FixTooBig$ (used after $\PushLeft$), and $\FixTooSmall$ (used after $\Delete$). We define and analyze both later.

\begin{claim}
\label{cl:main-pushleft}
We can implement the operation $\PushLeft(x)$ such that its amortized complexity is $\O(1)$, plus the work done by $\FixTooBig$.
\end{claim}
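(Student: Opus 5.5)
The plan is to implement $\PushLeft(x)$ by delegating to the leftmost quartermaster, paying constant overhead for bookkeeping, and leaving every size-invariant repair to $\FixTooBig$, whose cost the statement excludes. Concretely, if the heap is currently empty (so $t = 0$), we first create an empty quartermaster $Q_1$ with parameter $w_1$ and set $t = 1$. This costs $\O(1)$, since its warehouse is an empty $f$-deque-like heap and creating it involves no items. Otherwise we simply use the existing $Q_1$.

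We then call $Q_1.\PushLeft(x)$. By \cref{lem:quartermaster} (more precisely \cref{cl:quartermaster_pushright}) this takes $\O(1)$ amortized time with respect to $Q_1$'s own potential. The global linked-list order stays correct because $Q_1$ holds the leftmost items and $x$ becomes the new first item of $Q_1$. Next we set the heap pointer of $x$ to $Q_1$ as in \cref{rem:pointerless}, and update the maintained global minimum by a single comparison with $x$, as described in \cref{cl:main-read-only}. All of this is $\O(1)$ worst-case work.

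The only potential concern is that the quartermaster's amortization is still valid. \Cref{lem:quartermaster} requires $Q_1$ to hold $\O(w_1)$ items. Before the push, the size invariant gives $|Q_1| \le w_1$, so afterwards $|Q_1| \le w_1 + 1 = \O(w_1)$. The invariant may now be violated by one item, and restoring it is exactly the job of $\FixTooBig$, which is analyzed separately (including the boundedness guarantee later checked in \cref{cl:fix-bounded}).

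The total amortized potential of the whole structure is the sum of the quartermaster potentials, so this operation changes only $Q_1$'s potential, and that change is already accounted for in $Q_1$'s amortized bound. Summing the pieces gives $\O(1)$ amortized time plus the work of $\FixTooBig$. I expect no real obstacle here. The one subtle point is confirming that the slightly oversized $Q_1$ still falls within the $\O(w)$ capacity assumed by \cref{lem:quartermaster}, which the $w_1 + 1$ bound settles.
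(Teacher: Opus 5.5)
Your proposal is correct and matches the paper's proof: call $Q_1.\PushLeft(x)$ at $\O(1)$ amortized cost, update the global minimum in $\O(1)$ time, and leave any size-invariant violation to $\FixTooBig$. Your extra bookkeeping (creating $Q_1$ when the heap is empty, setting the heap pointer, and checking $|Q_1| \le w_1 + 1 = \O(w_1)$ so the quartermaster capacity assumption still holds) is sound and simply spells out what the paper leaves implicit.
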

\begin{proof}
	We run $Q_1.\PushLeft(x)$. This takes $\O(1)$ amortized time. Then we update the global minimum in $\O(1)$ time. Finally, we run $\FixTooBig$.
\end{proof}

\begin{claim}
\label{cl:main-delete}    
	We can implement the operations $\Delete(x), \IncreaseKey(x, x')$ such that their amortized complexity is $\O(\log s_x + g(n))$, plus the work done by $\FixTooSmall$ (in the case of $\Delete$).
\end{claim}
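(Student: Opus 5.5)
To prove \cref{cl:main-delete}, I would implement $\Delete(x)$ and $\IncreaseKey(x, x')$ by first locating the quartermaster $Q_i$ containing $x$ via \cref{rem:pointerless}. We then call the corresponding operation on $Q_i$ and afterwards recompute the global minimum by calling $\FindMin$ on each of the $t = \O(g(n))$ quartermasters. For $\Delete$ we finish by running $\FixTooSmall$, whose cost is excluded from the claim. The recomputation of the minimum costs $\O(g(n))$ worst-case by \cref{cl:main-read-only} and \cref{lem:quartermaster}, since $t = \O(g(n))$ by \cref{cl:fstar} and the size invariant.

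The main cost is the quartermaster operation. By \cref{lem:quartermaster} (concretely \cref{cl:quartermaster_extract}), it takes amortized $\O(\log d'_x + f(w_i))$, where $d'_x$ is the deque size of $x$ inside $Q_i$. Since $d'_x \le s'_x \le s_x$ (the items left of $x$ in $Q_i$ are also left of $x$ globally), the first term is $\O(\log s_x)$. The key step is bounding $f(w_i)$ by $\O(\log s_x)$ when $i \ge 1$.

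For $i=1$, $f(w_1)$ is a constant, since $w_1$ is a constant determined by the constant $w_0$. For $i \ge 2$, the size invariant gives $|Q_{i-1}| \ge w_{i-1}/2$, and all these items lie to the left of $x$, so $s_x \ge w_{i-1}/2$. The definition of the preferred sizes gives $2^{f(w_i - 1)} < w_{i-1}$. Hence $f(w_i - 1) < \log w_{i-1} \le \log(2 s_x) = \O(\log s_x)$. Using $f(w_i) = \O(f(w_i - 1))$ from \cref{rem:f-properties}, we get $f(w_i) = \O(\log s_x)$, with the constant term absorbed as per the paper's $\O(1)$ charging convention. The total is therefore $\O(\log s_x + g(n))$.

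The step I expect to need care is that the size invariant may be temporarily violated mid-operation. I would argue that it holds at the start of each $\Delete$ or $\IncreaseKey$, because fixes restore it after every operation, so the lower bound on $|Q_{i-1}|$ is valid when $x$ is accessed. We also need $|Q_i| = \O(w_i)$ so that the warehouse overhead is $\O(f(w_i))$, which holds because \cref{lem:quartermaster} was instantiated with parameter $w_i$.
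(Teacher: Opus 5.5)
Your proposal is correct and follows essentially the same argument as the paper. You charge $\O(g(n))$ for recomputing the minimum, bound the quartermaster operation by $\O(\log s_x + f(w_i))$, and absorb $f(w_i)$ using $s_x \ge w_{i-1}/2 > 2^{f(w_i-1)}/2$ together with $f(w_i) = \O(f(w_i-1))$ from \cref{rem:f-properties}. Your explicit handling of $i=1$ makes precise what the paper leaves implicit; note, though, that $|Q_i| = \O(w_i)$ follows from the size invariant (which you correctly observe holds when the operation starts), not from the choice of parameter $w_i$ in \cref{lem:quartermaster}.
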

\begin{proof}
	We apply the corresponding operation on the quartermaster $Q_i$ that $x$ is in (found using \cref{rem:pointerless}). Then we recalculate the global minimum in $\O(g(n))$ time by iterating over all quartermasters. Finally, we run $\FixTooSmall$.

	The cost of the quartermaster operation is $\O(\log s_x' + f(w_i))$ where $s_x' \le s_x$ is the stack size of $x$ in $Q_i$. Note that $s_x \ge |Q_{i-1}| \ge w_{i-1} / 2$, and by using the definition of $w_i$ as the smallest integer such that $2^{f(w_i)} \ge w_{i-1}$, we get $2^{f(w_i - 1)} < w_{i-1}$, and thus $s_x > 2^{f(w_i - 1)} / 2$ and $\log s_x > f(w_i - 1) - 1$.
	The complexity is thus $\O(\log s_x + f(w_i)) = \O(\log s_x + f(w_i - 1)) = \O(\log s_x)$. In the first part, we use \cref{rem:f-properties} to conclude that $f(w_i) = \O(f(w_i - 1))$.
\end{proof}

\subsubsection{Maintaining the size invariant}
\label{sec:size_invariant}

What remains to be shown is how to maintain the size invariant using the $\FixTooBig$ and $\FixTooSmall$ operations.

Both operations are implemented in the most straightforward way possible. $\FixTooBig$ iterates with $i = 1, \ldots, t$, and for a fixed $i$, it repeatedly deletes the rightmost bundle from $Q_i$ (using $\BatchPopRight$) and gives it to $Q_{i + 1}$ (using repeated $\PushLeft$), until $|Q_i| \le w_i$. $\FixTooSmall$ similarly iterates with $i = 1, \ldots, t - 1$, and for a fixed $i$, deletes a bundle from $Q_{i + 1}$ (using $\BatchPopLeft$) and gives it to $Q_i$, until $|Q_i| \ge w_i/2$. Whenever we want to give items to a nonexistent quartermaster $Q_{t+1}$, we create it and increment $t$, and whenever the last quartermaster gets empty, we delete it and decrement $t$. We also note that our bounds may break for the first $\O(1)$ quartermasters, which is not an issue, since they only store $\O(1)$ items anyway, and we can handle those items separately if needed.

Since the four quartermaster operations $\PushLeft$, $\PushRight$, $\BatchPopLeft$ and \BatchPopRight{} all run in $\O(1)$ time per item inserted/deleted, we can bound the total time complexity of $\FixTooBig$ and $\FixTooSmall$ by the total number of item moves across the whole lifetime of the heap.

\paragraph{Proof strategy} 
Fix a quartermaster $Q_i$ and denote by $F_i$ the collection of items in the leftmost $i$ quartermasters at a given time. An item can enter $F_i$ in two ways: by being inserted into $Q_1$ during $\PushLeft$, or by being moved from $Q_{i+1}$ to $Q_i$ during $\FixTooSmall$. Similarly, it can leave $F_i$ either by being deleted from a quartermaster during $\Delete$, or by being moved from $Q_i$ to $Q_{i+1}$ during $\FixTooBig$. The high-level idea is that the number of $\PushLeft$ and $\Delete$ calls is known, and we can use them to bound the number of item moves.



We start by proving that if a $\FixTooBig$ or $\FixTooSmall$ modified a quartermaster $Q_i$, then even after the operation, the size of $F_i$ must be right at the top/bottom size threshold. First a technical lemma about $|F_i|$ being dominated by $|Q_i|$:

\begin{claim}
	\label{cl:w_i-dominates}
	It holds that $w_1 + \ldots + w_{i - 1} + f(w_{i+1}) = o(w_i)$.
	At any given point, it holds that $|Q_1| + |Q_2| + \ldots + |Q_{i-1}| = o(w_i)$.
\end{claim}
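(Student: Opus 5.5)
The first statement is pure arithmetic about the preferred sizes, and the second follows from it once we bound the quartermaster sizes. For the first part, I will use the recursion from the definition of $w_{i+1}$ together with \cref{cl:fstar}. In both regimes for $f$, the sequence grows very fast.

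In the $\frac12\log$ regime we have $w_{j+1}\ge w_j^2$. In the $\log\log$-or-slower regime we have $w_{j+1}\ge 2^{w_j}$, since $w_j\le \log_2 w_{j+1}$ as derived in the proof of \cref{cl:fstar}. In either case $w_{j}\le \sqrt{w_{j+1}}$, so
\[
w_1+\dots+w_{i-1}\le \sum_{j<i} w_i^{2^{-(i-j)}}\le \sqrt{w_i}+ (i-2)\,w_i^{1/4}.
\]
Since $i\le \log\log w_i+O(1)$ by \cref{cl:fstar}, this is $O(\sqrt{w_i})=o(w_i)$.

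For the term $f(w_{i+1})$, I will use minimality of $w_{i+1}$: we have $2^{f(w_{i+1}-1)}<w_i$, hence $f(w_{i+1}-1)<\log w_i$. By \cref{rem:f-properties}, $f(w_{i+1})=O(f(w_{i+1}-1))=O(\log w_i)=o(w_i)$. Summing the two estimates gives the first claim. The only care needed is that the $o(\cdot)$ is uniform in $i$, which holds because $w_0$ is a large constant and the sequence is increasing.

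For the second part, I plan to show that at any point, including in the middle of a $\FixTooBig$ or $\FixTooSmall$, every quartermaster satisfies $|Q_j|\le w_j+O(f(w_j))$, or at worst $|Q_j|\le w_j+f(w_{j+1})+O(1)$. The reasoning is as follows. Outside of the fix procedures, the size invariant gives $|Q_j|\le w_j$. A single $\PushLeft$ adds one item. During $\FixTooBig$, $Q_j$ receives one bundle at a time from $Q_{j-1}$, of size at most $f(w_{j-1})$, and it is processed as soon as it exceeds $w_j$. During $\FixTooSmall$, $Q_j$ receives bundles from $Q_{j+1}$ of size at most $f(w_{j+1})$, but only while $|Q_j|<w_j/2$, so it ends at most $w_j/2+f(w_{j+1})$.

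Hence $|Q_1|+\dots+|Q_{i-1}|\le \sum_{j<i}\bigl(w_j+f(w_{j+1})+1\bigr)$. Each $f(w_{j+1})$ with $j+1\le i$ is at most $\log w_j$ by the minimality argument above, so the whole sum is at most $2(w_1+\dots+w_{i-1})+O(i)$, which is $o(w_i)$ by the first part.

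I expect the main obstacle to be this second step, specifically that the sizes stay within this slack during intermediate states of $\FixTooBig$ and $\FixTooSmall$. I would argue it by induction over the loop index of each procedure, noting that each loop moves whole bundles one at a time and stops as soon as its threshold is crossed. The arithmetic of the first part is routine.
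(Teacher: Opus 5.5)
Your first part follows the paper's argument. The squaring growth $w_{j+1}\ge w_j^2$ gives $\sum_{j<i}w_j=\O(w_{i-1})=\O(\sqrt{w_i})$. Minimality of $w_{i+1}$ together with \cref{rem:f-properties} gives $f(w_{i+1})=\O(\log w_i)$.

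There are two harmless slips. First, $i\le\log\log w_i+\O(1)$ comes from $w_{j+1}\ge w_j^2$, not from \cref{cl:fstar}, and you do not need it since the sum is geometric. Second, $f(w_{j+1})\ge\log w_j$ holds by the definition of $w_{j+1}$. So minimality only yields $f(w_{j+1})=\O(\log w_j)$, not $f(w_{j+1})\le\log w_j$; your final estimate is unaffected. Your \FixTooSmall{} case in the second part is correct.

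The step that fails as written is the \FixTooBig{} case. \FixTooBig{} does not process $Q_j$ as soon as it exceeds $w_j$. It first keeps popping bundles from $Q_{j-1}$ until $|Q_{j-1}|\le w_{j-1}$, so $Q_j$ may receive several bundles. Moreover, the last pop from each $Q_k$ can send up to $f(w_k)-1$ items more than $Q_k$'s own excess, so the excess accumulates along the chain. After a single \PushLeft{}, $Q_j$ can reach roughly $w_j+\sum_{k<j}f(w_k)$ items, not $w_j+f(w_{j-1})$.

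Your final bound still holds, because $f(w_k)=\Theta(\log w_{k-1})$ and $\log w_{k-1}$ at least doubles with $k$, so the accumulated excess is $\O(f(w_{j-1}))$. But that accumulation is the induction you would actually have to carry out.

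A cleaner repair avoids per-quartermaster bookkeeping. \FixTooBig{} only moves items from $Q_k$ to $Q_{k+1}$. Hence, after the one \PushLeft{}, the prefix $|Q_1|+\dots+|Q_{i-1}|$ can only decrease, and it stays at most $w_1+\dots+w_{i-1}+1=o(w_i)$ by the first part.

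For comparison, the paper's proof of this claim simply invokes the size invariant $|Q_j|\le w_j$. It treats intermediate states only later, in \cref{cl:fix-bounded}, via a bound of the same prefix type, $|Q_i|\le 1+w_0+\dots+w_i$.
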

\begin{proof}
	We can bound the first $i - 1$ terms: By \cref{cl:fstar}, we have either $w_j \le \sqrt{w_{j + 1}}$ or $w_j \le \log{w_{j+1}}$. In either case, we get $\sum_{j=1}^{i-1} w_j = \O(w_{i-1}) = \O(\sqrt{w_i}) = o(w_i)$.

	For the last term, use the definition: $w_i \ge 2^{f(w_{i+1} - 1)}$. Taking a logarithm on both sides: $f(w_{i + 1} - 1) \le \log w_i$, so $f(w_{i+1}) = \O(\log w_i) = o(w_i)$ by \cref{rem:f-properties}. 

The second half of the claim follows immediately from the fact that $|Q_j| \le w_j$.
\end{proof}

\begin{claim}
\label{cl:fix-bounded}
	At a given time, let $F_i = \bigcup_{j=1}^i Q_j$. After a $\FixTooBig$ call that decreased $|F_i|$, we have $w_i - o(w_i) < |F_i|$. After a $\FixTooSmall$ call that increased $|F_i|$, we have $|F_i| < w_i/2 + o(w_i)$.

	At all times (even in the middle of an operation), we have $|Q_i| = \O(w_i)$.
\end{claim}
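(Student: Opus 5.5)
We analyse a single call of each procedure at the moment it finishes processing index $i$. Two facts are used throughout. First, each \BatchPopRight{} or \BatchPopLeft{} on $Q_i$ moves a batch of between $1$ and $f(w_i)$ items (\cref{lem:quartermaster}). Second, \cref{cl:w_i-dominates} gives $|Q_1|+\dots+|Q_{i-1}| = o(w_i)$ and $f(w_{i+1}) = o(w_i)$; in particular also $f(w_i)=o(w_i)$, since $f(w_i)\le f(w_{i+1})$ by monotonicity.

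\textbf{The $\FixTooBig$ case.} $|F_i|$ can only decrease when items leave $Q_i$ for $Q_{i+1}$, so iteration $i$ must have popped at least one batch from $Q_i$. Popping is done only while $|Q_i|>w_i$, so just before the last pop we had $|Q_i|>w_i$. That pop removes at most $f(w_i)$ items, hence afterwards $|Q_i|> w_i - f(w_i)$. Later iterations $j>i$ only move items between $Q_j$ and $Q_{j+1}$, so they do not change $F_i$. Therefore at the end $|F_i|\ge |Q_i| > w_i - o(w_i)$.

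\textbf{The $\FixTooSmall$ case.} An increase of $|F_i|$ means that in iteration $i$ some batch was moved from $Q_{i+1}$ into $Q_i$. We only pull while $|Q_i|<w_i/2$, so before the last pull $|Q_i|<w_i/2$. The last pull adds at most $f(w_{i+1})$ items, giving $|Q_i|< w_i/2+f(w_{i+1})$ afterwards. Adding $|F_{i-1}|=o(w_i)$ gives $|F_i| < w_i/2+o(w_i)$. One subtlety is whether an earlier iteration $j<i$ could have removed items from $Q_i$ (when $j=i-1$) after iteration $i$ ran. The iterations go upward in $i$, so iteration $i-1$ runs before iteration $i$, and nothing touches $F_i$ afterwards. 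We can therefore take the state at the end of iteration $i$ as the final state.

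\textbf{The bound $|Q_i|=\O(w_i)$ at all times.} This is the step that needs the most care. Outside the fix procedures the size invariant holds up to one item, so $|Q_i|\le w_i+1$. During $\FixTooBig$, $Q_i$ grows only by receiving a single batch from $Q_{i-1}$, of size at most $f(w_{i-1})\le w_i$. During $\FixTooSmall$, $Q_i$ grows by at most one batch beyond $w_i/2$, that is, by at most $f(w_{i+1})=o(w_i)$. Deletions never increase sizes. Combining these cases, $|Q_i|\le 2w_i+\O(1)=\O(w_i)$ at every moment, which is what keeps the warehouse overhead at $f(\O(w_i))=\O(f(w_i))$.

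Two issues need checking. One is that in $\FixTooBig$, iteration $i$ starts only after $Q_i$ has received its batches from $Q_{i-1}$; the loop handles this because it processes indices in increasing order. The other is that at most one batch enters $Q_i$ per operation. This holds because a single \PushLeft{} or \Delete{} perturbs $|F_{i-1}|$ by one, so at most one batch crosses each boundary per operation. This one-batch property is the point I expect to need the most verification.
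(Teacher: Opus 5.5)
Your arguments for the first two assertions are correct and follow the paper's proof: the last batch overshoots by at most $f(w_i)$, resp.\ $f(w_{i+1})$, and \cref{cl:w_i-dominates} gives $|F_{i-1}| = o(w_i)$. Your extra care about the order of iterations is fine.

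\textbf{The gap is in the third assertion.} The one-batch property you flagged is false, and so is the growth bound of $f(w_{i-1})$ for $Q_i$ that it is supposed to give. Your justification reasons about the net change of $|F_{i-1}|$. That net change does not control how many items or batches cross a boundary, because a fix step can overshoot and popped batches may be as small as one item.

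Here is a configuration the invariants allow. Let $Q_1$ hold $w_1$ items with a full rightmost bundle of $f(w_1)$ items. Let $Q_2$ hold exactly $w_2$ items whose rightmost $f(w_1)$ bundles are singletons; bundles can shrink to size $1$ through deletions. After a \PushLeft{}, iteration $1$ moves one batch of $f(w_1)$ items into $Q_2$. Iteration $2$ must then move $f(w_1)$ separate batches into $Q_3$. Now suppose instead that only $f(w_1)-1$ of those bundles are singletons and the next one is full. Then $Q_3$ receives $f(w_1)-1+f(w_2) > f(w_2)$ items in a single operation, whenever $f(w_1)\ge 2$.

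\textbf{How to repair it.} The conclusion $|Q_i| = \O(w_i)$ still holds; count items of $F_i$ instead of batches, as the paper does.
\begin{itemize}
\item Before the operation, $|Q_j|\le w_j$ for all $j$, so right after the \PushLeft{} we have $|F_i| \le 1 + w_1 + \dots + w_i$.
\item During the subsequent $\FixTooBig$, no item ever enters $F_i$. Every move goes from some $Q_j$ to $Q_{j+1}$: for $j<i$ it stays inside $F_i$, for $j=i$ it leaves $F_i$, and for $j>i$ it does not touch $F_i$.
\item Hence at every moment $|Q_i| \le |F_i| \le 1 + w_1 + \dots + w_i = w_i + o(w_i)$, by \cref{cl:w_i-dominates}.
\end{itemize}
This is the paper's bound $1 + w_0 + \dots + w_i$, which it obtains by induction on $i$. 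Your $\FixTooSmall$ bound $|Q_i| < w_i/2 + f(w_{i+1})$ is correct and can stay as it is.
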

\begin{proof}
	Let us have a $\FixTooBig$ call that decreases $|F_i|$. This means some items have moved from $Q_i$ to $Q_{i+1}$. At the beginning, we necessarily have $|Q_i| > w_i$. We then repeatedly move $x \in [1, f(w_i)]$ items from $Q_i$ to $Q_{i+1}$ until $|Q_i| \le w_i$. It follows that when we stop, we have $w_i - f(w_i) < |Q_i| \le |F_i|$. We immediately get $|F_i| > w_i - o(w_i)$.

	We can proceed similarly in $\FixTooSmall$: if it increases $|F_i|$, then we start with $|Q_i| < w_i/2$ and then we repeatedly increment it by $x \in [1, f(w_{i+1})]$. Thus we end up with $|Q_i| < w_i/2 + f(w_{i + 1})$, and adding the size bound, we obtain: $|F_i| < (w_1 + \ldots + w_{i-1} + f(w_{i+1})) + w_i/2 = w_i/2 + o(w_i)$.

	Finally, the only time we have $|Q_i| \ge w_i$ is temporarily during $\FixTooBig$. We can show by induction on $i$ that even then, we have $|Q_i| \le 1 + w_0 + w_1 + \ldots + w_i = w_i + o(w_i) = \O(w_i)$.
\end{proof}

\begin{lemma}
\label{lem:fix-cost}
	Let $i$ be a sufficiently large constant (dependent only on $f$). Consider the heap's lifetime until some time $x$.
	Let $M$ be the total number of item moves from $Q_i$ to $Q_{i+1}$ during this lifetime. Let $P$ be the number of $\PushLeft$ calls that happened while the heap had at least $i$ quartermasters. Then $P \ge M$.
\end{lemma}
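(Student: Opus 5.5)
We track the quantity $|F_i|$, the number of items in the leftmost $i$ quartermasters, and use a charging argument over the heap's lifetime. The goal is to show that each maximal period during which $\FixTooBig$ moves items out of $F_i$ is preceded by enough $\PushLeft$ calls. The only ways $|F_i|$ increases are a $\PushLeft$ (by exactly one) or a $\FixTooSmall$ moving items from $Q_{i+1}$ into $Q_i$. The only ways it decreases are a $\Delete$ (by one) or a $\FixTooBig$ move from $Q_i$ to $Q_{i+1}$, and the latter are exactly the $M$ moves we want to count.

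The key step is a potential argument on $|F_i|$ relative to the window given by \cref{cl:fix-bounded}. After any $\FixTooSmall$ that increased $|F_i|$, we have $|F_i| < w_i/2 + o(w_i)$. A $\FixTooBig$ touches $Q_i$ only when $|Q_i| > w_i$, so at that moment $|F_i| > w_i$. Between these two events, $|F_i|$ must climb by at least $w_i/2 - o(w_i)$, and $\FixTooSmall$ cannot supply this climb: $\FixTooSmall$ only acts when $|Q_i| < w_i/2$, which forces $|F_i| < w_i/2 + o(w_i)$ by \cref{cl:w_i-dominates}. So the climb is paid for by $\PushLeft$ calls. Similarly, after a $\FixTooBig$ that decreased $|F_i|$, we have $|F_i| > w_i - o(w_i)$, so $|F_i|$ already sits near the top threshold. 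Each $\FixTooBig$ episode at level $i$ removes exactly $|Q_i| - (\text{final size}) \le (\text{excess over } w_i) + f(w_i)$ items. The excess above the previous end-state, which was at least $w_i - f(w_i)$, came one-for-one from $\PushLeft$ calls. A single $\PushLeft$ can create an excess of only about one, yet a bundle pop may move up to $f(w_i)$ items. This is the main obstacle: a naive count gives $M \le P\cdot f(w_i)$, not $P \ge M$.

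To resolve it, I would define a potential $\Psi = |F_i| - (w_i - f(w_i))$, clipped appropriately, and argue per episode. Consider the moves made by a $\FixTooBig$ episode. Since the previous $\FixTooBig$ episode left $|F_i| \le w_i + o(w_i)$, or since the last $\FixTooSmall$ left $|F_i|$ near $w_i/2$, the number of $\PushLeft$ calls since then is at least the increase in $|F_i|$ minus the decrease from $\Delete$s. The moves in the current episode equal the drop in $|F_i|$, which is at most the net increase since the previous episode's end plus the slack $f(w_i) + o(w_i)$. For the $\FixTooSmall$-preceded case, the gap $w_i/2 - o(w_i)$ absorbs the slack because $f(w_i) = o(w_i)$, using \cref{cl:w_i-dominates} and $i$ large. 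For consecutive $\FixTooBig$ episodes, I would sharpen the bound by noting that each episode stops at the first moment $|Q_i| \le w_i$, so the end state is not arbitrary. Summing the telescoping differences of $|F_i|$ over episodes then gives total moves $M \le P$, with all slack terms absorbed by the large gaps forced between episodes. The last thing to check is that every counted $\PushLeft$ occurred while $t \ge i$, which holds because moves out of $Q_i$ require $Q_i$ to exist.
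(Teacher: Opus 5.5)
Your setup is right: you track $|F_i|$, you note that it rises only via \PushLeft{} or moves from $Q_{i+1}$ into $Q_i$ and falls only via \Delete{} or moves to $Q_{i+1}$, and you correctly identify the real obstacle, namely that one push can trigger a pop of up to $f(w_i)$ items. But the step meant to overcome that obstacle fails as written. You argue per \FixTooBig{} episode and say the per-episode slack $f(w_i)+o(w_i)$ is absorbed by ``the large gaps forced between episodes.'' There is no such gap between two consecutive \FixTooBig{} episodes at level $i$ when no \FixTooSmall{} moves items into $Q_i$ in between. One episode can end with $|Q_i| = w_i$. A single later push, cascading through full $Q_1,\dots,Q_{i-1}$, can then start the next episode, which pops a bundle of up to $f(w_i)$ items. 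Your proposed sharpening, that an episode stops at the first moment $|Q_i|\le w_i$, only places the end state in $(w_i - f(w_i), w_i]$. That interval is exactly the slack, so the per-episode bound does not improve. Any inequality proved episode by episode and then summed loses up to $f(w_i)$ per episode, and the proposal never supplies the ingredient that prevents this.

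The fix is to stop bounding episodes separately. The paper cuts the history only at reset events: moves from $Q_{i+1}$ into $Q_i$, or times when $Q_i$ is empty. It then telescopes $|F_i|$ exactly across a whole segment, which may contain many \FixTooBig{} episodes. Intermediate end states cancel, and only the two boundary values remain: $|F_i| < w_i/2 + o(w_i)$ at the start and $|F_i| > w_i - o(w_i)$ after the last move out. This gives $P_s - M_s \ge D_s + w_i/2 - o(w_i) \ge 0$. Your remark about summing telescoping differences points in this direction, but the slack then appears once per segment and is absorbed by the single gap at the segment's start, not once per episode.

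Alternatively, your clipped potential can be completed. Take $\Psi = \max(0, |F_i| - (w_i - f(w_i)))$ and check each event:
\begin{itemize}
\item A push raises $\Psi$ by at most $1$. It raises $\Psi$ by $0$ when fewer than $i$ quartermasters exist, since then $|F_i| = o(w_i)$.
\item Every pop from $Q_i$ leaves $|Q_i| > w_i - f(w_i)$, so each moved item lowers $\Psi$ by exactly $1$.
\item Deletes never raise $\Psi$.
\item For $i$ large, moves from $Q_{i+1}$ into $Q_i$ occur while $|F_i| < w_i/2 + o(w_i) < w_i - f(w_i)$, so $\Psi$ stays $0$.
\end{itemize}
Since $\Psi$ starts at $0$ and stays nonnegative, $M \le P$. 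Either version closes the gap; as written, your proposal asserts the key step rather than proving it.
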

\begin{proof}
	Let us look at the heap's history and cut it into maximal segments such that in each segment, $Q_i$ was always nonempty and there were no item moves from $Q_{i+1}$ to $Q_i$. Furthermore, trim each segment such that it ends on a $\FixTooBig$ that moves items from $Q_i$ to $Q_{i+1}$ right at the end of the segment. Segments with no $\FixTooBig$ are discarded. Let $P_s$ and $M_s$ be the number of pushes and item moves from $Q_i$ to $Q_{i+1}$ during this segment, respectively. We will prove that $P_s \ge M_s$, unless $i$ is a sufficiently small constant.

	Note that at the start of the segment, we have $|F_i| < w_i/2 + o(w_i)$. That is because either $Q_i$ was empty before the start of the segment and we have $|F_i| = o(w_i)$, or there was a $\FixTooSmall$ call that moved items from $Q_{i+1}$ to $Q_i$, after which $|F_i| < w_i/2 + o(w_i)$. On the other hand, at the end of the segment, a~$\FixTooBig$ call moved items from $Q_i$ to $Q_{i+1}$ and after it, we have $|F_i| \ge w_i - o(w_i)$.
	
	Every item move from $Q_i$ to $Q_{i+1}$ causes $|F_i|$ to decrease by one. $\Delete$s also decrease $|F_i|$. The only way $|F_i|$ can increase is by $\PushLeft$ -- crucially, that is because we defined segments so that there are no moves from $Q_{i+1}$ to $Q_i$. Thus, at least $(w_i - o(w_i)) - (w_i/2 + o(w_i)) + M_s = w_i/2 + M_s - o(w_i)$ $\PushLeft$ operations must have happened in this segment, and we have that $P_s - M_s \ge w_i/2 - o(w_i)$. Note that the right-hand side depends only on $i$ and $f$. Since we assume that $i$ is a sufficiently large constant for a given $f$, we have $P_s - M_s \ge 0$. By summing this up over all segments, we get the needed bound $P \ge M$. Note that the segments were disjoint, we summed over all moves and over a subset of $\PushLeft$s, and thus the bound is valid.
\end{proof}

\begin{lemma}
	\label{cl:total-fix-cost}
	The amortized cost of $\FixTooBig$ and $\FixTooSmall$ is $\O(t)$, where $t = \O(g(n))$ is the current number of quartermasters.
\end{lemma}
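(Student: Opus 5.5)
The plan is to bound the total number of item moves between consecutive quartermasters over the heap's lifetime, since each moved item costs $\O(1)$ amortized time. The four quartermaster operations $\PushLeft$, $\PushRight$, $\BatchPopLeft$, $\BatchPopRight$ all cost $\O(1)$ per item (\cref{lem:quartermaster}), and checking the sizes of the $t$ quartermasters costs $\O(t)$ per call. So the total cost of all $\FixTooBig$ and $\FixTooSmall$ calls is $\O(\sum_{\text{calls}} t + \sum_i (M_i + M_i'))$. Here $M_i$ counts moves from $Q_i$ to $Q_{i+1}$ and $M_i'$ counts moves from $Q_{i+1}$ to $Q_i$. The $\O(t)$ scan term is already within the claimed bound.

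The first step bounds each $M_i$ by \cref{lem:fix-cost}. For every $i \ge i_0$, where $i_0$ is the constant from that lemma, we have $M_i \le P_i$. Here $P_i$ is the number of $\PushLeft$ calls made while the heap had at least $i$ quartermasters. Summing over $i$ charges each $\PushLeft$ once for every level $i \le t$ existing at its time, that is, $\O(t)$ per $\PushLeft$. The levels $i < i_0$ hold only $\O(1)$ items in total. They are handled separately: their moves are charged as $\O(1)$ per operation, or the first $i_0$ quartermasters are treated as a single constant-size buffer, as the paper already permits.

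The second step handles $M_i'$. I would prove the mirror-image statement: $M_i' \le D_i + \O(M_i)$, where $D_i$ is the number of $\Delete$ calls made while at least $i$ quartermasters existed. The argument uses the symmetric segmentation. Cut the history into maximal segments with no move from $Q_i$ to $Q_{i+1}$, each ending with a $\FixTooSmall$ that moves items into $Q_i$. By \cref{cl:fix-bounded}, at the start of a segment either $|F_i| > w_i - o(w_i)$ (after a $\FixTooBig$) or the segment starts when $Q_{i+1}$ is first created. At the end, $|F_i| < w_i/2 + o(w_i)$. Within the segment, $|F_i|$ grows only through $\PushLeft$ and through moves in. Hence the deletes satisfy $D_s \ge w_i/2 - o(w_i) + M'_s$.

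The main obstacle is the boundary case. This is a segment that begins when $Q_{i+1}$ is created or when $F_i$ is below threshold for reasons other than a $\FixTooBig$. Then $|F_i|$ did not start near $w_i$, and the counting does not immediately give $D_s \ge M'_s$. I would try to handle it by noting that $Q_{i+1}$ is created only by a move out of $Q_i$, which comes from a $\FixTooBig$ leaving $|F_i| > w_i - o(w_i)$. Failing that, I would charge such a segment's $\O(w_i)$ moves to the at least $w_i - o(w_i)$ pushes that must have filled $F_i$ beforehand. This gives $M_i' \le D_i + \O(P_i)$.

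Finally, summing over $i \le t$ charges $\O(t) = \O(g(n))$ to each $\PushLeft$ and each $\Delete$. Using \cref{cl:fstar}, $t$ at the time of the charged operation is $\O(g(n))$ up to the slack from $f(\O(n)) = \O(f(n))$. This gives amortized cost $\O(t)$ per $\FixTooBig$ and $\FixTooSmall$, as claimed.
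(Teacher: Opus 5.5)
Your proof is correct. Your handling of the \FixTooBig{} moves is the paper's: apply \cref{lem:fix-cost} at every sufficiently large level, charge the constantly many small levels directly, and charge each $\PushLeft$ once per level present at its time.

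The two routes differ for \FixTooSmall{}. You set up a second, mirror-image segmentation that charges moves from $Q_{i+1}$ to $Q_i$ to $\Delete$ operations. The paper settles this in one line. Items enter the heap only through $Q_1$, so each time an item moves from $Q_{i+1}$ to $Q_i$, that same item must earlier have moved from $Q_i$ to $Q_{i+1}$. An item's crossings of this boundary therefore alternate, starting with an upward one. Hence $M_i' \le M_i$, and the \FixTooSmall{} cost is charged to the \FixTooBig{} cost. In particular, your fallback bound $M_i' \le D_i + \O(M_i)$ holds trivially.

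Your segmentation is nonetheless sound, and the boundary case you flag is not a real obstacle. A maximal segment with no move from $Q_i$ to $Q_{i+1}$ begins in one of two ways:
\begin{itemize}
\item right after such a move, when $|F_i| > w_i - o(w_i)$ by \cref{cl:fix-bounded};
\item at time zero, in which case $Q_{i+1}$ can only come into existence through a move that would end the segment, so it stays empty and no moves into $Q_i$ occur.
\end{itemize}
So your counting actually gives $M_i' \le D_i$ for all sufficiently large $i$.

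The paper's route is shorter. Yours does not use the fact that items enter only through $Q_1$, which is exactly what the paper's injection relies on. The paper can afford that reliance because it builds only a $g$-stack-like heap and obtains deque-likeness afterwards via \cref{lem:stack_implies_deque}.
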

\begin{proof}
	First note that in order for an item to be moved from $Q_{i+1}$ to $Q_i$, it must, at some previous point, have been moved from $Q_i$ to $Q_{i+1}$. Thus, we only need to bound the $\FixTooBig$ cost and the $\FixTooSmall$ cost can be charged to it.

	\cref{lem:fix-cost} provides us with a charging scheme. For each $i$, either it is not sufficiently large for \cref{lem:fix-cost} to hold, in which case $w_i = \O(1)$ and the work on $Q_i$ per $\FixTooBig$ is also $\O(1)$ and we can directly charge it to the current $\PushLeft$. Or $i$ is sufficiently large. Then \cref{lem:fix-cost} guarantees that at any given time $x$, there were at least as many $\PushLeft$s to the heap while $Q_i$ was nonempty as there were item moves from $Q_i$ to $Q_{i+1}$. Thus, for any item move, we can charge $\O(1)$ an arbitrary such $\PushLeft$ that has not been charged yet. This pays for the item move and for the $\FixTooBig$. At the same time, each $\PushLeft$ is charged at most $t$ times where $t$ is the number of quartermasters at the time this $\PushLeft$ happens.
\end{proof}

Now we can finally prove \cref{lem:inductive}:

\begin{proof}[Proof of \cref{lem:inductive}]
	Having the $g$-stack-like property follows from \cref{cl:fstar,cl:main-read-only,cl:main-decrease,cl:main-pushleft,cl:main-delete,cl:total-fix-cost}. Having the $g$-deque-like property follows from \cref{lem:stack_implies_deque}.
\end{proof}

\section{Hierarchy of beyond-worst-case properties of heaps}
\label{s:weakstrong}

This section investigates the relationships between various working-set-like properties of heaps, as shown in \cref{fig:diagram}.
In \cref{sec:working_set_defs}, we define various working-set properties and show that they are all asymptotically equivalent. In \cref{sec:stack_like}, we discuss the stack-like property and show that it is strictly stronger than the working-set property. In \cref{sec:size_property}, we for completeness define the size property and show that it is strictly weaker than the working-set property.

We start with some useful definitions.

\paragraph{Framing}
To accommodate a wide range of heaps, including those that do not support \DecreaseKey{} or whose $\Insert$ cost is $\omega(1)$, our beyond-worst-case definitions only talk about $\ExtractMin$. This poses a definitional issue. Since all our definitions are amortized and we do not constrain this amortization, it is technically perfectly correct to call any kind of heap stack-like, since all heaps have an amortized $\ExtractMin$ cost of $0$ under the right amortization (that charges enough cost to $\Insert$s).

Thus, we instead frame all our equivalence theorems as ``any heap with $\ExtractMin$ cost $X$ can be reanalysed to have amortized $\ExtractMin$ cost $Y$ while incurring only $\O(1)$ additional amortized cost to all other operations.''

\paragraph{Sequence of requests}

Imagine we performed some operations on a heap and now we want to analyze their time complexity. The following definition strips away details unimportant for the analysis -- for example, whether an item was removed using $\Delete$ or $\ExtractMin$, whether the heap supports $\FindMin$, or whether an item was modified using $\DecreaseKey$ or $\IncreaseKey$.

\begin{definition}
\label{def:requests}
\noindent A \emph{sequence of requests} is a finite sequence, with three types of events:
\begin{compactenum}[a)]
\item $\Insert$ a new item into $H$,
\item $\Touch$ an item in $H$,
\item $\Delete$ an item from $H$.
\end{compactenum}
Initially, we start with an empty set $H$ and then we perform these requests. Touch requests do not modify $H$ and are only important for the analysis of some beyond-worst-case properties. Inserts and deletes also count as touches.

Each request has its associated time $t \in \{1, \ldots, m\}$. For an item $x$, we use $t_x$ to denote the time it was inserted and $t'_x > t_x$ the time it was deleted (possibly never, then $t'_x = \infty$).

	We use $H(t)$ to denote the state of $H$ at time $t$. That is, $H(t)$ is the set of items inserted, but not yet deleted, at $t$ (right \emph{before} the $t$-th operation, i.e., $x \notin H(t_x)$).
\end{definition}

Throughout this section we will use the heap formalism interchangeably with the formalism of \cref{def:requests}.

\subsection{Working-set properties}
\label{sec:working_set_defs}

Here, we discuss different possible definitions of the working-set property for heaps, and show that all of them are equivalent, up to constants and in an amortized sense. Thus, one can refer to all these definitions as ``the working-set property''. 

First, we list the working-set definitions ordered by strength and after each definition show that it is at least as powerful as the previous one. Then, in \cref{s:ws-equivalence}, we show that the weakest and strongest definition are equivalent, in the amortized sense and up to constant factors. Together, this implies the equivalence of all definitions.

\paragraph{Working-set property}
First, we define perhaps the weakest meaningful property that we call simply the \emph{working-set property}.

\begin{definition}[Working-set property]
\label{def:working_set}
    We say a heap has the \emph{(regular) working-set property}, if for every sequence of requests, the amortized complexity of $\ExtractMin$ is $\O(\log (t'_x - t_x))$.
\end{definition}

We note that this definition is perhaps the most similar to the definition of the working set in binary search trees. There one common definition of the working set size of $x$ is the number of operations since the last access of $x$. In our heap case, we instead use the weaker definition with time since \emph{insertion}. Otherwise, we would get a too strong property, at least with heaps that support fast $\DecreaseKey$:

\begin{remark}[Time since last touch is too strong \cite{elmasry_farzan_iacono}]
	Consider the following strengthened variant of the working set: In \cref{def:working_set}, instead of $t_x$, use the last time that $x$ was touched (i.e., inserted, decreased, increased etc.). For heaps that also support the \FindMin{} and $\DecreaseKey$ operation in $\O(1)$ time, we could then extract all items in constant time by first using \FindMin{} to identify them, then calling \DecreaseKey{} on them in $\O(1)$ time, and only then extracting them. 
\end{remark}

\paragraph{Touch-based working-set properties}
Next, we define three slightly stronger variants of the working-set property.
\begin{definition}[Insert-only / delete-only / touched-items working-set property]
\label{def:insert_delete_working_set}
    We say a heap has the \emph{insert-only working-set property}, if for every sequence of requests, the amortized complexity of $\ExtractMin$ is $\O(\log i_x)$ where $i_x$ is the number of insert operations that happened between $t_x$ and $t'_x$, inclusive. 

    The \emph{delete-only working-set property} is defined analogously, with $i_x$ replaced by $\delta_x$ that counts the number of delete operations.  

	Finally, the \emph{touched-items working-set property} is defined analogously, with $i_x$ replaced by $\tau_x$, the number of distinct items that were touched (i.e. inserted, deleted, or otherwise modified) by any operation between times $t_x$ and $t'_x$. 
\end{definition}

\begin{remark}
	\label{rem:touch_better_than_regular}
	\label{rem:insert_better_than_touch}
	All three properties of \cref{def:insert_delete_working_set} count a subset of touches between $t_x$ and $t'_x$, and are therefore at least as powerful as the regular working-set property. Furthermore, insert-only/delete-only working-set property are both at least as powerful as the touched-items working-set property, since $\tau_x$ is at least the number of inserted items and at least the number of deleted items.
\end{remark}

\paragraph{Strong working-set property}
The next property comes from the work of \citet{iacono} that initiated the study of working-set properties of heaps.

\begin{definition}[Stack of $x$, strong working set, strong working-set property]
\label{def:stack}
\label{def:strong_working_set}
	We define a \emph{stack of an item $x$ at time $t > t_x$}, or, $S_{x,t}$, to be the set of items that were inserted in the time interval $[t_x, t)$ and that are present in the heap at time $t$. Formally, $S_{x, t} = H(t) \setminus H(t_x)$. Note that $x$ is always in its stack. We use $S_x$ to denote the stack of $x$ at the time of its extraction, i.e., $S_x = S_{x, t'_x}$. We use $s_{x,t} = |S_{x,t}|$ and $s_x = |S_x|$.  

    We define the \emph{strong working set of an item $x$}, or $W_x$, to be its largest stack, that is $S_{x,t^*}$ for $t^* = \argmax_{t_x < t \le t_x'} |S_{x,t}|$.\footnote{If there are multiple such $t^*$, we pick an arbitrary (but fixed) one.} We use $w_x = |W_x|$.

We say a heap has \emph{the strong working-set property}, if for every sequence of requests, the amortized complexity of $\ExtractMin$ is $\O(\log w_x)$.
\end{definition}

\begin{remark}
	\label{rem:strong_better_than_insert}
	The strong working-set property is at least as strong as the insert-only property. This is because $W_x$ consists of a subset of items inserted between $t_x$ and $t'_x$, while the insert-only working-set property counts all such items.
\end{remark}

\paragraph{Strict working-set property}

Next, we define the even stronger variant of the strong working-set property. This strengthening is only relevant if the heap ends up non-empty.

\begin{definition}[Strict stack, strict working set, strict working-set property]
    \label{def:strict_working_set}
    We define the \emph{strict stack} $\Sigma_{x,t}$ and $\Sigma_x$ similarly to the (normal) stack $S_{x,t}$ and $S_x$, except that items that are never extracted from the heap are excluded from the stack. We analogously define $\sigma_{x,t} = |\Sigma_{x,t}|$ and $\sigma_x = |\Sigma_x|$.  

We define the \emph{strict working set of $x$}, or $\Strictw_x$, as its largest strict stack $\Sigma_{x,t^*}$ for $t^* = \argmax_{t_x < t \le t_x'} |\Sigma_{x,t}|$. We use $\strictw_x = |\Strictw_x|$.

We say a heap has \emph{the strict working-set property}, if for every sequence of requests, the amortized complexity of $\ExtractMin$ is $\O(\log \strictw_x)$.
\end{definition}

\begin{remark}
	\label{rem:strict_better_than_strong}
	The strict working-set property is at least as strong as the strong working-set property, since $\Sigma_{x,t} \subseteq S_{x,t}$ and thus $\strictw_x \le w_x$.
\end{remark}

\vasek{it would be helpful to add a picture with an example input that defines weak/normal/strict/stack-like on one example}

\subsubsection{Asymptotic equivalence of all working-set properties}
\label{s:ws-equivalence}

\cref{rem:touch_better_than_regular,rem:insert_better_than_touch,rem:strong_better_than_insert,rem:strict_better_than_strong} show that there is a hierarchy between the introduced working-set properties: regular is no stronger than touched-items, which is no stronger than insert-only, which is no stronger than strong, which is no stronger than strict.

In this section, we show that the regular working-set property is in fact asymptotically no weaker than the strict working-set property (\cref{thm:weak_strong_equivalence}) and the delete-only working-set property (\cref{thm:delete_implies_working_set}). This proves that all the introduced working-set properties are asymptotically equivalent, which finishes the proof of \cref{thm:all_same}.

\begin{theorem}
\label{thm:weak_strong_equivalence}
	Any heap with the working-set property can be shown to also have the strict working-set property, in the amortized sense, with only an additive $\O(1)$ amortized overhead to all operations.
\end{theorem}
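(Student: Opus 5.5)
The plan is to compare the two guarantees directly through a global charging argument. We never modify the heap; we only re-analyse its amortized costs. Fix a sequence of $m$ requests. Let $L_x = t'_x - t_x$ for every extracted item $x$, and write $\strictw_x$ as in \cref{def:strict_working_set}. The working-set property bounds the total cost by $\sum_x \O(\log L_x)$ plus the amortized cost of the other operations. It therefore suffices to prove
\[
\sum_{x \text{ extracted}} \log\frac{L_x}{\strictw_x} \;=\; \O(m),
\]
where terms with $L_x \le \strictw_x$ are taken to be $0$. Once this holds, each $\ExtractMin$ can be charged $\O(\log \strictw_x)$ and each request an extra $\O(1)$, which is exactly the claimed additive $\O(1)$ overhead. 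Splitting $\log L_x = \log \strictw_x + \log(L_x/\strictw_x)$ is the only place where the strict working set enters explicitly.

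\textbf{Step 1: a rank bound.} I would first prove the structural fact that drives everything. Fix a time $t$ and consider the items that are alive at $t$ and will later be extracted. Order them by insertion time, most recent first. The $r$-th such item $x$ satisfies $\sigma_{x,t} \ge r$, because all $r$ of these items were inserted in $[t_x,t)$, are present at $t$, and are eventually extracted. Hence $\strictw_x \ge r$. Consequently, at any fixed time at most $W$ alive, eventually-extracted items have $\strictw_x \le W$. There is a symmetric statement ordering by deletion time: the $r$-th item to be deleted after $t$ has $t'_x - t \ge r$. This is the point where strictness matters, since items that are never extracted do not inflate the count but also do not appear in the sum.

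\textbf{Step 2: dyadic charging.} Group the extracted items by scale. Item $x$ is at scale $(j,i)$ if $L_x \approx 2^j$ and $\strictw_x \approx 2^{j-i}$, so it contributes about $i$ to the sum. Such an item is alive across a whole dyadic block boundary of granularity $2^{j-1}$. By Step~1, at each boundary at most $2^{j-i}$ such items are alive, so there are at most $\O(m 2^{-i})$ items at scale $(j,i)$. Summing $i \cdot m 2^{-i}$ over $i$ gives $\O(m)$ for each fixed $j$.

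\textbf{Main obstacle.} The hard part is summing over $j$ without losing a $\log m$ (or $\log\log m$) factor. A naive sum over scales, or spreading each item's charge $1/(t-t_x+1)$ over its lifetime, yields a harmonic sum over ranks at each time. I would attack this by applying the Step~1 bound simultaneously across all scales. Each item belongs to exactly one $j$, so I would charge each item only at times $t$ where both $t - t_x$ and $t'_x - t$ are at least $\strictw_x$. I would then use the two orderings of Step~1 (by insertion and by deletion) together, so that at each time only $\O(1)$ total weight is charged across all scales. If that fails, the fallback is to route the argument through the insert-only or delete-only property as an intermediate step (\cref{rem:strong_better_than_insert}). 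In that case I would compare $i_x$ or $\delta_x$ with $\strictw_x$, charging the surplus inserts and deletes rather than arbitrary touches, since each insert or delete creates or destroys one item that is counted in some stack.
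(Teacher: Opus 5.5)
Your proposal has a genuine gap, located at what you call the main obstacle. Step~2 gives $\O(m)$ only for each fixed length scale $j$, hence $\O(m\log m)$ overall, and the final paragraph lists strategies rather than an argument. Worse, the target you reduced to,
\[
\sum_{x} \max\Bigl(0,\log\frac{L_x}{\strictw_x}\Bigr) = \O(m),
\]
is false, so no cleverer summation over scales can close the gap. Here is a family of sequences showing this.
\begin{itemize}
\item Let $P(0)$ insert and delete one item. Let $P(k)$ insert fresh items $b_1,\dots,b_{c_k}$, run $P(k-1)$ twice, and then delete $b_{c_k},\dots,b_1$. Using deletion times as keys makes every deletion an \ExtractMin{}.
\item Let $P(k)$ have $N_k = c_k + 2N_{k-1}$ items, and set $c_k=\lfloor N_{k-1}/K\rfloor$. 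A level-$k$ batch item lives at least $4N_{k-1}$ steps.
\item At every moment of its life, the newer live items are later batch items plus at most $1+\sum_{i<k}c_i$ inner items. Since $N_k\ge 2N_{k-1}$, this gives $\strictw_x \le 1+\sum_{i\le k}c_i \le 1+2N_{k-1}/K$. Hence $L_x/\strictw_x \ge K$ whenever $c_k\ge 1$.
\item In $P(2K)$, at least a fifth of the $m/2=\Theta(4^K)$ items are such batch items, so the sum is $\Omega(m\log K)=\Omega(m\log\log m)$.
\end{itemize}

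The theorem only asks for $\O(\log\strictw_x)$, so you may lose a constant factor in front of $\log\strictw_x$. Insisting on the exact split $\log L_x=\log\strictw_x+\log(L_x/\strictw_x)$ is what forces your multi-scale bookkeeping. The paper's proof uses exactly your Step~1, but with summable rank weights and a polynomial threshold, and no dyadic decomposition.
\begin{itemize}
\item Call $x$ happy if $L_x\le\strictw_x^3$. Then $\log L_x\le 3\log\strictw_x$ and nothing needs to be paid.
\item Otherwise, let every operation at time $t$ send potential $1/r^2$ to the $r$-th most recently inserted item that is present at $t$ and eventually extracted.
\item Each operation gives away at most $\pi^2/6$. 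Potential only flows from an operation to items deleted at or after it, so the amortization is legitimate.
\item By Step~1, $x$ receives at least $1/\sigma_{x,t}^2\ge 1/\strictw_x^2$ from every $t\in(t_x,t'_x]$. For unhappy $x$ the total is at least $L_x/\strictw_x^2> L_x^{1/3}=\Omega(\log L_x)$.
\end{itemize}
Your fallback does not help either. The remark you cite derives the insert-only bound from the strong one, which is the wrong direction. Comparing $i_x$ or $\delta_x$ with $\strictw_x$ also hits the same obstruction, since in the example above both $i_x/\strictw_x$ and $\delta_x/\strictw_x$ are $\Omega(K)$.
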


\begin{proof}
	Fix any sequence $\rho$ of requests \Insert, \Touch, \Delete, as defined in \cref{def:requests}. Also, fix a heap with the working-set property. That is, the respective amortized cost of each $\Delete$ is $\O(\log (t'_x - t_x))$. Our goal is to set up an amortization argument under which each $\Delete$ cost is $\O(\log \strictw_x)$ and the cost of each operation only increases by $\O(1)$. 

    We classify each inserted item $x$ as either \emph{happy} or \emph{unhappy}; it is happy if $t'_x - t_x \le \strictw_x^3$. We note that for happy items, we have $\log(t'_x - t_x) \le 3 \log \strictw_x$. Hence, we only need to show how we can pay for the unhappy items. 

    This is done as follows. At the beginning, each operation in $\rho$ receives a constant potential $\O(1)$. We will now show how to redistribute this potential so that at the end, for every unhappy $x$, its $\Delete$ receives $\Omega(\log(t'_x - t_x))$ (actually even $(t'_x - t_x)^{\Omega(1)}$) potential from some past operations.\footnote{Every $\Delete$ only receives potential from the past, which is important, otherwise our potential could get negative and we would technically fall short of proving a true amortized cost guarantee.}

    More concretely, for every operation of $\rho$ that happened at time $t$, we consider the items whose $\Delete$ cost is affected by this operation: all items in the heap at time $t$ that will eventually be deleted. Denote them $x_1, x_2, \dots$, ordered in decreasing order by their insertion time. The operation sends a potential of $1/i^2$ to each item $x_i$. \vasek{add picture}

    On one hand, each operation overall sends away at most $1/1^2 + 1/2^2 + \dots = \O(1)$ potential. 
    On the other hand, consider any unhappy item $x$ inserted at time $t_x$ and deleted at time $t'_x$. We want to show that every operation at time $t_x < t \le t'_x$ sent enough potential to $x$. Fix one such $t$.
	It can be seen that $x$ received a potential of $1/\sigma_{x,t}^2$ from operation $t$: by definition, $\sigma_{x,t}$ counts the items in the heap at time $t$ that will eventually be deleted and whose insertion time is greater or equal to that of $x$. Those are exactly the items that will get potentials $1/1^2, 1/2^2, \ldots, 1/\sigma_{x,t}^2$, with $x$ getting the last term.

	Using that the strict working set of $x$ is $\strictw_x = \max_{t_x < t' \le t'_x} \sigma_{x,t'} \ge \sigma_{x,t}$, we conclude that $x$ gets at least $1/\strictw_x^2$ potential from every $t$. Thus, the total potential sent to $x$ is at least $(t'_x - t_x)/\strictw_x^2$. Using that $x$ is unhappy, we conclude that this expression is at least $(t'_x - t_x) / (\sqrt[3]{t'_x - t_x})^2 = \sqrt[3]{t'_x - t_x} = \Omega(\log (t'_x - t_x))$ which concludes the proof. 
\end{proof}

Next we prove the remaining equivalence, between the delete-only working-set property and the regular working-set property.

We note that the proof of \cref{thm:delete_implies_working_set} is very similar to that of \cref{thm:weak_strong_equivalence}. The difference is that previously, we were determining which items should get how much potential by their insertion order, while now we will do so using their deletion order.

\begin{theorem}
    \label{thm:delete_implies_working_set}
	Any heap with the working-set property can be shown to also have the delete-only working-set property, in the amortized sense, with only an additive $\O(1)$ amortized overhead to all operations.
\end{theorem}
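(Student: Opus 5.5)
We follow the proof of \cref{thm:weak_strong_equivalence}, ordering the recipients of potential by deletion time instead of insertion time. Fix a sequence $\rho$ of requests and a heap whose $\Delete$ of $x$ costs $\O(\log(t'_x - t_x))$ amortized. We call $x$ \emph{happy} if $t'_x - t_x \le \delta_x^3$, where $\delta_x$ is the number of deletes in $[t_x, t'_x]$. For happy items, $\log(t'_x - t_x) \le 3\log \delta_x$, so nothing needs to be done. For unhappy items, we redistribute an $\O(1)$ potential given to each operation of $\rho$ so that the $\Delete$ of every unhappy $x$ receives $(t'_x - t_x)^{\Omega(1)}$ potential, all of it coming from operations no later than $t'_x$.

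The redistribution is as follows. Consider an operation at time $t$, and let $y_1, y_2, \dots$ be the items present in the heap at time $t$ that are eventually deleted, ordered by increasing deletion time $t'_{y}$. The operation sends $1/j^2$ to $y_j$. Each operation therefore spends at most $\sum_j 1/j^2 = \O(1)$. Now fix an unhappy $x$ and a time $t \in (t_x, t'_x]$. The items preceding $x$ in this order are deleted at times in $[t, t'_x)$. Each such deletion is a delete operation lying in $[t_x, t'_x]$, so $x$ has rank $j \le \delta_x + 1 \le 2\delta_x$ and receives at least $1/(4\delta_x^2)$ from operation $t$.

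Summing over the $t'_x - t_x$ operations with $t \in (t_x, t'_x]$, $x$ receives at least $(t'_x - t_x)/(4\delta_x^2)$. Since $x$ is unhappy, $\delta_x < (t'_x - t_x)^{1/3}$, and this is at least $\tfrac14 (t'_x - t_x)^{1/3} = \Omega(\log(t'_x - t_x))$, which pays for the part of the $\Delete$ cost not covered by $\O(\log \delta_x)$. All potential sent to $x$ comes from operations at times $\le t'_x$, so the potential never goes negative and the argument is a valid amortization. Each operation is charged only an additive $\O(1)$.

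The main point needing care is the rank bound in the second paragraph, namely that every item ranked before $x$ at time $t$ has its deletion counted in $\delta_x$. This holds because such an item is present at time $t > t_x$ and is deleted strictly before $t'_x$, hence its deletion falls in $[t_x, t'_x]$. The deletions are at distinct times, so they are distinct delete operations. Ties and the $+1$ for $x$ itself only affect constants, and the degenerate case $\delta_x = 1$ is absorbed as before: the $\O(1)$ charged to each operation covers the constant term.
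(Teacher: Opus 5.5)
Your proposal is correct and follows the paper's proof essentially verbatim. It uses the same happiness threshold $t'_x - t_x \le \delta_x^3$, the same $1/j^2$ redistribution with recipients ordered by deletion time, and the same resulting $(t'_x - t_x)^{1/3}$ lower bound on the potential each unhappy item receives. Your rank bound $j \le \delta_x + 1$ is slightly loose: $\delta_x$ already counts the deletion of $x$ itself, so $j \le \delta_x$ as in the paper, but this only affects constants.
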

\begin{proof}
	The proof closely follows that of \cref{thm:weak_strong_equivalence}. This time, we want to show an amortized $\Delete$ cost of $\O(\log \delta_x)$. We consider $x$ \emph{happy} if $t'_x - t_x \le \delta_x^3$. We carry on with the proof until the description of how the operation at time $t$ redistributes its potential.

	We consider all items in the heap at time $t$ that will eventually be deleted. Denote them $x_1, x_2, \ldots$, ordered this time in increasing order by their deletion time. Then we send a potential $1/i^2$ to each $x_i$.

Now, consider any unhappy item $x$, and consider the operation at time $t_x < t \le t_x'$. By the definition of $\delta_x$, there are at most $\delta_x$ deletions between $t$ and $t_x'$, and thus $x$ receives a potential of $\ge 1/\delta_x^2$ from the operation at $t$. By a similar argument as in the proof of \cref{thm:weak_strong_equivalence}, we conclude that $x$ gets at least $(t'_x - t_x) / \sqrt[3]{t'_x - t_x}^2 = \Omega(\log(t'_x - t_x))$ potential, which finishes the proof.
\end{proof}

We are now ready to prove the theorem from \cref{s:intro}:

\allpropertiessame*
\begin{proof}
	\cref{rem:touch_better_than_regular,rem:insert_better_than_touch,rem:strong_better_than_insert,rem:strict_better_than_strong} show the implications $1 \leftarrow 2$, $2\leftarrow 3$, $2\leftarrow 4$, $3\leftarrow 5$, $5\leftarrow 6$. \cref{thm:weak_strong_equivalence} shows $6\leftarrow 1$ and \cref{thm:delete_implies_working_set} shows $4 \leftarrow 1$.
\end{proof}

\subsection{The stack-like property}
\label{sec:stack_like}

In this section, we define the stack-like property and prove that it is strictly stronger than the working-set property.

Recall that the \emph{stack size} $s_x$ of $x$ is defined as the number of items inserted at or after $t_x$ that are still present during the deletion of $x$.

\begin{definition}[Stack-like property]
    \label{def:stack-like}
            We say a heap has the \emph{stack-like} property, if for every sequence of requests, the amortized complexity of $\ExtractMin$ is $\O(\log s_x)$.
\end{definition}

We are ready to prove \cref{lem:stacklike_stronger}:

\stacklikestronger*

\begin{proof}
	\begin{enumerate}
		\item Thanks to \cref{thm:all_same}, it is enough to show that the stack-like property is always at least as good as the insert-only working-set property. This holds, since $s_x$ counts the subset of all items inserted between $t_x$ and $t'_x$ and the insert-only property counts all of them.
		\item 
This can be seen from the input 
\begin{align}
\label{eq:stack-good}
I(n), I(n-1), \dots, I(1), \underbrace{E, E, \dots, E}_{\text{$n$ times}} 
\end{align}
where a heap with the working-set property pays the overall cost of $\O(n \log n)$, while any heap with the stack-like property pays only $\O(n)$. \qedhere
	\end{enumerate}
\end{proof}

\subsection{The size property}
\label{sec:size_property}

For completeness, we turn our attention to what we call the \emph{size property}, a relatively weak beyond-worst-case property that is satisfied by many (if not all) well-known heap implementations.
It says that the cost of \Deletemin{} should be logarithmic in the current size of the heap (rather than, say, in the total number of operations, or the largest-ever size, etc.).

\begin{definition}[Size property]
    \label{def:size}
	A heap has the \emph{size property}, if for every sequence of requests, the amortized complexity of $\ExtractMin$ is $\O(\log |H(t'_x)|)$.
\end{definition}

\paragraph{The working-set property implies the size property}
First, we prove that the working-set property implies the size property. The idea of the proof loosely follows that of \cref{thm:weak_strong_equivalence}, although the potential is redistributed in a different way.

\begin{theorem}
    \label{thm:working_set_implies_size}
	Any heap with the working-set property can be shown to also have the size property, in the amortized sense, with only an additive $\O(1)$ overhead to all operations.
\end{theorem}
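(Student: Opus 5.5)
We will show that the working-set property implies the size property. The plan is to redistribute constant potential from each operation to the items present in the heap at that time, in the same spirit as the proof of \cref{thm:weak_strong_equivalence}. The difference is that the redistribution is now uniform-like, because the target bound $\log|H(t'_x)|$ only depends on the heap size.

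Fix a request sequence $\rho$ and a heap whose amortized $\Delete$ cost is $\O(\log(t'_x - t_x))$. Call an item $x$ \emph{happy} if $t'_x - t_x \le |H(t'_x)|^3$. For happy items, $\log(t'_x-t_x) \le 3\log|H(t'_x)|$, so nothing needs to be done. For unhappy items we must supply $\Omega(\log(t'_x-t_x))$ extra potential, delivered by the time of deletion, while each operation gives away only $\O(1)$.

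\textbf{Redistribution.} The first idea is that every operation at time $t$ sends potential $1/|H(t)|$ to each item in $H(t)$; this totals $1$ per operation. An item $x$ then receives $\sum_{t_x<t\le t'_x} 1/|H(t)|$. The obstacle is that $|H(t)|$ may be much larger than $|H(t'_x)|$ during most of $x$'s lifetime, so this sum can be small.

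I would handle this with the following case split. Let $N = |H(t'_x)|$ and $T = t'_x - t_x > N^3$. Consider the last time $t_0 \in (t_x, t'_x]$ at which $|H(t_0)| \ge 2N$, if one exists.
\begin{itemize}
  \item If no such $t_0$ exists, then $|H(t)| < 2N$ throughout, and $x$ receives at least $T/(2N) \ge T^{2/3}/2 = \Omega(\log T)$.
  \item If $t_0$ exists, then between $t_0$ and $t'_x$ the heap shrinks from at least $2N$ to $N$, so there are at least $N$ deletions in that window.
\end{itemize}

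The second case is the main difficulty: $x$ survived a long period during which the heap was large. To handle it, I would combine two sources of potential.
\begin{enumerate}
  \item Use a weighted rule like the $1/i^2$ scheme ordered by deletion time, as in \cref{thm:delete_implies_working_set}. This guarantees $x$ receives at least $1/\delta^2$ per operation, where $\delta$ is the number of later deletions before $t'_x$.
  \item Use a separate $\O(1)$ budget in which each deletion deposits potential onto the remaining items, ordered by deletion time.
\end{enumerate}
The main obstacle is choosing a single scheme that covers both regimes. If I get stuck, I would fall back on a direct charging scheme. Since $\log T \le \log(\text{number of operations})$, we can bound $\log(t'_x-t_x)$ by $\O(\log N)$ plus a term that is paid for by the $\Omega(N)$ deletions occurring between $t_0$ and $t'_x$, each contributing $\O(1/N)$ to $x$, giving $x$ an extra $\Omega(1)$ per halving of the heap size. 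Summing these contributions over the geometric phases of heap size should yield $\Omega(\log T)$.
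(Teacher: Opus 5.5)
Your proposal correctly identifies the obstacle: the heap may be much larger than $N=|H(t'_x)|$ during most of $x$'s lifetime. However, it never resolves it. You end with an unspecified ``combination'' of schemes and a fallback whose final step is wrong. Summing $\Omega(1)$ per halving of the heap size gives only $\Omega(\log(M/N))$, where $M=\max_{t_x<t\le t'_x}|H(t)|$, and $M/N$ is unrelated to $T=t'_x-t_x$. For instance, suppose the heap sits at size $4N$ for $T\gg N^3$ steps and then shrinks to $N$. You are then in your second case with only $\O(1)$ halvings, so the halving argument cannot deliver $\Omega(\log T)$. Your split at the threshold $2N$ is therefore the wrong decomposition, and as written you have no valid bound in case 2.

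The missing idea in the paper's proof is to make the amount an operation sends to an item depend on that item's \emph{deletion-time} heap size, not on $|H(t)|$. The operation at time $t$ sends $1/|H(t'_y)|^3$ to every $y\in H(t)$ that is eventually deleted. The outflow is $\O(1)$: for each $i$, at most $i$ such $y$ have $|H(t'_y)|=i$, since all of them lie in $H(t'_y)$ of the earliest-deleted one, giving $\sum_i i\cdot i^{-3}=\O(1)$. An unhappy $x$, defined by $T>N^4$, then collects $T/N^3\ge T^{1/4}$.

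Alternatively, your own uniform $1/|H(t)|$ scheme can be rescued by using two bounds simultaneously instead of splitting into cases. First, it gives $x$ at least $T/M$. Second, let $t_j$ be the last time in $x$'s lifetime with $|H(t_j)|\ge 2^jN$. Each window $(t_{j+1},t_j]$ contains about $2^jN$ operations at heap size below $2^{j+1}N$, so it gives $x$ roughly $1/4$. These windows are disjoint, for a total of $\Omega(\log(M/N))-\O(1)$. If $M\le\sqrt T$, the first bound gives at least $\sqrt T$; otherwise $N\le T^{1/3}$ makes the second bound $\Omega(\log T)$. Either way, a concrete argument of this kind is what your proposal lacks.
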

\begin{proof}
Fix any sequence $\rho$ of requests \Insert, \Touch, \Delete. Also, fix a heap with the working-set property. Our goal is to set up an amortization argument under which each $\Delete$ cost is $\O(\log |H(t'_x)|)$ and other costs only increase by an additive $\O(1)$.

    We classify each inserted item $x$ as either \emph{happy} or \emph{unhappy}; it is happy if $t'_x - t_x \le |H(t'_x)|^4$. As in \cref{thm:weak_strong_equivalence}, happy items are paid for, and we only need to show how we can pay for the unhappy items. That is, we want to redistribute the $\O(1)$ potential given to each operation in $\rho$ such that every unhappy $x$ ends up with $\Omega(\log(t'_x - t_x))$ potential.

    Consider the operation that happened at time $t$, and consider the set of items whose $\Delete$ cost is affected by this operation: all items in the heap at time $t$ that will eventually be deleted. To each such item $y$, we send a potential of $1/|H(t'_y)|^3$. Note that for every $i$, there are at most $i$ items $y$ that receive exactly $1/i^3$ potential from operation $t$. That is because we can, among all such items, pick the $y$ extracted the earliest and note that since no other $z$ with $|H(t'_z)| = i$ has been extracted yet at $t'_y$, we must have $z \in H(t'_y)$ for all of them. This means that in total, the operation at time $t$ sends out at most $1 \cdot 1/1^3 + 2 \cdot 1/2^3 + 3 \cdot 1/3^3 + \ldots = \O(1)$ potential.

	Finally, we need to verify that each $\Delete$ of an unhappy $x$ has received enough potential. An item $x$ received $(t'_x - t_x) / |H(t'_x)|^3$ potential. By the assumption of unhappiness, this is at least $(t'_x - t_x) / \sqrt[4]{(t'_x - t_x)^3} = \Omega(\log(t'_x - t_x))$, which concludes the proof.
\end{proof}

\paragraph{Working set is asymptotically stronger than the size property.}
To see this, consider a sequence 
\begin{align}
\label{eq:sponge}
I(1), I(2), \dots, I(n_1), \underbrace{I(0), E, I(0), E, \dots }_{\text{$n_2$ times}}
\end{align}

In this sequence, a heap with the size property guarantees that the total time of all operations is $\O(n_1 + n_2 \cdot \log n_1)$. On the other hand, a heap with the working-set property finishes in time $\O(n_1 + n_2)$.

\section{Simple heaps with a working-set bound}
\label{sec:simple-heaps}

Here we review some simple data structures that have a working-set bound, provided there are no $\DecreaseKey$ operations.

In designing a heap implementation with a working-set bound, there are two item orders that are relevant: key order and insertion order.  There is a standard data structure that naturally supports two node orders: a binary search tree.  In such a tree, the nodes are totally ordered in symmetric order and partially ordered in heap order (parent less than child).  It is natural to use such a tree to represent a heap.  One way is to form a binary tree in which the nodes are the heap items, in symmetric order by decreasing insertion time: The leftmost node is the most recently inserted, the rightmost node is the least recently inserted.  If $x$ is the $k^{\rm{th}}$ node in increasing symmetric order, there are $k$ items in its working set that are currently in the heap.  To facilitate $\Deletemin$, we store in each node a pointer to a node of minimum key in its subtree.  If the tree is a balanced tree, such as a weak AVL~\cite{weak-avl} or red-black tree~\cite{redblack}, the worst-case time for a $\FindMin$ is $\OO(1)$, and that of an $\Insert$ or $\Deletemin$ is $\OO(\log n)$.

We want a working-set bound for $\Deletemin$.  There are at least two ways to obtain such a bound for this data structure.  One is to use a splay tree~\cite{splay} instead of a balanced tree.  The other is to modify the tree representation to support fast accesses to recently inserted items; that is, to use a finger search tree.  We describe both of these solutions.

\subsection{A splay heap has a working-set bound}

 We shall show that a heap implemented as a splay tree as described above has the insert-only working-set bound (and thus all listed working-set bounds), provided that it ends empty.  Our analysis closely resembles Sleator and Tarjan's proof that splay trees have a search-tree working-set bound.  We use the splay-tree access lemma~\cite{splay}.  This lemma states that if each node $x$ of a splay tree is assigned an arbitrary positive individual weight $w(x)$, the amortized time to access a node $x$ and move it to the root via a splay operation is $\OO(1+\log(W/w(x)))$, where $W$ is the sum of the individual weights of all the nodes.  The proof of this result defines the potential of a node to be $\log(W(x))$, where $W(x)$ is the sum of the weights of the nodes in the subtree rooted at $x$. The potential of a tree is the sum of the potentials of its nodes.

To use this result to obtain a heap working-set bound, we define the individual weight of a node $x$ in the tree to be $1/k^2$, where $k$ is the number of items inserted into the tree from the time $x$ was inserted until the present. With this assignment of weights, the individual weight of a node is a non-increasing function of time; node weights are in decreasing order by insertion time; when an item is deleted, its individual weight is the square of the reciprocal of its working-set size; and $W$, the sum of the individual node weights of the nodes in the tree, is always at most $\sum_{i=1}^\infty 1/i^2$, a fixed constant.

To do an $\Insert$, we make the new node the root of the tree, and we make the old root its right child.  This takes $\OO(1)$ time.  The new root has at most constant potential; the potentials of all other nodes decrease.  Thus adding the new node increases the potential of the tree by at most constant.  It follows that the amortized time for an insert is $\OO(1)$.

To do a $\Deletemin$, we find a node of minimum key, say $x$, by following the pointer to such a node from the root. If $x$ has two children, we find its predecessor in symmetric order and swap $x$ with its predecessor.  Now $x$ has at most one child.  We delete $x$ from the tree, replace it by its only child if it had one, and splay its old parent. The time for the $\Deletemin$ is at most a constant times the time for the splay. To bound the amortized time of the $\Deletemin$, we observe that swapping $x$ and its predecessor decreases the potential of the tree, since nodes in symmetric order are in decreasing order by weight.  To bound the amortized time of the splay, when we delete $x$ temporarily add its individual weight to that of its parent, the node being splayed.  This does not increase the potential of the tree.  With this change, the access lemma implies that the amortized time for the splay is $\OO(\log i_x)$, where $i_x$ is the insert-only working-set size of $x$.  At the end of the splay, we subtract the extra individual weight from the splayed node, which decreases the potential of the tree.

We conclude that a splay heap has an amortized working-set bound, provided that the heap ends empty.  The reason  the heap must end empty for our proof to work is that with our definition of the potential, a non-empty tree can have negative potential, which the analysis does not take into account.

\subsection{Finger heaps are stack-like}

We obtain a stronger bound by implementing a heap as a balanced tree, but changing the tree representation.  We make the tree into a finger tree, with a finger at the leftmost node in the tree.  We accomplish this by reversing the left child pointers along the left spine of the tree (the path from the root through left children to the smallest node in symmetric order).  Access is via the smallest node in symmetric order.  Each node not on the left spine has pointers to its left and right children; each node on the left spine has pointers to its parent and to its right child.  In addition, each node not on the left spine has a pointer to a node of minimum key in its subtree; each node on the left spine has a pointer to a node of minimum key among those larger than it in symmetric order, which are those nodes reachable from it by following pointers.

If the tree is a kind that supports rebalancing after an insertion or deletion in $\OO(1)$ amortized time, such as a red-black or weak AVL tree, the resulting heap implementation is stack-like: The amortized time for a $\Deletemin$ is logarithmic in the stack size of the deleted item, $\FindMin$ takes $\OO(1)$ time worst-case, and $\Insert$ takes $\OO(1)$ amortized time.  The bounds can be made worst-case by maintaining a regularity condition on the left spine, using the same idea as in a redundant digital numbering scheme~\cite{digital-numbering1,brodal,digital-numbering2}.

If we add a finger to the rightmost node in the tree, by modifying the right spine in the same way as the left spine, we can delete the $k^{\rm{th}}$ least recently inserted item or the $k^{\rm{th}}$ most recently inserted item in $\OO(1+\log k)$ time.  This gives us a heap that is worst-case \emph{deque-like}.

This is the simplest implementation of a deque-like (or stack-like, or queue-like) heap that we know.

\ifanonymous\else
\paragraph{Acknowledgments}

BH, RH, and VR were partially funded by the Ministry of Education and Science of Bulgaria's support for INSAIT as part of the Bulgarian National Roadmap for Research Infrastructure. 
BH and RH were partially funded through the European Research Council (ERC) under the European Union's Horizon Europe research and innovation program (ERC Advanced grant agreement 101268062 and ERC Starting grant agreement 949272).
VR was partially funded through the European Research Council (ERC) under the European Union's Horizon 2020 research and innovation program (ERC grant agreement 853109).
VR was partially supported by Charles Univ. project UNCE 24/SCI/008 and by the Czech Science Foundation (GAČR), grant no. 26-23599M.
RH was supported by the VILLUM Foundation grant 54451. 
Part of this work was done while RH was visiting BARC at the University of Copenhagen. RH would like to thank Rasmus Pagh for hosting him there. Part of this work was done while RH was visiting INSAIT.
Part of this work was done while VR was a postdoc at INSAIT.
RT's research at Princeton was partially supported by a gift from Microsoft.  Part of this work was done during RT's visits to INSAIT and to the Simons Institute for the Theory of Computing.
\fi

\printbibliography



\end{document}